\newcommand{\E}{\mathbb{E}}
\renewcommand{\Pr}{\mathbb{P}}
\newcommand{\bX}{\mathbf{X}}
\newcommand{\bY}{\mathbf{Y}}
\newcommand{\bU}{\mathbf{U}}
\newcommand{\es}{\enspace\enspace\:}
\newcommand{\Qfunc}{Q}
\newcommand{\Comp}{K}
\newcommand{\comp}{k}
\newcommand{\Compset}{\mathcal{K}}
\theoremstyle{thmstyleone}%
\newtheorem{thm}{Theorem}
\newtheorem{lemma}{Lemma}
\newtheorem{prop}{Proposition}
\theoremstyle{thmstyletwo}%
\newtheorem{assump}{Assumption}
\newtheorem{remark}{Remark}
\theoremstyle{thmstylethree}%
\newtheorem{example}{Example}
\newtheorem{defn}{Definition}
\begin{document}
\title{\bf \!\!\!\!\!Information Compression in Dynamic Games\!\!\!\!\!}
    
\author*[1]{\fnm{Dengwang} \sur{Tang}}
\email{dwtang@umich.edu}
    
\author[2]{\fnm{Vijay} \sur{Subramanian}}
\email{vgsubram@umich.edu}

\author[2]{\fnm{Demosthenis} \sur{Teneketzis}}
\email{teneket@umich.edu}
		
\affil[1]{\orgdiv{Ming Hsieh Department of Electrical and Computer Engineering}, \orgname{University of Southern California}, \orgaddress{\city{Los Angeles}, \state{CA}, \postcode{90089-2560}, \country{USA}}}

\affil[2]{\orgdiv{Electrical and Computer Engineering Division, Electrical Engineering and Computer Science Department}, \orgname{University of Michigan}, \orgaddress{\city{Ann Arbor}, \state{MI}, \postcode{48109}, \country{USA}}}

\abstract{
    One of the reasons why stochastic dynamic games with an underlying dynamic system are challenging is because strategic players have access to enormous amount of information which leads to the use of extremely complex strategies at equilibrium. One approach to resolve this challenge is to simplify players’ strategies by identifying appropriate compression of information maps so that the players can make decisions solely based on the compressed version of information, called the information state. Such maps allow players to implement their strategies efficiently. For finite dynamic games with asymmetric information, inspired by the notion of information state for single-agent control problems, we propose two notions of information states, namely  mutually sufficient information (MSI) and unilaterally sufficient information (USI). Both these information states 
    are obtained by applying information compression maps that are independent of the strategy profile. We show that Bayes-Nash Equilibria (BNE) and Sequential Equilibria (SE) exist when all players use MSI-based strategies. We prove that when all players employ USI-based strategies the resulting sets of BNE and SE payoff profiles are the same as the sets of BNE and SE payoff profiles resulting when all players use full information-based strategies. We prove that when all players use USI-based strategies the resulting set of weak Perfect Bayesian Equilibrium (wPBE) payoff profiles can be a proper subset of all wPBE payoff profiles.
    We identify MSI and USI in specific models of dynamic games in the literature. 
    We end by presenting an open problem: Do there exist strategy-dependent information compression maps that guarantee the existence of at least one equilibrium or maintain all equilibria that exist under perfect recall? We show, by a counterexample, that a well-known strategy-dependent information compression map used in the literature does not possess any of the properties of the strategy-independent compression maps that result in MSI or USI.
}


\keywords{Non-cooperative Games, Dynamic Games, Information State, Sequential Equilibrium, Markov Decision Process}

\pacs[JEL Classification]{C72, C73, D80}
\pacs[MSC Classification]{90C40, 91A10, 91A15, 91A25, 91A50}

\pacs[Acknowledgements]{The authors would like to thank Yi Ouyang, Hamidreza Tavafoghi, Ashutosh Nayyar, Tilman B\"{o}rgers, and David Miller for helpful discussions.}

\maketitle
\section{Introduction}\label{sec:suffinfo:intro}
The model of stochastic dynamic games has found application in many engineering and socioeconomic settings, such as transportation networks, power grid, spectrum markets, and online shopping platforms. In these {settings}, multiple agents/players make decisions over time on top of an ever-changing environment with players having different goals and asymmetric information. For example, in transportation networks, individual drivers make routing decisions based on information from online map services in order to reach their respective destinations as fast as possible. Their actions then collectively affect traffic conditions in the future. Another example involves online shopping platforms, where buyers leave reviews to inform potential future buyers, while sellers update prices and make listing decisions based on the feedback from buyers. In these systems, players' decisions are generally not only interdependent, but also affect the underlying environment as well as future decisions and payoffs of all players in complex ways.

Determining the set of equilibria, or even solving for one equilibrium, in a given stochastic dynamic game can be a challenging task. The main challenges include: (a) the presence of an underlying environment/system that can change over time based on the actions of all players; (b) incomplete and asymmetric information; (c) large number of players, states, and actions; and (d) growing amount of information over time which results in a massive strategy space. {As a result of the advances in} technology, stochastic dynamic games today are often played 
by players (e.g. big corporations) that have access to substantial computational resources along with a large amount of data for decision making. 
{Nevertheless, even these players are computationally constrained}, and they must make decisions in real-time, hence complicated strategies may not be feasible for them. Therefore, it is important to determine computationally efficient strategies for players to play at equilibria. Compression of players' information and then use of the strategies based on the compressed information is a well-heeled methodology that results in computationally efficient strategies. In this paper we address some of the above-mentioned challenges. We concentrate on the challenges associated with information compression, namely the existence of equilibria under information compression, and the preservation of all equilibrium payoff profiles under information compression. We leave as a topic of future investigation the discovery of efficient algorithms for the computation of equilibria based on strategies that use compressed information.

{Specifically, our} goal is to identify appropriate strategy-independent
\footnote{Strategy independent information compression maps are maps that are not parameterized by a strategy profile. Examples of strategy-independent information compression maps include those that use a fixed-subset of the game's history (e.g. the most recent observation) or some statistics based on the game's history (e.g. the number of times player $i$ takes a certain action). Strategy-dependent maps are parameterized by a strategy profile (see Section \ref{sec:openproblems}).} information compression maps in dynamic games so that the resulting compressed information has properties/features sufficient to satisfy the following requirements: (R1) existence of equilibria when all players use strategies based on the {compressed information}; (R2) equality of the set of all equilibrium payoff profiles that are achieved when all players use full information based-strategies with the set of all equilibrium payoff profiles that are achieved when all players use strategies based on the compressed information.

Inspired by the literature on single-agent decision/control problems, particularly the notion of information state, we develop notions of information state (compressed information) that satisfy requirements (R1) and (R2). Specifically, we introduce the notions of Mutually Sufficient Information (MSI) and Unilaterally Sufficient Information (USI). We show that MSI has properties/features sufficient to satisfy (R1), whereas USI has properties sufficient to satisfy (R2) under several different equilibrium concepts.

The remainder of the paper is organized as follows: In Section \ref{sec:litreview} we briefly review related literature in stochastic control and game theory. In Section \ref{sec:contributions} we list our contributions. In Section \ref{sec:notations} we introduce our notation. In Section \ref{sec:suffinfo:gamemodel} we formulate our game model. In Section \ref{sec:msi} and Section \ref{sec:usi} we introduce the notion of mutually sufficient information and unilaterally sufficient information respectively. We {present our main} results in Section \ref{sec:isbe}. We discuss {these} results in Section \ref{sec:siapplications}. 
{We discuss an open problem, primarily associated with strategy-dependent information compression, in Section \ref{sec:openproblems}.}
{We provide} supporting results in Appendix \ref{app:aux:infostate}. {We present} alternative characterizations of sequential equilibria in Appendix \ref{app:SE}. {We provide proofs of the results of Sections \ref{sec:twoinfostates} and \ref{sec:isbe}}
in Appendix \ref{app:proofsmain}. {We present the details of the discussions in Section \ref{sec:siapplications} and Section \ref{sec:openproblems} in Appendix \ref{app:siapplications}.} 

\subsection{Related Literature}\label{sec:litreview}
{We first present a brief literature survey on information compression in single-agent decision problems because it has inspired several of the key ideas presented in this paper.}

Single-agent {decision/control} problems are problems where one agent chooses actions over time on top of an ever-changing system to maximize {their} total reward. These problems have been extensively studied in the control theory \citep{kumar1986stochastic}, operations research \citep{powell2007approximate}, computer science \citep{russell2002artificial}, and mathematics \citep{bellman1966dynamic} literature. Models like Markov Decision Process (MDP) and Partially Observable Markov Decision Process (POMDP) have been analyzed and applied widely in real-world systems. It is well known that in an MDP, the agent can use a Markov strategy---making decisions based on the current {state}---without loss of optimality. 
A Markov strategy can be seen as a strategy based on compressed information: the full {information---state and action history}---is compressed into only the current state. Furthermore, in finite horizon problems, such optimal Markov strategies can be found through a sequential decomposition procedure. It is also well known that any POMDP can be transformed into an MDP with an appropriate belief acting as the underlying state~\citep[Chapter 6]{kumar1986stochastic}. As a result, the agent can use a belief-based strategy without loss of optimality. A belief-based strategy compresses the full information into the conditional belief of the current state. Critically, this information compression is strategy-independent \citep{aastrom1965optimal,smallwood1973optimal,sondik1978optimal,kumar1986stochastic}.
For general single-agent control problems, sufficient conditions that guarantee optimality of compression-based strategies have been proposed under {the names} of \emph{sufficient statistic} \citep{shiryaev1964markov,striebel1965sufficient,whittle1969sequential,hinderer1970sufficient,striebel1975lecture} and \emph{information state} \citep{kumar1986stochastic,mahajan2016decentralized,subramanian2020approximate}. In these works, the authors transform single-agent control problems with partial observations into equivalent problems with complete observations with the sufficient statistic/information state acting as the underlying state.

{Multi-agent dynamic decision problems are either teams where all agents have the same objective, or games where agents have different objectives and are strategic. Information compression in dynamic teams has been investigated in  \cite{varaiya1978delayed,nayyar2010optimal,nayyar2013decentralized,mahajan2016decentralized,tavafoghi2021unified,subramanian2020approximate,kao2022common}, and many other works (see \cite{tavafoghi2021unified} and \cite{subramanian2020approximate} for a list of references).} 
Dynamic games can be divided into two categories: those with a static underlying environment (e.g. repeated games), and those with an underlying dynamic system. Over the years, economics researchers have studied repeated games extensively (e.g. see \cite[Chapter 7]{myerson2013game}). As our focus is on dynamic games with an underlying dynamic system, we will not discuss {the literature on repeated games}. Among models for dynamic games with an underlying dynamic system, the model of zero-sum games, as a particular class which possesses special properties, has been analyzed in  \cite{shapley1953stochastic,mertens1981stochastic,rosenberg1998duality} and many others (see \cite{ouyang2024approach} for a list of references). 
{Non-zero-sum} games with an underlying dynamic system and symmetric information have also been studied extensively \citep{bacsar1998dynamic,filar2012competitive}. For such dynamic games with perfect information, the authors of \cite{maskin2001markov} introduce the concept of Markov Perfect Equilibrium (MPE), where each player compresses their information into a Markov state. Dynamic games with asymmetric information {have been} analyzed in \cite{mertens2003equilibria,maskin2013youtube,nayyar2012dynamic,nayyar2013common,gupta2014common,gupta2016dynamic,ouyang2015oligopoly,ouyang2016dynamic,tavafoghi2016stochastic,hamidthesis,vasal2019spbe,tang2022dynamic,ouyang2024approach}.
In \cite{nayyar2013common}, the authors introduce the concept of Common Information Based Markov Perfect Equilibrium (CIB-MPE), which is an extension of MPE in partially observable systems. In a CIB-MPE, all players choose their actions at each time based on the {Common-Information-Based (CIB)} belief {(a compression of the common information)} and private information instead of full information. The authors establish the existence of CIB-MPE under the assumption that the CIB belief is strategy-independent. Furthermore, the authors develop a sequential decomposition procedure to solve for such equilibria. In \cite{ouyang2016dynamic}, the authors extend the result of \cite{nayyar2013common} to a particular model where the CIB beliefs are strategy-dependent. 
{They introduce the concept of Common Information Based Perfect Bayesian Equilibrium (CIB-PBE). In a CIB-PBE all players choose their actions based on the CIB belief and their private information.} They show that such equilibria can be found through a sequential decomposition whenever the decomposition has a solution. The authors conjecture the existence of such equilibria. 
{The authors of \cite{tang2022dynamic} extend the model of \cite{ouyang2016dynamic} to games among teams. They} consider two compression maps and their associated equilibrium concepts. 
For the first compression map, which is strategy-independent, they establish {preservation of equilibrium payoffs}. For the second information compression map, which is strategy-dependent, 
{they propose a sequential decomposition of the game. If the decomposition admits a solution, then there exists a CIB-BNE based on the compressed information. Furthermore, they provide an example where CIB-BNE based on this specific compressed information do not exist. The example also proves that the conjecture about the existence of CIB-PBEs, made in \cite{ouyang2016dynamic}, is false.}

{In addition to the methods on information compression that appear in \cite{maskin2001markov,nayyar2013common,ouyang2016dynamic,tang2022dynamic}, there are two lines of work on games where the players' decisions are based on limited information.} In the first line of work, players face exogenous hard constraints on the information that can be used to choose actions \citep{piccione1997interpretation,battigalli1997dynamic,GROVE199751,HALPERN199766,AUMANN1997102}.
In the second line of work, players can utilize any finite automaton with any number of states to choose actions, however more complex automata are assumed to be more costly \citep{abreu1988structure,banks1990repeated}. In our work, we also deal with finite automaton based strategies. However, there is a critical difference between our work and {both lines of literature}{both of the above-mentioned lines of work}: Our primary interest is to study conditions under which a compression based strategy profile can form an equilibrium under standard equilibrium concepts {when \emph{unbounded} rationality and perfect recall are allowed}. Under these equilibrium concepts, we do not restrict the strategy of any player, nor do we impose any penalty on complicated strategies. In other words, a compression based strategy needs to be a best response compared to all possible strategies with full recall in terms of the payoff alone.
The methodology for information compression presented in this paper is similar in spirit to that of \cite{maskin2001markov,nayyar2013common,ouyang2016dynamic,vasal2019spbe,ouyang2024approach}. {However}, this paper is significantly different from those works as it deals with the discovery of information compression maps that lead not only to the existence {(in general)} of various types of compressed information based equilibria but also to the preservation of all equilibrium payoff profiles (a topic not investigated in \cite{maskin2001markov,nayyar2013common,ouyang2016dynamic,vasal2019spbe,ouyang2024approach}).
This paper builds on \cite{tang2022dynamic}; it identifies embodiments of the two information compression maps studied in \cite{tang2022dynamic} for a much more general class of games than that of \cite{tang2022dynamic}, and a broader set of equilibrium concepts.

\subsection{Contributions}\label{sec:contributions}
Our main contributions are the following:
\begin{enumerate}
    \item We propose two notions of information states/compressed information for dynamic games with asymmetric information that result in from strategy-independent compression maps: Mutually Sufficient Information (MSI) and Universally Sufficient Information (USI) --- Definitions \ref{def:msi} and \ref{def:usi}, respectively. We present an example that highlights the differences between MSI and USI.
    \item We show that in finite dynamic games with asymmetric information, Bayes--Nash Equilibria (BNE) and Sequential Equilibria (SE) exist when all players use MSI-based strategies --- Theorems \ref{thm:msiexist} and \ref{thm:msiseexist}, respectively.
    \item We prove that when all players employ USI-based strategies the resulting sets of BNE and SE payoff profiles are same as the sets of BNE and SE payoff profiles resulting when all players use full information based strategies --- Theorems \ref{thm:usiequiv} and \ref{thm:usiseequiv}, respectively.
    \item We prove that when all players use USI-based strategies the resulting set of weak Perfect Bayesian Equilibrium (wPBE) payoff profiles can be a proper subset of the set of all wPBE payoff profiles --- Proposition \ref{prop:exwpbe}. A result similar to that of Proposition \ref{prop:exwpbe} is also true under Watson's PBE \citep{watson2017general}. 

    Figure \ref{fig:vennpayoff} depicts the results stated in Contributions 3 and 4 above.

    \begin{figure}[!ht]
    	\centering
    	\begin{tikzpicture}
    		\draw[thick] (0, 5) rectangle (8.2, 0);
    		\draw[thick] (1, 4) rectangle (8.0, 0.2);
    		\draw[thick] (2, 3) rectangle (7.8, 0.4);
    		\draw[thick] (3, 2) rectangle (7.6, 0.6);
    		
    		\draw[thick] (0, 4.5) node[anchor=west] {USI-based BNE = All BNE};
    		\draw[thick] (1, 3.5) node[anchor=west] {All wPBE};
    		\draw[thick] (2, 2.5) node[anchor=west] {USI-based wPBE};
    		\draw[thick] (3, 1.5) node[anchor=west] {USI-based SE = All SE};
    	\end{tikzpicture}
    	\caption{A Venn diagram showing the relationship of the sets of payoff profiles for different equilibrium concepts using either unilateral sufficient information (USI) based strategy profiles or general strategies.} \label{fig:vennpayoff}
    \end{figure}
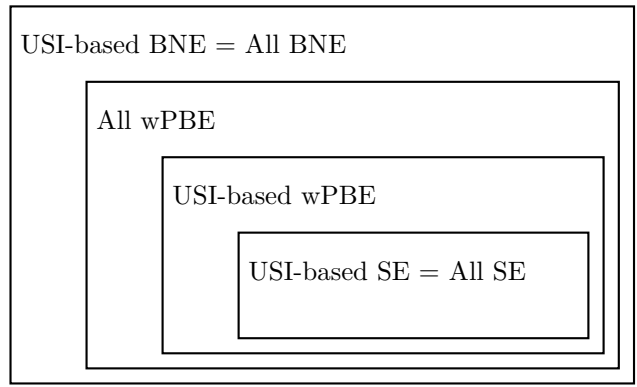
    
    \item We present several examples --- Examples \ref{ex:repeatedgames} through \ref{ex:ouyang} --- of finite dynamic games with asymmetric information where we identify MSI and USI.\\
\end{enumerate}

    Additional contributions of this work are:\\~

\begin{enumerate}
    \item A set of alternative definitions of SE --- Appendix \ref{app:SE}. These definitions are equivalent to the original definition of SE given in \cite{kreps1982sequential} and help simplify some of the proofs of the main results in this paper.

    \item A new methodology for establishing existence of equilibria. The methodology is based on a best response function defined through a dynamic program for a single-agent control problem.

    \item A counterexample showing that a well-known strategy-dependent compression map, resulting in sufficient private information along with common information-based beliefs, does not guarantee existence of equilibria based on the above-stated compressed information.

\end{enumerate}

\subsection{Notation}\label{sec:notations}
We follow the notational convention of stochastic control literature (i.e. using random variables to define the system, representing information as random variables, etc.) instead of the convention of game theory literature (i.e. game trees, nodes, information sets, etc.) unless otherwise specified. This allows us to apply techniques from stochastic control, which we rely heavily upon, in a more natural way.
We use capital letters to represent random variables, bold capital letters to denote random vectors, and lower case letters to represent realizations. We use superscripts to indicate players, and subscripts to indicate time. We use $i$ to represent a typical player and $-i$ to represent all players other than $i$. We use $t_1:t_2$ to indicate the collection of timestamps $(t_1, t_1+1, \cdots, t_2)$. For example, $X_{1:4}^i$ stands for the random vector $(X_1^1, X_2^i, X_3^i, X_4^i)$. For random variables or random vectors represented by Latin letters, we use the corresponding script capital letters to denote the space of values these random vectors can take. For example, $\mathcal{H}_t^i$ denotes the space of values the random vector $H_t^i$ can take. The products of sets refers to Cartesian products. We use $\Pr(\cdot)$ and $\E[\cdot]$ to denote probabilities and expectations, respectively. We use $\Delta(\varOmega)$ to denote the set of probability distributions on a finite set $\varOmega$. For a distribution $\nu\in \Delta(\varOmega)$, we use $\mathrm{supp}(\nu)$ to denote the support of $\nu$. When writing probabilities, we will omit the random variables when the lower case letters that represent the realizations clearly indicate the random variable it represents. For example, we will use $\Pr(y_t^i|x_t, u_t)$ as a shorthand for $\Pr(Y_t^i = y_t^i|X_t = x_t, U_t=u_t)$. When $\lambda$ is a function from $\varOmega_1$ to $\Delta(\varOmega_2)$, with some abuse of notation we write $\lambda(\omega_2|\omega_1):=(\lambda(\omega_1))(\omega_2)$ as if $\lambda$ is a conditional distribution. We use $\bm{1}_A$ to denote the indicator random variable of an event $A$.

In general, probability distributions of random variables in a dynamic system are only well defined after a complete strategy profile is specified. We specify the strategy profile that defines the distribution in superscripts, e.g. $\Pr^g(x_t^i|h_t^0)$. When the conditional probability is independent of a certain part of the strategy $(g_t^i)_{(i, t)\in \varOmega}$, we may omit this part of the strategy in the notation, e.g. $\Pr^{g_{1:t-1}}(x_{t}|y_{1:t-1}, u_{1:t-1})$, $\Pr^{g^{i}}(u_{t}^i|h_t^i)$ or $\Pr(x_{t+1}|x_{t}, u_{t})$. We say that a realization of some random vector (for example $h_t^i$) is \emph{admissible} under a partially specified strategy profile (for example $g^{-i}$) if the realization has strictly positive probability under some completion of the partially specified strategy profile (In this example, that means $\Pr^{g^i, g^{-i}}(h_t^i) > 0$ for some $g^i$). Whenever we write a conditional probability or conditional expectation, we implicitly assume that the condition has non-zero probability under the specified strategy profile. When only part of the strategy profile is specified in the superscript, we implicitly assume that the condition is admissible under the specified partial strategy profile. In this paper, we make heavy use of value functions and reward-to-go functions. Such functions will be clearly defined within their context with the following convention: $Q$ stands for state-action value functions; $V$ stands for state value functions; and $J$ stands for reward-to-go functions for a given strategy profile (as opposed to $Q$ or $V$, 
{both of which are typically defined via} a maximum over all strategies).

\section{Game Model and Objectives}\label{sec:suffinfo:gamemodel}
\subsection{Game Model}
In this section we formulate a general model for a finite horizon dynamic game with finitely many players. 

Denote the set of players by $\mathcal{I}$. Denote the set of timestamps by $\mathcal{T}=\{1,2,\cdots, T\}$. At time $t$, player $i\in\mathcal{I}$ takes action $U_t^i$, obtains instantaneous reward $R_t^i$, and then learns new information $Z_t^i$. Player $i$ may not necessarily observe the instantaneous rewards $R_t^i$ directly. The reward is observable only if it is part of $Z_t^i$. Define $Z_t=(Z_t^i)_{i\in\mathcal{I}}, U_t=(U_t^i)_{i\in\mathcal{I}}$, and $R_t=(R_t^i)_{i\in\mathcal{I}}$. We assume that there is an underlying state variable $X_t$ and
\begin{align}
	(X_{t+1}, Z_t, R_t) &= f_t(X_t, U_t, W_t), \qquad t\in\mathcal{T},
\end{align}
where $(f_t)_{t\in \mathcal{T}}$ are fixed functions. The primitive random variable $X_1$ represents the initial move of nature. The primitive random vector $H_1=(H_1^i)_{i\in\mathcal{I}}$ {represents} the initial information of the players. The initial state and information $X_1$ and $H_1$ are, in general, correlated. The random variables $(W_t)_{t=1}^T$ are mutually independent primitive random variables representing nature's move. The vector $(X_1, H_1)$ is assumed to be mutually independent with $W_1, W_2, \cdots, W_{T}$. The distributions of the primitive random variables are common knowledge to all players.

{Define $\mathcal{X}_t, \mathcal{U}_t, \mathcal{Z}_t, \mathcal{W}_t, \mathcal{H}_1$ to be the sets of possible values of $X_t, U_t, Z_t, W_t, H_1$ respectively. The sets $\mathcal{X}_t, \mathcal{U}_t, \mathcal{Z}_t, \mathcal{W}_t, \mathcal{H}_1$ are assumed to be common knowledge among all players.} In this work, in order to focus on conceptual difficulties instead of technical issues, we make the following assumption. 

\begin{assump}
	$\mathcal{X}_t, \mathcal{U}_t, \mathcal{Z}_t, \mathcal{W}_t, \mathcal{H}_1$ are finite sets, and $R_t^i$ is supported on $[-1, 1]$.
\end{assump}

We assume perfect recall, i.e. the information player $i$ has at time $t$ is $H_t^i = (H_1^i, Z_{1:t-1}^i)$, and player $i$'s action $U_t^i$ is {contained} in the new information $Z_t^i$. 
A behavioral strategy $g^i=(g_t^i)_{t\in\mathcal{T}}$ of player $i$ is a collection of functions $g_t^i\colon \mathcal{H}_t^i\mapsto \Delta(\mathcal{U}_t^i)$, where $\mathcal{H}_t^i$ is the space where $H_t^i$ takes values. Under a behavioral strategy profile $g=(g^i)_{i\in\mathcal{I}}$, the total reward/payoff of player $i$ in this game is given by
\begin{equation}
	J^i(g):=\E^g\left[\sum_{t=1}^{T} R_t^i \right].
\end{equation}

\begin{remark}
	This is not a restrictive model: By choosing appropriate state representation $X_t$ and instantaneous reward vector $R_t$, it can be used to model any finite-node extensive form sequential game with perfect recall.
\end{remark}

We initially consider two solution concepts for dynamic games with asymmetric information: Bayes--Nash Equilibrium (BNE) and Sequential Equilibrium (SE). We define BNE and SE below.

\begin{defn}[Bayes-Nash Equilibrium]
A behavioral strategy profile $g$ is said to form a Bayes-Nash equilibrium (BNE) if for any player $i$ and any behavioral strategy $\tilde{g}^i$ of player $i$, we have $J^i(g)\geq J^i(\tilde{g}^i, g^{-i})$.
\end{defn}

\begin{defn}[Sequential Equilibrium]\label{def:KSEinmaintext}
	Let $g=(g^i)_{i\in\mathcal{I}}$ be a behavioral strategy profile. Let $\Qfunc=(\Qfunc_t^i)_{i\in\mathcal{I},t\in\mathcal{T}}$ be a collection of history-action value functions, i.e. $\Qfunc_t^i\colon\mathcal{H}_t^i \times \mathcal{U}_t^i \mapsto \mathbb{R}$. The strategy profile $g$ is said to be sequentially rational under $\Qfunc$ if for each $i\in\mathcal{I}, t\in\mathcal{T}$ and each $h_t^i\in\mathcal{H}_t^i$, 
	\begin{equation}
		\mathrm{supp}(g_t^i(h_t^i))\subseteq \underset{u_t^i}{\arg\max}~ \Qfunc_t^i(h_t^i, u_t^i).
	\end{equation}
	
	$\Qfunc$ is said to be fully consistent with $g$ if there exist a sequence of pairs of strategies and history-action value functions $(g^{(n)}, \Qfunc^{(n)})_{n=1}^\infty$ such that
	\begin{enumerate}[(1)]
		\item $g^{(n)}$ is fully mixed, i.e. every action is chosen with positive probability at every information set.
		\item $\Qfunc^{(n)}$ is consistent with $g^{(n)}$, i.e.,
		\begin{align}
			\Qfunc_\tau^{(n), i}(h_\tau^i, u_\tau^i)&=\E^{g^{(n)}}\left[\sum_{t=\tau}^T R_t^i\Big|h_\tau^i, u_\tau^i\right],
		\end{align}
		for each $i\in\mathcal{I}, \tau\in\mathcal{T}, h_\tau^i\in \mathcal{H}_\tau^i, u_\tau^i\in \mathcal{U}_\tau^i$.
		\item $(g^{(n)}, \Qfunc^{(n)})\rightarrow (g, \Qfunc)$ as $n\rightarrow\infty$.
	\end{enumerate}
	A tuple $(g, \Qfunc)$ is said to be a sequential equilibrium if $g$ is sequentially rational under $\Qfunc$ and $\Qfunc$ is fully consistent with $g$.
\end{defn}	

{Whereas Definition \ref{def:KSEinmaintext} of SE is different from that of \cite{kreps1982sequential}, we show in Appendix \ref{app:SE} that it is equivalent to the concept in \cite{kreps1982sequential}. We use Definition \ref{def:KSEinmaintext} as it is more suitable for the development of our results.}

In this paper, we are interested in analyzing the performance of strategy profiles that are based on some form of compressed information.
Let $\Comp_t^i$ be a function of $H_t^i$ that can be sequentially updated, i.e. there exist functions $(\iota_t^i)_{t\in\mathcal{T}}$ such that
	\begin{align}
		\Comp_1^i &= \iota_1^i(H_1^i),\\
		\Comp_{t}^i &= \iota_{t}^i(\Comp_{t-1}^i, Z_{t-1}^i),\qquad t\in\mathcal{T}\backslash\{1\}.
	\end{align}
Write $\Comp^i=(\Comp_t^i)_{t\in\mathcal{T}}$ and $\Comp=(\Comp^i)_{i\in\mathcal{I}}$. We will refer to $\Comp^i$ as the compression of player $i$'s information under $\iota^i = (\iota_t^i)_{t\in\mathcal{T}}$. A $\Comp^i$-based (behavioral) strategy $\rho^i=(\rho_t^i)_{t\in\mathcal{T}}$ is a collection of functions $\rho_t^i\colon\Compset_t^i\mapsto \Delta(\mathcal{U}_t^i)$. A strategy profile where each player $i$ uses a $\Comp^i$-based strategy is called a $\Comp$-based strategy profile. If a $\Comp$-based strategy profile forms an Bayes-Nash (resp. sequential) equilibrium, then it is called a $\Comp$-based Bayes-Nash (resp. sequential) equilibrium. Note that unlike \cite{piccione1997interpretation,battigalli1997dynamic,GROVE199751,HALPERN199766,AUMANN1997102}, we require the $\Comp$-based BNE and $\Comp$-based SE to contain no profitable deviation among \emph{all full-history-based} strategies.

\subsection{Objectives}
Our goal is to discover properties/features of the compressed information $\Comp$ sufficient to guarantee that (i) there exists $\Comp$-based BNE and SE; (ii) the set of $\Comp$-based BNE (resp. SE) payoff profiles is equal to the set of (general strategy based) BNE (resp. SE) profiles under perfect recall. 

To achieve the above-stated objectives we proceed as follows: First, we introduce two notions of information state, namely MSI and USI (Section \ref{sec:twoinfostates}). Then, we investigate the existence of MSI-based and USI-based BNE and SE, as well as the preservation of the set of all BNE and SE payoff profiles when USI-based strategies are employed by all players (Section \ref{sec:isbe}).

\begin{remark}
{A key challenge in achieving the above-stated goal is the following: Unlike the case of perfect recall,} one may not be able to recover $\Comp_{t-1}^i$ from $\Comp_t^i$. 
Therefore, $\Comp^i$-based (behavioral) strategies are not equivalent to mixed strategies supported on the set of $\Comp^i$-based pure strategies. This fact creates difficulty for analyzing $\Comp^i$-based strategies since the standard technique of using Kuhn's Theorem \citep{kuhn2016extensive} to transform mixed strategies to behavioral strategies does not apply. To resolve this challenge, we developed stochastic control theory-based techniques that allow us to work with $\Comp^i$-based behavioral strategies directly rather than transforming from a mixed strategy. 
\end{remark}

\begin{remark}
    In the following sections, when referring to the compressed information $\Comp_t^i$, we will consider the compression mappings $\iota^i$ to be fixed and given, so that $\Comp_t^i$ is fixed given $H_t^i$. The space of compressed information $\Compset_t^i$ is a fixed, finite set given $\iota^i$. When we use $\comp_t^i$ to represent a realization of $\Comp_t^i$, we assume that it corresponds to the compression of $H_t^i=h_t^i$ under the fixed $\iota^i$.
\end{remark}

\section{Two Definitions of Information State}\label{sec:twoinfostates}
Before we define notions of information state in dynamic games we introduce the notion of information state for one player when other players' strategies are fixed. The following definition is an extension of the definition of information state in \cite{subramanian2020approximate}.

\begin{defn}\label{def:infostate}
	Let $g^{-i}$ be a behavioral strategy profile of players other than $i$. We say that $\Comp^i$ is an \emph{information state under $g^{-i}$} if there exist functions $(P_t^{i,g^{-i}})_{t\in\mathcal{T}}, (r_t^{i,g^{-i}})_{t\in\mathcal{T}}$, where $P_t^{i,g^{-i}}\colon\Compset_t^i\times \mathcal{U}_t^i \mapsto \Delta(\Compset_{t+1}^i)$ and $r_t^{i,g^{-i}}\colon\Compset_t^i\times \mathcal{U}_t^i \mapsto [-1, 1]$, such that
	\begin{enumerate}[(1)]
		\item $\Pr^{g^i, g^{-i}}(\comp_{t+1}^i|h_t^i, u_t^i) = P_t^{i,g^{-i}}(\comp_{t+1}^i|\comp_t^i, u_t^i)$ for all $t\in\mathcal{T}\backslash\{T\}$;
		\item $\E^{g^i, g^{-i}}[R_t^i|h_t^i, u_t^i] = r_t^{i,g^{-i}}(\comp_t^i, u_t^i)$ for all $t\in \mathcal{T}$,
	\end{enumerate}
	for all $g^i$, and all $(h_t^i, u_t^i)$ admissible under $(g^i, g^{-i})$. (Both $P_t^{i,g^{-i}}$ and $r_t^{i,g^{-i}}$ may depend on $g^{-i}$, but they do not depend on $g^i$.)
\end{defn}

In the absence of other players, the above definition is exactly the same as the definition of information state for player $i$'s control problem. When other players are present, the parameters of player $i$'s control problem, in general, depend on the strategy of other players. As a consequence, an information state under one strategy profile $g^{-i}$ may not be an information state under a different strategy profile $\tilde{g}^{-i}$.

\subsection{Mutually Sufficient Information}\label{sec:msi}
\begin{defn}[Mutually Sufficient Information]\label{def:msi}
	We say that $\Comp=(\Comp^i)_{i\in\mathcal{I}}$ is \emph{mutually sufficient information} (MSI) if for all players $i\in\mathcal{I}$ and all $\Comp^{-i}$-based strategy profiles $\rho^{-i}$, $\Comp^i$ is an information state under $\rho^{-i}$.
\end{defn}

In words, MSI represents mutually consistent compression of information in a dynamic game: Player $i$ could compress their information to $\Comp^i$ without loss of performance when other players are compressing their information to $\Comp^{-i}$. Note that MSI imposes interdependent conditions on the compression maps of all players: It requires the compression maps of all players to be consistent with each other.

The following lemma provides a sufficient condition for a compression maps to yield mutually sufficient information.

\begin{lemma}\label{lem:msi}
	If for all $i\in\mathcal{I}$ and all $\Comp^{-i}$-based strategy profiles $\rho^{-i}$, there exist functions $(\Phi_t^{i, \rho^{-i}})_{t\in\mathcal{T}}$ where $\Phi_t^{i, \rho^{-i}}\colon\Compset_t^i \mapsto \Delta(\mathcal{X}_t\times \Compset_t^{-i})$ such that
	\begin{equation}
		\Pr^{g^i, \rho^{-i}}(x_t, \comp_t^{-i}|h_t^i) = \Phi_t^{i,\rho^{-i}}(x_t, \comp_t^{-i}|\comp_t^i),
	\end{equation}
	for all behavioral strategies $g^i$, all $t\in\mathcal{T}$, and all $h_t^i$ admissible under $(g^i, \rho^{-i})$, then $\Comp=(\Comp^i)_{i\in\mathcal{I}}$ is mutually sufficient information.
\end{lemma}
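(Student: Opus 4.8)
The plan is to unwind the two definitions. By Definition \ref{def:msi}, proving that $\Comp$ is MSI means proving that for every player $i$ and every $\Comp^{-i}$-based strategy profile $\rho^{-i}$, the compression $\Comp^i$ is an information state under $\rho^{-i}$; by Definition \ref{def:infostate} this in turn means exhibiting functions $P_t^{i,\rho^{-i}}$ and $r_t^{i,\rho^{-i}}$ of $(\comp_t^i, u_t^i)$ that reproduce, respectively, the one-step conditional law of $\Comp_{t+1}^i$ (for $t\le T-1$) and the conditional mean of $R_t^i$, given $(h_t^i, u_t^i)$, uniformly over all $g^i$ and all admissible $(h_t^i, u_t^i)$. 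So I fix $i$ and $\rho^{-i}$, let $\Phi_t^{i,\rho^{-i}}$ denote the belief provided by the hypothesis, and write $R_t^i$ and $\Comp_{t+1}^i$ as deterministic functions $R_t^i = \varrho_t^i(X_t, U_t, W_t)$ and $\Comp_{t+1}^i = \psi_{t+1}^i(\Comp_t^i, X_t, U_t, W_t)$, obtained by composing $f_t$ with the relevant coordinate projections and with $\iota_{t+1}^i$. The underlying intuition is that $(X_t, \Comp_t^{-i})$, together with the internal randomization of $\rho^{-i}$ and the fresh primitive $W_t$, generates everything that determines player $i$'s current reward and next compressed state, while the hypothesis guarantees that the conditional law of $(X_t, \Comp_t^{-i})$ given $H_t^i$ depends on $H_t^i$ only through $\Comp_t^i$.

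The heart of the argument is a chain-rule factorization of the conditional joint law of $(X_t, \Comp_t^{-i}, U_t^{-i}, W_t)$ given an admissible $(h_t^i, u_t^i)$, resting on three conditional-independence facts that follow from the game dynamics and the definition of behavioral strategies: (a) since $U_t^i$ is produced from $H_t^i$ using only player $i$'s internal randomization, $U_t^i \perp (X_t, \Comp_t^{-i}) \mid H_t^i$, so that $\Pr^{g^i,\rho^{-i}}(x_t, \comp_t^{-i}\mid h_t^i, u_t^i) = \Pr^{g^i,\rho^{-i}}(x_t, \comp_t^{-i}\mid h_t^i) = \Phi_t^{i,\rho^{-i}}(x_t, \comp_t^{-i}\mid\comp_t^i)$, where the last equality uses the hypothesis and the fact that $h_t^i$ is itself admissible under $(g^i,\rho^{-i})$; (b) since $\rho^{-i}$ is $\Comp^{-i}$-based and uses its own internal randomization, $U_t^{-i} \perp (X_t, H_t^i, U_t^i) \mid \Comp_t^{-i}$ with conditional law $\rho_t^{-i}(\cdot\mid\comp_t^{-i})$; and (c) $W_t$ is a primitive independent of $(X_1, H_1)$, of $W_{1:t-1}$, and of all internal randomization through time $t$, while $(X_t, \Comp_t^{-i}, H_t^i, U_t)$ is a deterministic function of those quantities, so $W_t$ is independent of $(X_t, \Comp_t^{-i}, H_t^i, U_t)$ and keeps its prior law after conditioning. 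I expect assembling these three facts --- in particular checking that conditioning on the action $u_t^i$ does not perturb the belief $\Phi_t^{i,\rho^{-i}}$, and that $W_t$ may be treated as fresh after conditioning on time-$t$ information and actions --- to be the main obstacle; these are the measurability/bookkeeping steps of exactly the kind collected in Appendix \ref{app:aux:infostate}.

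Combining (a)--(c), for every $g^i$ and every admissible $(h_t^i, u_t^i)$ we obtain
\begin{align*}
	\E^{g^i,\rho^{-i}}[R_t^i\mid h_t^i,u_t^i] &= \sum_{x_t,\comp_t^{-i}}\Phi_t^{i,\rho^{-i}}(x_t,\comp_t^{-i}\mid\comp_t^i)\sum_{u_t^{-i}}\rho_t^{-i}(u_t^{-i}\mid\comp_t^{-i})\sum_{w_t}\Pr(w_t)\,\varrho_t^i\big(x_t,(u_t^i,u_t^{-i}),w_t\big),\\
	\Pr^{g^i,\rho^{-i}}(\comp_{t+1}^i\mid h_t^i,u_t^i) &= \sum_{x_t,\comp_t^{-i}}\Phi_t^{i,\rho^{-i}}(x_t,\comp_t^{-i}\mid\comp_t^i)\sum_{u_t^{-i}}\rho_t^{-i}(u_t^{-i}\mid\comp_t^{-i})\sum_{w_t}\Pr(w_t)\,\bm{1}_{\{\psi_{t+1}^i(\comp_t^i,x_t,(u_t^i,u_t^{-i}),w_t)=\comp_{t+1}^i\}},
\end{align*}
the second identity for $t\le T-1$. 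Both right-hand sides depend on $(h_t^i, u_t^i)$ only through $(\comp_t^i, u_t^i)$, hence define the desired functions $r_t^{i,\rho^{-i}}(\comp_t^i,u_t^i)$ and $P_t^{i,\rho^{-i}}(\comp_{t+1}^i\mid\comp_t^i,u_t^i)$; the former lies in $[-1,1]$ since each $R_t^i$ is supported on $[-1,1]$, and the latter is a genuine probability distribution on $\Compset_{t+1}^i$ since $\psi_{t+1}^i$ takes values in $\Compset_{t+1}^i$ and the right-hand side is an average of indicators. This verifies conditions (1) and (2) of Definition \ref{def:infostate}, so $\Comp^i$ is an information state under $\rho^{-i}$; as $i$ and $\rho^{-i}$ were arbitrary, $\Comp$ is mutually sufficient information.
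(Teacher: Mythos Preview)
Your proposal is correct and follows essentially the same approach as the paper's proof: both arguments express $R_t^i$ and $\Comp_{t+1}^i$ as functions of $(X_t, U_t, W_t)$ (and $\Comp_t^i$), use the hypothesis to reduce the conditional law of $(X_t, \Comp_t^{-i})$ given $h_t^i$ to a function of $\comp_t^i$, and then integrate out $U_t^{-i}$ via $\rho_t^{-i}(\cdot\mid\comp_t^{-i})$ and the fresh noise $W_t$. The only cosmetic difference is that the paper first integrates out $W_t$ to work with the strategy-independent kernels $\Pr(\comp_{t+1}^i\mid\comp_t^i,x_t,u_t)$ and $\E[R_t^i\mid x_t,u_t]$, whereas you keep $W_t$ explicit in the final sums; you are also a bit more explicit than the paper about the conditional-independence bookkeeping in facts (a)--(c).
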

\begin{proof}
    See Appendix \ref{app:lem:msi}.
\end{proof}

In words, the condition of Lemma \ref{lem:msi} means that $\Comp_t^i$ has the same predictive power as $H_t^i$ in terms of forming a belief on the current state and other players' compressed information \emph{whenever other players are using compression-based strategies}. This belief is sufficient for player $i$ to predict other player's actions and future state evolution. Since other players are using compression-based strategies, player $i$ does not have to form a belief on other player's full information in order to predict other players' actions.

\subsection{Unilaterally Sufficient Information}\label{sec:usi}

\begin{defn}[Unilaterally Sufficient Information]\label{def:usi}
	We say that $\Comp^i$ is \emph{unilaterally sufficient information} (USI) for player $i\in\mathcal{I}$ if there exist functions $(F_t^{i, g^i})_{t\in\mathcal{T}}$ and $(\Phi_t^{i, g^{-i}})_{t\in\mathcal{T}}$ where $F_t^{i, g^i}\colon\Compset_t^i\mapsto \Delta(\mathcal{H}_t^i), \Phi_t^{i, g^{-i}}\colon\Compset_t^i \mapsto \Delta(\mathcal{X}_t\times \mathcal{H}_t^{-i})$ such that
	\begin{equation}\label{eq:usicondition}
		\Pr^{g}(x_t, h_t|\comp_t^i) = F_t^{i, g^i}(h_t^i|\comp_t^i)\Phi_t^{i,g^{-i}}(x_t, h_t^{-i}|\comp_t^i),
	\end{equation}
	for all behavioral strategy profiles $g$, all $t\in\mathcal{T}$, and all $\comp_t^i$ admissible under $g$.\footnote{In the case where random vectors $X_t$, $H_t^i$ and $H_t^{-i}$ share some common components, \eqref{eq:usicondition} should be interpreted in the following way: $x_t$, $h_t^i$ and $h_t^{-i}$ are three separate realizations that are not necessarily congruent with each other (i.e. they can disagree on their common parts). In the case of incongruency, the left-hand side equals 0. The equation needs to be true for all combinations of $x_t\in \mathcal{X}_t$, $h_t^i\in\mathcal{H}_t^i$ and $h_t^{-i}\in\mathcal{H}_t^{-i}$.}
\end{defn}

The definition of USI can be separated into two parts: The first part states that the conditional distribution of $H_t^i$, player $i$'s full information, given $\Comp_t^i$, the compressed information, does not depend on other players' strategies. This is similar to the idea of sufficient statistics in the statistics literature \citep{kay1993fundamentals}: If player $i$ would like to use their ``data'' $H_t^i$ to estimate the ``parameter'' $g^{-i}$, then $\Comp_t^i$ is a sufficient statistic for this parameter estimation problem. The second part states that $\Comp_t^i$ has the same predictive power as $H_t^i$ in terms of forming a belief on the current state and other players' full information. 
In contrast to the definition of mutually sufficient information, if $\Comp^i$ is unilaterally sufficient information, then $\Comp^i$ is sufficient for player $i$'s decision making regardless of whether other players are using any information compression map.

\subsection{Comparison}
Using Lemma \ref{lem:msi} it can be shown that if $\Comp^i$ is USI for each $i\in\mathcal{I}$, then $\Comp=(\Comp^i)_{i\in\mathcal{I}}$ is MSI. The {converse} is not true. The following example illustrates the difference between MSI and USI. 
\begin{example}\label{ex:msinotusi}
	Consider a two stage stateless (i.e. $X_t=\varnothing$) game of two players: Alice (A) moves first and Bob (B) moves afterwards. There is no initial information (i.e. $H_1^A=H_1^B=\varnothing$).
	
	At time $t=1$, Alice chooses $U_1^A\in \{0, 1\}$. The instantaneous rewards of both players are given by
	\begin{equation}
		R_1^A = U_1^A, R_1^B = -U_1^A.
	\end{equation}
	
	The new information of both Alice and Bob at time $1$ is $Z_1^A=Z_1^B = U_1^A$, i.e. Alice's action is observed.
	
	At time $t=2$, Bob chooses $U_2^B \in \{-1, 1\}$.
	The instantaneous rewards of both players are given by
	\begin{equation}
		R_2^A = U_2^B, R_2^B = 0.
	\end{equation}
	
	Set $\Comp_t^A=H_t^A$ and $\Comp_t^B=\varnothing$ for both $t\in \{1, 2\}$. It can be shown that $\Comp$ is mutually sufficient information. However, $\Comp^B$ is not unilaterally sufficient information: We have $\Pr^g(h_2^B|\comp_2^B)=\Pr^g(u_1^A) = g_1^A(u_1^A|\varnothing)$, while the definition of USI requires that $\Pr^g(h_2^B|\comp_2^B) = F_t^{B, g^B}(h_2^B|\comp_2^B)$ for some function $F_t^{B, g^B}$ that does not depend on $g^A$.
\end{example}


\section{Information-State Based Equilibrium}\label{sec:isbe}
In this section, we formulate our result on MSI and USI based equilibria for two equilibrium concepts: Bayes--Nash equilibria and sequential equilibria. 

\subsection{Information-State Based Bayes--Nash Equilibrium}
\begin{thm}\label{thm:msiexist}
	If $\Comp$ is mutually sufficient information, then there exists at least one $\Comp$-based BNE.
\end{thm}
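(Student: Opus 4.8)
The plan is to realize a $\Comp$-based BNE as a fixed point of a best-response correspondence on the set of $\Comp$-based behavioral strategy profiles, via Kakutani's theorem; the role of the MSI hypothesis is to collapse each player's best-response problem against a $\Comp$-based opponent profile into a single-agent finite-horizon MDP whose state is the compressed information. Fix a $\Comp$-based profile $\rho=(\rho^i)_{i\in\mathcal{I}}$. For each $i$, Definition~\ref{def:infostate} together with the MSI property (Definition~\ref{def:msi}) gives kernels $P_t^{i,\rho^{-i}}$ and rewards $r_t^{i,\rho^{-i}}$ such that, for \emph{every} full-history strategy $g^i$, the controlled pair $(\Comp_t^i,U_t^i)$ evolves as an MDP under $(g^i,\rho^{-i})$ and $\E^{g^i,\rho^{-i}}\big[\sum_t R_t^i\big]=\E^{g^i,\rho^{-i}}\big[\sum_t r_t^{i,\rho^{-i}}(\Comp_t^i,U_t^i)\big]$. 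Backward induction on this MDP produces value functions $V_t^{i,\rho}$ and state–action value functions $\Qfunc_t^{i,\rho}\colon\Compset_t^i\times\mathcal{U}_t^i\to\mathbb{R}$. The auxiliary information-state results of Appendix~\ref{app:aux:infostate} then give the two facts we need: the optimal MDP value $\E[V_1^{i,\rho}(\Comp_1^i)]$ upper-bounds $J^i(\tilde g^i,\rho^{-i})$ over all full-history $\tilde g^i$, and it is attained by any $\Comp^i$-based strategy that is $\Qfunc^{i,\rho}$-greedy, i.e.\ with $\mathrm{supp}(\rho^i_t(\comp_t^i))\subseteq\arg\max_{u_t^i}\Qfunc_t^{i,\rho}(\comp_t^i,u_t^i)$ at every $\comp_t^i\in\Compset_t^i$.

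Accordingly I would define $\mathrm{BR}(\rho)=\prod_{i\in\mathcal{I}}\mathrm{BR}^i(\rho)$, where $\mathrm{BR}^i(\rho)$ is the set of $\Comp^i$-based strategies that are $\Qfunc^{i,\rho}$-greedy at \emph{every} compressed state, reachable or not. Each $\mathrm{BR}^i(\rho)$ is nonempty and convex, since at every $(t,\comp_t^i)$ it is simply the simplex on the set of $\Qfunc_t^{i,\rho}$-maximizing actions; and, by the previous paragraph, every element of it is a genuine best response among all full-history strategies. This is exactly the step that circumvents the obstruction flagged in the remark that a $\Comp^i$-based behavioral strategy need not be a mixture of $\Comp^i$-based pure strategies (because $\Comp_{t-1}^i$ may be unrecoverable from $\Comp_t^i$): $J^i(\cdot,\rho^{-i})$ is not concave, even affine, in player $i$'s $\Comp^i$-based strategy and Kuhn's theorem is unavailable, so we exploit the recursive MDP structure instead. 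The ambient domain $\prod_{i,t,\comp_t^i}\Delta(\mathcal{U}_t^i)$ is a nonempty compact convex subset of a Euclidean space, so Kakutani's theorem applies once $\mathrm{BR}$ is shown to have closed graph; and a fixed point $\rho^\star\in\mathrm{BR}(\rho^\star)$ is, by construction and the best-response property above, a $\Comp$-based BNE. Because the action sets are finite, the closed-graph verification reduces, through the standard upper hemicontinuity of $\arg\max$, to the single claim that $\rho\mapsto\Qfunc_t^{i,\rho}$ is continuous.

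The main obstacle is precisely this continuity. The information-state kernels $P_t^{i,\rho^{-i}}$ and rewards $r_t^{i,\rho^{-i}}$ are pinned down by Definition~\ref{def:infostate} only on compressed states and actions admissible under $\rho^{-i}$; both the conditional probabilities defining them and the admissible set itself vary with $\rho^{-i}$, and the relevant conditioning denominators can degenerate as some opponent action probability tends to $0$, so continuity cannot simply be read off. I would address this by constructing a canonical version of the beliefs $\Phi_t^{i,\rho^{-i}}$ supplied by Lemma~\ref{lem:msi} through a forward recursion in $t$ whose update is continuous in $\rho^{-i}$ wherever defined and which extends continuously to the whole simplex --- using that $\Comp_{t+1}^i$ is a fixed function of $(\Comp_t^i,Z_t^i)$ and that $(x_t,u_t)\mapsto\Pr(\comp_{t+1}^i\mid x_t,u_t)$ carries no strategy dependence --- so that $V_t^{i,\rho}$ and $\Qfunc_t^{i,\rho}$ become compositions of continuous maps; alternatively, one can prove the statement first for fully mixed $\rho$ and pass to a limit. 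A secondary point, essential to the logic though not a difficulty, is that calling $\rho^\star$ a BNE requires immunity to \emph{all} full-history deviations, not merely $\Comp^i$-based ones; this is exactly what the information-state machinery of Appendix~\ref{app:aux:infostate} delivers, and it is why MSI --- rather than a weaker within-$\Comp$-based optimality condition --- is the correct hypothesis for the theorem.
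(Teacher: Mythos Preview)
Your proposal is correct and follows the same overarching strategy as the paper: reduce each player's best-response problem against a $\Comp^{-i}$-based profile to a finite-horizon MDP on $\Comp_t^i$, build a best-response correspondence from the dynamic-programming $\Qfunc$-functions, and apply Kakutani. You also correctly isolate the crux as continuity of $\Qfunc_t^{i,\rho}$ in $\rho$, and the obstruction that the MSI kernels $P_t^{i,\rho^{-i}}$ and rewards $r_t^{i,\rho^{-i}}$ are pinned down only on admissible $(\comp_t^i,u_t^i)$.

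Of your two proposed fixes, the paper executes the second one explicitly: it works on the $\epsilon$-lower-bounded simplices $\Delta^\epsilon(\mathcal U_t^i)$, so that every $\comp_t^i$ is admissible and $P_t^{i,\rho^{-i}}, r_t^{i,\rho^{-i}}$ become ratios and sums of polynomials in $\rho^{-i}$ with nonvanishing denominators (computed against a fixed fully mixed auxiliary strategy $\hat g^i$ for player $i$); Berge then gives upper hemicontinuity of the $\epsilon$-constrained best-response correspondence, Kakutani yields a fixed point $\rho^{(\epsilon_n)}$, and a final application of Berge to the correspondence $\epsilon\mapsto\mathcal P^{\epsilon,i}$ shows that any accumulation point as $\epsilon_n\to 0$ is a genuine $\Comp$-based BNE. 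Your first fix --- building a canonical continuous extension of $\Phi_t^{i,\rho^{-i}}$ by forward recursion --- is a different route; note, however, that it tacitly assumes the sufficient condition of Lemma~\ref{lem:msi} rather than bare MSI, since Definition~\ref{def:msi} supplies only $P_t^{i,\rho^{-i}}$ and $r_t^{i,\rho^{-i}}$, not the beliefs $\Phi_t^{i,\rho^{-i}}$. The $\epsilon$-perturbation argument sidesteps this and works directly from Definition~\ref{def:msi}.
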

\begin{proof}
    See Appendix \ref{app:thm:msiexist}.
\end{proof}

The main idea for the proof of Theorem \ref{thm:msiexist} is {the definition of} a best-response correspondence through the dynamic program for an underlying single-agent control problem.

\begin{thm}\label{thm:usiequiv}
	If $\Comp=(\Comp^i)_{i\in\mathcal{I}}$ where $\Comp^i$ is unilaterally sufficient information for player $i$, then the set of $\Comp$-based BNE payoffs is the same as that of all BNE.
\end{thm}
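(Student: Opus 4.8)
The plan is to prove the two set inclusions separately. For the easy direction, I would show that every $\Comp$-based BNE payoff profile is an ordinary BNE payoff profile. This is essentially immediate: a $\Comp$-based strategy profile $\rho$ induces a full-information strategy profile $g$ via $g_t^i(h_t^i) = \rho_t^i(\comp_t^i)$, the payoffs are identical ($J^i(g) = J^i(\rho)$), and since the $\Comp$-based BNE requires no profitable deviation among \emph{all} full-history strategies (the crucial point emphasized in the text after the definition of $\Comp$-based equilibrium), $g$ is automatically a BNE. So $\{$ USI-based BNE payoffs $\} \subseteq \{$ all BNE payoffs $\}$ with essentially no work.

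The substantive direction is the reverse inclusion: given an arbitrary BNE $g^\ast$ with payoff profile $(J^i(g^\ast))_{i \in \mathcal{I}}$, I must construct a $\Comp$-based BNE $\rho^\ast$ achieving the same payoffs. The natural candidate is to ``project'' $g^\ast$ onto $\Comp$: define $\rho_t^{\ast,i}(u_t^i \mid \comp_t^i) := \E^{g^\ast}[g_t^{\ast,i}(u_t^i \mid H_t^i) \mid \Comp_t^i = \comp_t^i]$, i.e. the mixture of $g^{\ast,i}$'s action distributions weighted by the conditional law of $H_t^i$ given $\comp_t^i$. Here is where the first half of the USI property is exactly what is needed: by \eqref{eq:usicondition}, $\Pr^g(h_t^i \mid \comp_t^i) = F_t^{i,g^i}(h_t^i \mid \comp_t^i)$ does not depend on $g^{-i}$, so this conditional law is well-defined as a function of player $i$'s own strategy only, and in fact one can check it does not even depend on $g^i$ along the path actually played — I would verify that $\Pr^{\rho^{\ast,i}, \cdot}(h_t^i \mid \comp_t^i)$ agrees with $\Pr^{g^{\ast,i}, \cdot}(h_t^i \mid \comp_t^i)$, so the projection is self-consistent. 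Then I would argue, by an induction on $t$ using the sequential-update structure of $\Comp^i$ and the factorization \eqref{eq:usicondition}, that the joint law of $(X_t, H_t)_{t}$ — hence of the rewards — under $(\rho^{\ast,i}, g^{\ast,-i})$ coincides with that under $g^\ast$, giving $J^i(\rho^{\ast,i}, g^{\ast,-i}) = J^i(g^\ast)$; doing this for all $i$ and noting that the construction of $\rho^{\ast,i}$ only used $g^{\ast,i}$ (so it is internally consistent across players) yields $J^i(\rho^\ast) = J^i(g^\ast)$ for the fully projected profile $\rho^\ast$.

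It then remains to show $\rho^\ast$ is a BNE, i.e. no player $i$ has a profitable full-history deviation against $\rho^{\ast,-i}$. The key reduction: since $\Comp^j$ is USI for each $j \neq i$, player $i$ facing $\rho^{\ast,-i}$ faces the \emph{same} control problem (same transition kernels and reward functions for the state $(X_t, H_t^{-i})$) as when facing $g^{\ast,-i}$ — this is because the second half of USI, $\Phi_t^{i,g^{-i}}(x_t, h_t^{-i} \mid \comp_t^i)$, together with $\rho^{\ast,-i}$ being a projection of $g^{\ast,-i}$, makes the environment seen by $i$ indistinguishable. Actually the cleanest route is: by the easy direction applied in reverse, and by Lemma \ref{lem:msi} (USI for each player implies MSI, as noted in the Comparison subsection), $\Comp^i$ is an information state under $\rho^{\ast,-i}$, so player $i$'s best-response value against $\rho^{\ast,-i}$ equals the optimal value of a single-agent control problem with state $\Comp_t^i$; I would show this optimal value equals player $i$'s best-response value against $g^{\ast,-i}$ (using that $\rho^{\ast,-i}$ and $g^{\ast,-i}$ induce the same distribution over everything relevant to $i$), which is $J^i(g^\ast)$ by the BNE property of $g^\ast$. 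Hence $i$ cannot do better than $J^i(\rho^\ast) = J^i(g^\ast)$, so $\rho^\ast$ is a BNE.

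The main obstacle I anticipate is the measure-theoretic bookkeeping in establishing that the projected profile reproduces the original joint distribution and in matching the two best-response values — specifically, making precise that the conditional law $\Pr^g(h_t^i \mid \comp_t^i)$ is strategy-invariant in exactly the sense needed, handling admissibility (histories and compressed states with zero probability), and dealing with the footnote's congruency subtlety when $X_t, H_t^i, H_t^{-i}$ share components. The conceptual content is entirely carried by the two-part USI factorization \eqref{eq:usicondition}; the risk is in the induction that propagates it forward through the $\iota_t^i$ updates, where one must be careful that $\Comp_t^i$'s update depending only on $(\Comp_{t-1}^i, Z_{t-1}^i)$ interacts correctly with the factorization at each step.
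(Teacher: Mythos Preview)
Your overall strategy matches the paper's, but there are two concrete gaps in the execution. First, the claim that ``the joint law of $(X_t, H_t)_t$ under $(\rho^{\ast,i}, g^{\ast,-i})$ coincides with that under $g^\ast$'' is false: the marginal law of $H_t^i$ itself generally changes, because under $g^{\ast,i}$ the past actions $U_{1:t-1}^i$ are correlated with the ``discarded'' part of $H_{1:t-1}^i$ in a way they are not under $\rho^{\ast,i}$. What \emph{is} preserved is the law of $(X_t, H_t^{-i}, K_t^i)$ --- and hence of $(X_t, U_t)$, since the USI factorization gives $\Pr^g(U_t^i \mid X_t, H_t^{-i}, K_t^i) = \rho_t^{\ast,i}(U_t^i \mid K_t^i)$ --- which is enough to conclude $J^j(\rho^{\ast,i}, g^{\ast,-i}) = J^j(g^\ast)$ for \emph{every} $j \in \mathcal{I}$, not only $j = i$. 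This is the content of the paper's Lemma~\ref{lem:usiclaim}, and the ``all $j$'' part is essential.

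Second, and more seriously, your passage from $J^i(\rho^{\ast,i}, g^{\ast,-i}) = J^i(g^\ast)$ for each $i$ to $J^i(\rho^\ast) = J^i(g^\ast)$ for the simultaneously projected profile is not justified by ``the construction of $\rho^{\ast,i}$ only used $g^{\ast,i}$''; that observation does not let you switch all players at once. The paper instead replaces players \emph{one at a time} (Lemma~\ref{lem:usireplace}): starting from a BNE $g$, switch player $1$ to $\rho^{\ast,1}$; by the all-$j$ payoff preservation, $(\rho^{\ast,1}, g^{-1})$ has the same payoff profile, and one checks it is still a BNE (for player $1$ because $\rho^{\ast,1}$ attains the best-response value; for each $j \neq 1$ because $J^j(\tilde{g}^j, \rho^{\ast,1}, g^{-\{1,j\}}) = J^j(\tilde{g}^j, g^{-j}) \le J^j(g)$ by Lemma~\ref{lem:usiclaim} again). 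Iterating over $i$ yields the $\Comp$-based BNE $\rho^\ast$. Your step~4 sketch can be made to work along these lines, but it too implicitly requires the all-$j$ preservation applied iteratively; the appeal to MSI/Lemma~\ref{lem:msi} alone does not give you that the best-response value against $\rho^{\ast,-i}$ equals that against $g^{\ast,-i}$.
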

\begin{proof}
    See Appendix \ref{app:thm:usiequiv}.
\end{proof}

The intuition behind Theorem \ref{thm:usiequiv} is that one can think of player $i$'s information {that is not included in} the unilaterally sufficient information $\Comp_t^i$ as a private randomization device for player $i$: When player $i$ is using a strategy that depends on their information outside of $\Comp_t^i$, it is as if they are using a randomized $\Comp^i$-based strategy. The main idea for the proof of Theorem \ref{thm:usiequiv} is to show that for every BNE strategy profile $g$, player $i$ can switch to an ``equivalent'' randomized $\Comp^i$-based strategy $\rho^i$ while maintaining the equilibrium and payoffs.\footnote{Besides the connection of USI to sufficient statistics, the idea behind the construction of the equivalent $\Comp^i$-based strategy is also closely related to the idea of the Rao--Blackwell estimator \citep{kay1993fundamentals}, where a new estimator is obtained by taking the conditional expectation of the old estimator given the sufficient statistics.} The theorem then follows from iteratively switching the strategy of each player.

Example \ref{ex:msinotusi} can also be used to illustrate that when $\Comp$ is an MSI but not an USI, $\Comp$-based BNE exist but $\Comp$-based strategies do not attain all equilibrium payoffs. 
\theoremstyle{thmstylethree}
\newtheorem*{example*}{Example \ref{ex:msinotusi}}
\begin{example*}[Continued]
    {In this example, $\Comp_t^A = H_t^A, \Comp_t^B = \varnothing$ for $t=1,2$ is MSI. Furthermore, it can be shown that the following strategy profiles are BNE of the game: (E1) Alice plays $U_1^A=1$ at time 1 and Bob plays $U_2^B=1$ irrespective of Alice's action at time 1; and (E2) Alice plays $U_1^A=0$ at time 1; Bob plays $U_2^B = 1$ if $U_1^A=0$ and $U_2^B=-1$ if $U_1^A=1$. Equilibrium (E1) is a $\Comp$-based equilibrium. However, (E2) cannot be attained by $\Comp$-based strategy profile for the following reason: In any $\Comp$-based equilibrium, Bob plays the same mixed strategy irrespective of Alice's action and his expected payoff at the end of the game is $-1$. At (E2), Bob's expected payoff at the end of the game is $0$. Therefore, the payoff at (E2) cannot be attained by any $\Comp$-based strategy profile.} 
\end{example*}

\subsection{Information-State Based Sequential Equilibrium}\label{sec:sieqref} 
\begin{thm}\label{thm:msiseexist}
	If $\Comp$ is mutually sufficient information, then there exists at least one $\Comp$-based sequential equilibrium.
\end{thm}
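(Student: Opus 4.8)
The plan is to run the Kreps--Wilson perturbation argument \emph{inside} the space of $\Comp$-based strategies, leveraging the single-agent control structure that MSI provides (the same structure underlying Theorem~\ref{thm:msiexist}) to keep the perturbed best-response correspondences well-behaved. Fix $\epsilon>0$ with $\epsilon\max_{i,t}|\mathcal{U}_t^i|<1$ and let $\mathcal{R}^{i,\epsilon}$ be the set of $\Comp^i$-based behavioral strategies that put mass at least $\epsilon$ on every action at every $\comp_t^i$; this is nonempty, compact and convex, and so is $\prod_{i}\mathcal{R}^{i,\epsilon}$. Given $\rho^{-i}$ in the perturbed class, MSI says $\Comp^i$ is an information state under $\rho^{-i}$, so player $i$ faces a genuine finite-horizon MDP with state $\Comp_t^i$, kernels $P_t^{i,\rho^{-i}}$ and rewards $r_t^{i,\rho^{-i}}$; let $\mathrm{BR}^{i,\epsilon}(\rho^{-i})$ be the set of $\epsilon$-constrained $\Comp^i$-based policies optimal in this MDP. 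Backward induction produces well-defined constrained state-action value functions $Q_t^i[\rho^{-i}]$ independent of the chosen optimal policy, and $\mathrm{BR}^{i,\epsilon}(\rho^{-i})$ is the product over $(t,\comp_t^i)$ of the sets of $\epsilon$-constrained distributions $\mu$ maximizing $\sum_u\mu(u)Q_t^i[\rho^{-i}](\comp_t^i,u)$ --- a product of nonempty convex polytopes. On the perturbed domain every history is admissible, so $P_t^{i,\rho^{-i}}$ and $r_t^{i,\rho^{-i}}$ may be chosen as rational functions of the strategy probabilities with denominators bounded away from $0$, hence continuous in $\rho^{-i}$; by backward induction so is $Q_t^i[\rho^{-i}]$, which gives $\mathrm{BR}^{i,\epsilon}$ a closed graph. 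Kakutani's theorem then yields a fixed point $\rho^{(\epsilon)}$, an equilibrium of the $\epsilon$-perturbed $\Comp$-based game.

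Next I let $\epsilon_n\downarrow 0$. The profiles $\rho^{(\epsilon_n)}$ lie in a compact set, and the full-history reward-to-go functions $Q^{(n)}$ consistent (in the sense of Definition~\ref{def:KSEinmaintext}) with the behavioral profile $g^{(n)}$ induced by $\rho^{(\epsilon_n)}$ are uniformly bounded; so along a subsequence $\rho^{(\epsilon_n)}\to\rho^*$, $Q^{(n)}\to Q^*$, and $g^{(n)}\to g^*$, the behavioral profile induced by $\rho^*$. Each $g^{(n)}$ is fully mixed (every action gets mass $\ge\epsilon_n$) and $Q^{(n)}$ is consistent with it by construction, so $Q^*$ is fully consistent with $g^*$; it remains to verify sequential rationality.

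For a $\Comp$-based profile, the full-history reward-to-go depends on $(h_\tau^i,u_\tau^i)$ only through $(\comp_\tau^i,u_\tau^i)$ --- a backward induction using the two defining properties of an information state (Definition~\ref{def:infostate}; cf. Appendix~\ref{app:aux:infostate}) --- so $Q_\tau^{(n),i}(h_\tau^i,u_\tau^i)=\tilde Q_\tau^{(n),i}(\comp_\tau^i,u_\tau^i)$ for some $\tilde Q_\tau^{(n),i}$. Since $\rho^{i,(\epsilon_n)}$ is optimal in player $i$'s perturbed MDP and every state $\comp_\tau^i$ is reached under the fully mixed $g^{(n)}$, this $\tilde Q^{(n)}$ coincides with $Q^i[\rho^{-i,(\epsilon_n)}]$, whence the ``free mass'' $1-\epsilon_n|\mathcal{U}_\tau^i|$ of $\rho_\tau^{i,(\epsilon_n)}(\comp_\tau^i)$ sits on $\arg\max_u\tilde Q_\tau^{(n),i}(\comp_\tau^i,u)$. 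Passing to the limit, $Q_\tau^{*,i}(h_\tau^i,u)=\tilde Q_\tau^{*,i}(\comp_\tau^i,u)$, and if $u_0$ is strictly suboptimal for $\tilde Q_\tau^{*,i}(\comp_\tau^i,\cdot)$ then it is strictly suboptimal for $\tilde Q_\tau^{(n),i}(\comp_\tau^i,\cdot)$ for all large $n$, so $\rho_\tau^{i,(\epsilon_n)}(u_0\mid\comp_\tau^i)=\epsilon_n\to 0$ and $\rho_\tau^{*,i}(u_0\mid\comp_\tau^i)=0$. Thus $\mathrm{supp}(g_\tau^{*,i}(h_\tau^i))\subseteq\arg\max_u Q_\tau^{*,i}(h_\tau^i,u)$ at every $h_\tau^i$, i.e. $g^*$ is sequentially rational under $Q^*$, and $(g^*,Q^*)$ is a $\Comp$-based sequential equilibrium.

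The main obstacles are: (i) establishing that the perturbed best-response correspondences are convex-valued with closed graph --- which rests on the MDP reformulation being valid uniformly over the perturbed class and on continuity of its parameters, the latter being exactly where the $\epsilon$-perturbation is essential (it keeps the relevant denominators bounded away from $0$); and (ii) transferring sequential rationality through the double limit, which requires the MSI identity $Q_\tau^{(n),i}(h_\tau^i,\cdot)=\tilde Q_\tau^{(n),i}(\comp_\tau^i,\cdot)$ to tether the limiting full-history value function to the information-state value function, together with the observation that strict suboptimality is an open condition. The dynamic-programming and Kakutani bookkeeping can be imported essentially verbatim from the proof of Theorem~\ref{thm:msiexist}.
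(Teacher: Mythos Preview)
Your proposal is correct and follows essentially the same route as the paper's proof: construct $\epsilon$-perturbed $\Comp$-based fixed points via the dynamic-program best-response correspondence and Kakutani (exactly the machinery of Theorem~\ref{thm:msiexist}), define $\Qfunc^{(n)}$ as the reward-to-go consistent with the fully mixed $\rho^{(\epsilon_n)}$, pass to a limit, and verify sequential rationality by showing $\Qfunc_\tau^{(n),i}(h_\tau^i,u_\tau^i)$ factors through $(\comp_\tau^i,u_\tau^i)$ and equals the perturbed MDP $Q$-function. The only cosmetic difference is that the paper invokes Berge's Maximum Theorem to push the $\arg\max$ through the limit, whereas you spell out the equivalent ``strict suboptimality is open / free-mass sits on the $\arg\max$'' argument directly.
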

\begin{proof}
    See Appendix \ref{app:thm:msiseexist}.
\end{proof}

The proof of Theorem \ref{thm:msiseexist} follows steps similar to that of Theorem \ref{thm:msiexist}. The difference is that we explicitly construct a sequence of conjectured history-action value functions $\Qfunc^{(n)}$ (as defined in Definition \ref{def:KSEinmaintext}) using the dynamic program of player $i$'s decision problem. Then we argue that the strategies and the conjectures satisfies Definition \ref{def:KSEinmaintext}.

\begin{thm}\label{thm:usiseequiv}
	If $\Comp=(\Comp^i)_{i\in\mathcal{I}}$ where $\Comp^i$ is unilaterally sufficient information for player $i$, then the set of $\Comp$-based sequential equilibrium payoffs is the same as that of all sequential equilibria.
\end{thm}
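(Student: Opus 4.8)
The plan is to mirror the structure of the proof of Theorem \ref{thm:usiequiv}, upgrading the "switch to an equivalent $\Comp^i$-based strategy" argument so that it also preserves sequential rationality and full consistency, not merely the payoffs. I would prove the two inclusions separately. One direction --- every $\Comp$-based SE payoff profile is an SE payoff profile --- is essentially immediate, since a $\Comp$-based SE is by definition an SE (the sequential rationality and full-consistency conditions in Definition \ref{def:KSEinmaintext} are stated over full-history strategies, and a $\Comp^i$-based strategy is a special full-history strategy). The substantive direction is: given a sequential equilibrium $(g, \Qfunc)$, produce a $\Comp$-based sequential equilibrium with the same payoff profile. I would do this one player at a time: fix a player $i$, replace $g^i$ by the $\Comp^i$-based behavioral strategy $\rho^i$ defined by $\rho_t^i(u_t^i \mid \comp_t^i) := \sum_{h_t^i} F_t^{i, g^i}(h_t^i \mid \comp_t^i)\, g_t^i(u_t^i \mid h_t^i)$ --- the "Rao--Blackwellized" version of $g^i$ --- and show the resulting profile is still a sequential equilibrium with unchanged payoffs, then iterate over all players.

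The key steps, in order, are as follows. \emph{Step 1 (payoff and best-response invariance).} Using the USI factorization \eqref{eq:usicondition}, show that replacing $g^i$ by $\rho^i$ leaves unchanged all the relevant conditional laws: the distribution of $(X_t, H_t^{-i})$ given $\comp_t^i$ is governed by $\Phi_t^{i, g^{-i}}$ regardless of whether player $i$ uses $g^i$ or $\rho^i$, so $J^j$ for every $j$ is unchanged, and the continuation values that other players face are unchanged --- hence the other players' strategies remain best responses. \emph{Step 2 (sequential rationality of $\rho^i$).} I need an appropriate collection of history-action value functions for the new profile. Here I would invoke the alternative characterization of SE from Appendix \ref{app:SE} and the single-agent-control viewpoint: since $\Comp^i$ is an information state for player $i$ under any $\Comp^{-i}$-based (indeed any) strategy profile of the others (USI $\Rightarrow$ the hypotheses behind Definition \ref{def:infostate} hold), player $i$'s decision problem against $g^{-i}$ (or its $\Comp$-based replacement) admits optimal $\Comp^i$-based value functions $Q_t^{i}(\comp_t^i, u_t^i)$; I then set $\Qfunc_t^i(h_t^i, u_t^i) := Q_t^i(\comp_t^i, u_t^i)$ and check that $\rho^i$, which by construction mixes only over actions that are optimal given $\comp_t^i$, is sequentially rational under this $\Qfunc$. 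That the original $g^i$ was sequentially rational forces $\mathrm{supp}(g_t^i(h_t^i))$ into the argmax of the original consistent $\Qfunc_t^i$; the point is that on the relevant histories this argmax coincides with $\arg\max_{u} Q_t^i(\comp_t^i, u)$, which is exactly where $\rho_t^i(\comp_t^i)$ is supported. \emph{Step 3 (full consistency).} Take a fully-mixed sequence $(g^{(n)}, \Qfunc^{(n)}) \to (g, \Qfunc)$ witnessing consistency of the original equilibrium; define $\rho^{i,(n)}$ by Rao--Blackwellizing $g^{i,(n)}$ with the ($n$-independent, by USI) kernel $F_t^{i, \cdot}$ and show the perturbed profiles $(\rho^{i,(n)}, g^{-i,(n)})$ are fully mixed, their consistent value functions converge to the value functions from Step 2, and the limit is $(\rho^i, g^{-i}, \Qfunc^{\text{new}})$ --- so the new profile is fully consistent. \emph{Step 4 (iteration).} Repeat Steps 1--3 for each player; because each replacement preserves the SE property and all payoffs, the final profile is a $\Comp$-based SE with the same payoff profile as $g$.

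The main obstacle I anticipate is Step 3, the preservation of full consistency. Payoff preservation (Step 1) is a direct computation with \eqref{eq:usicondition}, and Step 2 is conceptually clean once the information-state machinery is in place, but for consistency one must produce a legitimate converging sequence of \emph{fully mixed} perturbations of the already-perturbed profile whose induced (fully consistent) value functions have the right limit. Two subtleties make this delicate: first, the map $H_t^i \mapsto \Comp_t^i$ is not injective (the Remark after the objectives emphasizes $\Comp_{t-1}^i$ may not be recoverable from $\Comp_t^i$), so $\rho^{i,(n)}$ viewed as a full-history strategy need not be fully mixed in the naive sense along all of player $i$'s information sets --- one must check full mixing is inherited correctly through the compression and, if necessary, add a vanishing uniform perturbation in $\comp_t^i$-space and argue the limits are unaffected; second, one must verify that the $F_t^{i,g^{i,(n)}}$ factor is genuinely independent of $n$ (equivalently of $g^{i,(n)}$, as USI guarantees $F_t^{i,g^i}$ depends only on $g^i$ in a way that, combined with the factorization, makes the belief-update for $\Comp_t^i$ strategy-independent), so that $\rho^{i,(n)} \to \rho^i$ and the limiting $\Qfunc$ matches the one from Step 2. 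Handling these points carefully --- most naturally by leaning on the alternative SE characterizations in Appendix \ref{app:SE}, which are presumably phrased in a way that makes the compression-invariance transparent --- is where the real work lies.
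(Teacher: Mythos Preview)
Your approach is essentially the paper's: replace each $g^i$ by its Rao--Blackwellized $\Comp^i$-based strategy, verify that the resulting profile remains a sequential equilibrium with the same payoffs, and iterate over players. The paper carries out exactly these steps (Lemmas \ref{lem:prequiv}, \ref{lem:selfKfunc}, \ref{lem:usisereplace}), keeping the \emph{same} $\Qfunc$ for the new profile and showing (Lemma \ref{lem:selfKfunc}) that $\Qfunc_t^i(h_t^i,u_t^i)$ already depends on $h_t^i$ only through $\comp_t^i$.

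There is, however, one genuine misconception in your Step~3. You write that the kernel $F_t^{i,g^{i,(n)}}$ is ``$n$-independent, by USI,'' and later that one must ``verify that the $F_t^{i,g^{i,(n)}}$ factor is genuinely independent of $n$.'' This is false: the USI factorization \eqref{eq:usicondition} says $F_t^{i,g^i}$ is independent of $g^{-i}$, but it \emph{does} depend on $g^i$, hence on $n$ through $g^{(n),i}$. Consequently, the directly-defined $\rho^i$ (Rao--Blackwellizing $g^i$ with $F_t^{i,g^i}$) need not be the limit of $(\rho^{(n),i})_n$, particularly at $\comp_t^i$ that are off-path under $g$, where $F_t^{i,g^i}(\cdot\mid\comp_t^i)$ is not even pinned down by Bayes rule. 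The paper sidesteps this entirely by \emph{defining} $\rho^i$ as a subsequential limit of $(\rho^{(n),i})_n$, which gives $\rho^{(n),i}\to\rho^i$ for free; sequential rationality of this limit is then obtained via the support inclusion $\mathrm{supp}(\rho_t^i(\comp_t^i))\subseteq\bigcup_{\tilde h_t^i:\,\tilde\comp_t^i=\comp_t^i}\mathrm{supp}(g_t^i(\tilde h_t^i))\subseteq\arg\max_{u}\hat\Qfunc_t^i(\comp_t^i,u)$. By contrast, your worry that $\rho^{(n),i}$ might fail to be fully mixed is a non-issue: since $g^{(n),i}$ is fully mixed and $F_t^{i,g^{(n),i}}(\cdot\mid\comp_t^i)$ is a probability distribution, every action gets positive weight under $\rho^{(n),i}_t(\comp_t^i)$.
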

\begin{proof}
    See Appendix \ref{app:thm:usiseequiv}.
\end{proof}

The proof of Theorem \ref{thm:usiseequiv} mostly follows the same ideas for Theorem \ref{thm:usiequiv}: for each sequential equilibrium strategy profile $g$, we construct an ``equivalent'' $\Comp^i$-based strategy $\rho^i$ for player $i$ with similar construction as in Theorem \ref{thm:usiequiv}. The critical part is to show that $\rho^i$ is still sequentially rational under the concept of sequential equilibrium. 


\section{Discussion}\label{sec:siapplications}
{In this section we first investigate if USI can preserve the set of equilibrium payoffs achievable under perfect recall when refinements of BNE other than SE, namely, various versons of Perfect Bayesian Equilibrium (PBE), are considered. Then, we identify MSI and USI in specific models that appeared in the literature.}

\subsection{Other Equilibrium Concepts}
{We first present Example \ref{ex:wpbeproblem} to show that the result of Theorem \ref{thm:usiseequiv} is not true when we replace SE with the concept of weak Perfect Bayesian Equilibrium (wPBE) \citep{mas1995microeconomic} which is a refinement of BNE that is weaker than SE. Then, we discuss how the result of Proposition \ref{prop:exwpbe}, that is, part of Example \ref{ex:wpbeproblem} and appears below, applies or does not apply to other versions of PBE, namely, those defined in \cite{watson2017general} and \cite{battigalli1996strategic}.}

The concept of wPBE is defined as follows: 
Let $(g, \mu)$ be an assessment, where $g$ is a behavioral strategy profile as specified in Section \ref{sec:suffinfo:gamemodel} and $\mu$ is a system of functions representing player's beliefs in the extensive-form game representation. Then, $(g, \mu)$ is said to be a weak perfect Bayesian equilibrium 
\citep{mas1995microeconomic} if $g$ is sequentially rational to $\mu$ and $\mu$ satisfies Bayes rule with respect to $g$ on the equilibrium path. The concept of wPBE does not impose any restriction on beliefs off the equilibrium path.

\begin{example}\label{ex:wpbeproblem}
	Consider a two-stage game with two players: Bob (B) moves at stage 1; Alice (A) and Bob move simultaneously at stage 2. Let $X_1^A, X_1^B$ be independent uniform random variables on $\{-1, +1\}$ representing the types of the players. The state satisfies $X_1=(X_1^A, X_1^B)$ and $X_2=X_1^B$.  The set of actions are $\mathcal{U}_1^B = \{-1, +1\}$, $\mathcal{U}_2^A = \mathcal{U}_2^B = \{-1, 0, +1\}$. The information structure is given by
	\begin{align}
		H_1^A &= X_1^A,\quad H_1^B = X_1^B;\\
		H_2^A &= (X_1^A, U_1^B),\quad H_2^B = (X_1^B, U_1^B),
	\end{align}
	i.e. types are private and actions are observable.
	
	The instantaneous payoffs of Alice are given by
	\begin{align*}
		R_1^A &= \begin{cases}
			-1,&\text{if }U_1^B = -1;\\
			0,&\text{otherwise},
		\end{cases}\qquad
		R_2^A = \begin{cases}
			1,&\text{if }U_2^A = X_2\text{ or }U_2^A = 0;\\
			0,&\text{otherwise}.
		\end{cases}.
	\end{align*}
	
	The instantaneous payoffs of Bob are given by
	\begin{align*}
		R_1^B &= \begin{cases}
			0.2,&\text{if }U_1^B = -1;\\
			0,&\text{otherwise},
		\end{cases}\qquad
		R_2^B = \begin{cases}
			-1,&\text{if }U_2^A = U_2^B;\\
			0,&\text{otherwise}.
		\end{cases}.
	\end{align*}
	
	Define $\Comp_1^A = X_1^A$ and $\Comp_2^A = U_1^B$. It can be shown that $\Comp^A$ is unilaterally sufficient information for Alice.\footnote{In fact, this example can be seen as an instance of the model described in Example \ref{ex:ouyang} which we introduce later.} Set $\Comp_t^B = H_t^B$, i.e. no compression for Bob's information. Then, $\Comp^B$ is trivially unilaterally sufficient information for Bob.
	
	\begin{prop}\label{prop:exwpbe}
		In the game defined in Example \ref{ex:wpbeproblem}, the set of $\Comp$-based wPBE payoffs is a proper subset of that of all wPBE payoffs.
	\end{prop}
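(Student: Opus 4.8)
The plan is to prove this by (i) computing the set of all wPBE payoffs of the game in Example \ref{ex:wpbeproblem}, and (ii) showing that at least one of these payoff profiles cannot be attained by any $\Comp$-based wPBE. The key observation driving the argument is that Alice's compressed information $\Comp_2^A = U_1^B$ deliberately discards her type $X_1^A$, so at stage 2 a $\Comp^A$-based strategy for Alice must play the same mixed action after both realizations of $X_1^A$. I would first analyze Alice's stage-2 decision: since $R_2^A$ rewards her for matching $X_2 = X_1^B$ (which she does not observe) or playing $0$, her optimal stage-2 play is governed entirely by her belief about $X_1^B$ given her information and the prevailing beliefs $\mu$. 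Next I would analyze Bob's stage-2 play: $R_2^B$ punishes him for matching Alice's action, a matching-pennies-type interaction, so in any equilibrium Bob randomizes to keep Alice indifferent (or exploits a pure Alice action). Then I would pin down Bob's stage-1 choice of $U_1^B$, noting that $U_1^B = -1$ gives Bob an immediate bonus of $0.2$ but triggers $R_1^A = -1$ for Alice; the freedom of wPBE to assign arbitrary off-path beliefs after $U_1^B$ is what creates the gap.

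The crux is to exhibit a specific wPBE payoff profile that survives under full information but is excluded under $\Comp$-based strategies. I expect the relevant profile to be one where, on the equilibrium path, Bob plays $U_1^B = +1$ (so $R_1^A = 0$), sustained by an off-path belief after $U_1^B = -1$ that makes Bob's continuation payoff at stage 2 low enough to deter the deviation — such a belief is permissible under wPBE precisely because it is off-path. Crucially, for this deterrence to work, Alice's stage-2 continuation behavior after $U_1^B = -1$ must differ from her behavior after $U_1^B = +1$ in a way that punishes Bob, and this is where full-information strategies and $\Comp$-based strategies part ways: with full information Alice can condition her stage-2 randomization on $X_1^A$ in a way that is sequentially rational under suitably chosen $\mu$ yet is collectively bad for Bob, whereas under a $\Comp^A$-based strategy Alice cannot distinguish her types at stage 2, which forces a ``pooled'' continuation that fails to deliver the required punishment — or forces Bob's off-path incentive to deviate. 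I would make this precise by writing out the indifference/best-response conditions in both cases and showing the target payoff is feasible in one and infeasible in the other.

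The main obstacle, I expect, is the off-path bookkeeping: wPBE only constrains $\mu$ on the equilibrium path, so I must carefully enumerate which assessments $(g,\mu)$ are legitimate wPBE in each regime, being careful that the information sets reached after $U_1^B = -1$ are genuinely off-path under the candidate equilibrium, and that sequential rationality is checked at every information set (including Bob's stage-1 node and both players' stage-2 nodes) under the posited beliefs. A secondary technical point is handling the $\Comp$-based side correctly: I must verify that $\Comp^A$ is indeed USI for Alice (claimed in the example, and reducible to the structure of Example \ref{ex:ouyang}), and then argue within the restricted strategy space that no choice of $\Comp$-based $g$ together with any belief system $\mu$ consistent with wPBE reproduces the target payoff. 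The inclusion itself (every $\Comp$-based wPBE payoff is a wPBE payoff) is immediate since a $\Comp$-based strategy profile is in particular a strategy profile and the wPBE conditions are unchanged; so the entire content is the strictness, i.e. producing and verifying the one separating payoff profile.
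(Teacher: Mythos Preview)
Your plan is correct and matches the paper's approach: exhibit one wPBE with $U_1^B=+1$ on path (sustained by Alice's off-path, type-dependent play---the paper uses off-path beliefs $\mu_2^A(x_1^A,-1)=\delta_{x_1^A}$ so that Alice rationally mixes $X_1^A$ w.p.\ $\tfrac{2}{3}$ and $0$ w.p.\ $\tfrac{1}{3}$, giving a uniform marginal over $\{-1,0,+1\}$ that costs Bob $-\tfrac{1}{3}$), and then bound all $\Comp$-based wPBE by noting that no single belief on $X_1^B$ makes all three of Alice's stage-2 actions optimal, so a sequentially rational $\Comp^A$-based mixed action has support of size at most two, Bob secures $0$ at stage 2, strictly prefers $U_1^B=-1$, and Alice's total payoff is at most $0$. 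Computing the full set of wPBE payoffs is unnecessary.
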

    \begin{proof}
        See Appendix \ref{app:prop:exwpbe}.
    \end{proof}
	
	Note that since any wPBE is first and foremost a BNE, by Theorem \ref{thm:usiequiv}, any general strategy based wPBE payoff profile can be attained by a $\Comp$-based BNE. However, Proposition \ref{prop:exwpbe} implies that there exists a wPBE payoff profile such that none of the $\Comp$-based BNEs attaining this payoff profile are wPBEs.
\end{example}

Intuitively, the reason for some wPBE payoff profiles to be unachievable under $\Comp$-based wPBE payoffs in this example can be explained as follows. The state $X_1^A$ in this game can be thought of as a private randomization device of Alice that is payoff irrelevant (i.e. a private coin flip) that should not play a role in the outcome of the game. However, under the concept of wPBE, the presence of $X_1^A$ facilitates Alice to implement off-equilibrium strategies that are otherwise not sequentially rational. This holds due to the following: For a fixed realization of $U_1^B$, the two realizations of $X_1^A$ give rise to two different information sets. Under the concept of wPBE, if the two information sets are both off equilibrium path, Alice is allowed to form different beliefs and hence justify the use of different mixed actions under different realizations of $X_1^A$. Therefore, the presence of $X_1^A$ can expand Alice's set of ``justifiable'' mixed actions off-equilibrium. By restricting Alice to use $\Comp^A$-based strategies, i.e. choosing her mixed action not depending on $X_1^A$, Alice loses the ability to use some mixed actions off-equilibrium in a ``justifiable'' manner, and hence losing her power to sustain certain equilibrium outcomes. This phenomenon, however, does not happen under the concept of sequential equilibrium, since SE (quite reasonably) would require Alice to use the same belief on two information sets if they only differ in the realization of $X_1^A$.

With similar approaches, one can establish the analogue of Proposition \ref{prop:exwpbe} for the perfect Bayesian equilibrium concept defined in \cite{watson2017general} (which we refer to as ``Watson's PBE''). Simply put, this is since Watson's PBE imposes conditions on the belief update for each pair of successive information states in a separated manner. There exist no restrictions \emph{across} different pairs of successive information states. As a result, for a fixed realization of $U_1^B$, Alice is allowed to form different beliefs under two realizations of $X_1^A$ just like under wPBE as long as both beliefs are reasonable on their own. In fact, in the proof of Proposition \ref{prop:exwpbe}, the two off-equilibrium belief updates both satisfy Watson's condition of plain consistency \citep{watson2017general}. 

Approaches similar to those in the proof of Proposition \ref{prop:exwpbe}, however, do not apply to the PBE concept defined with the independence property of conditional probability systems specified in \cite{battigalli1996strategic} (which we refer to as ``Battigalli's PBE''). In fact, Battigalli's PBE is equivalent to sequential equilibrium if the dynamic game consists of only two strategic players \citep{battigalli1996strategic}. We conjecture that in general games with three or more players, if $\Comp$ is USI, then the set of all $\Comp$-based Battigalli's PBE payoffs is the same as that of all Battigalli's PBE payoffs. However, establishing this result can be difficult due to the complexity of Battigalli's conditions. 

\subsection{Information States in Specific Models}
In this section, we identify MSI and USI in specific game models studied in the literature. Whereas we recover some existing results using our framework, we also develop some new results.

\begin{example}\label{ex:repeatedgames}
	Consider stateless dynamic games with observable actions, i.e. $X_t=\varnothing, H_1^i=\varnothing, Z_t^i=U_t$ for all $i\in\mathcal{I}$. One instance of such games is the class of repeated games \citep{fudenberg1991game}. In this game, $H_t^i=U_{1:t-1}$ for all $i\in\mathcal{I}$. Let $(\iota_t^0)_{t\in\mathcal{T}}$ be an arbitrary, common update function and let $\Comp^i=\Comp^0$ be generated from $(\iota_t^0)_{t\in\mathcal{T}}$. Then $\Comp$ is mutually sufficient information since Lemma \ref{lem:msi} is trivially satisfied. As a result, Theorem \ref{thm:msiexist} holds for $\Comp$, i.e. there exist at least one $\Comp$-based BNE.
	
	However, in general, $\Comp$ is not unilaterally sufficient information. To see that, one can consider the case when player $j\neq i$ is using a strategy that chooses different mixed actions for different realizations of $U_{1:t-1}$. In this case $\Pr^{g^i, g^{-i}}(\tilde{\comp}_{t+1}^i|h_t^i, u_t^i)$ would potentially depend on $U_{1:t-1}$ as a whole. This means that $\Comp^i$ is not an information state for player $i$ under $g^{-i}$, which violates Lemma \ref{lem:usiisinfostate}.
	
	Furthermore for $\Comp$, the result of Theorem \ref{thm:usiequiv} does not necessarily hold, i.e. the set of $\Comp$-based BNE payoffs may not be the same as that of all BNE. Example \ref{ex:msinotusi} can be used to show this.
	
\end{example}

\begin{example}\label{ex:maskintirole}
	The model of \citep{maskin2001markov} is a special case of our dynamic game model where $Z_t^i=(X_{t+1}, U_t)$, i.e. the (past and current) states and past actions are observable. In this case, $\Comp=(\Comp_t^i)_{t\in\mathcal{T}, i\in\mathcal{I}}$ with $\Comp_t^i=X_t$ is mutually sufficient information; note that $H_t^i=(X_{1:t}, U_{1:t-1})$. Consider a $\Comp^{-i}$-based strategy profile $\rho^{-i}$, i.e. $\rho_t^j\colon\mathcal{X}_t\mapsto \Delta(\mathcal{U}_t^j)$ for $t\in\mathcal{T}, j\in\mathcal{I}\backslash\{i\}$. We have
	\begin{align}
		\Pr^{g^i, \rho^{-i}}(\tilde{x}_t, \tilde{\comp}_t^{-i}|h_t^i) &=\Pr^{g^i, \rho^{-i}}(\tilde{x}_t, \tilde{\comp}_t^{-i}|x_{1:t}, u_{1:t-1})\\
		&= \bm{1}_{\{\tilde{x}_t = x_t\} } \prod_{j\neq i} \bm{1}_{\{\tilde{\comp}_t^j = x_t\} }\\
		&=:\Phi_t^{i, \rho^{-i}}(\tilde{x}_t, \tilde{\comp}_t^{-i}|x_t).
	\end{align}
	
	Hence $\Comp$ is mutually sufficient information by Lemma \ref{lem:msi}. As a result, there exists at least one $\Comp$-based BNE.
	
	Similar to Example \ref{ex:repeatedgames}, in general, $\Comp$ is not unilaterally sufficient information, and the set of $\Comp$-based BNE payoffs may not be the same as that of all BNE. The argument for both claims can be carried out in an analogous way to Example \ref{ex:repeatedgames}.
\end{example}

\begin{example}\label{ex:nayyar}
	The model of \cite{nayyar2013common} is a special case of our dynamic model satisfying the following conditions.
	\begin{enumerate}[(1)]
		\item The information of each player $i$ can be separated into the common information $H_t^0$ and private information $L_t^i$, i.e. there exists a strategy-independent bijection between $H_t^i$ and $(H_t^0, L_t^i)$ for all $i\in\mathcal{I}$. 
		\item The common information $H_t^0$ can be sequentially updated, i.e.
		\begin{align}
			H_{t+1}^0 &= (H_t^0, Z_t^0),
		\end{align}
		where $Z_t^0 = \bigcap_{i\in\mathcal{I}} Z_t^i$ is the common part of the new information of all players at time $t$.
		
		\item The private information $L_t^i$ can be sequentially updated, i.e. there exist functions $(\zeta_t^i)_{t=0}^{T-1}$ such that
		\begin{align}
			L_{t+1}^i &= \zeta_t^i(L_t^i, Z_t^i).
		\end{align}
	\end{enumerate}

	In \cite{nayyar2013common}, the authors impose the following assumption.
	\begin{assump}[Strategy independence of beliefs]\label{assump:cibstraindep}
		There exist a function $P_t^0$ such that
		\begin{align}
			\Pr^{g}(x_t, l_t|h_t^0) = P_t^0(x_t, l_t|h_t^0),
		\end{align}
		for all behavioral strategy profiles $g$ whenever $\Pr^g(h_t^0) > 0$, where $l_t=(l_t^i)_{i\in\mathcal{I}}$.
	\end{assump}
	
	In this model, if we set $\Comp_t^i=(\Pi_t, L_t^i)$ where $\Pi_t \in \Delta(\mathcal{X}_t\times \mathcal{S}_t)$ is a function of $H_t^0$ defined through
	\begin{equation}
		\Pi_t(x_t, l_t) :=  P_t^0(x_t, l_t|H_t^0),
	\end{equation}
	then $\Comp=(\Comp^i)_{i\in\mathcal{I}}$ is mutually sufficient information. First note that $\Comp_t^i$ can be sequentially updated as $\Pi_t$ can be sequentially updated using Bayes rule. Then
	\begin{align}
		\Pr^{g^i, \rho^{-i}}(\tilde{x}_t, \tilde{l}_t^{-i}|h_t^i) &= \Pr^{g^i, \rho^{-i}}(\tilde{x}_t, \tilde{l}_t^{-i}|h_t^0, l_t^i)\\ &=\dfrac{\Pr^{g^i, \rho^{-i}}(\tilde{x}_t, l_t^i, \tilde{l}_t^{-i}|h_t^0)}{\Pr^{g^i, \rho^{-i}}(l_t^i|h_t^0)}\label{eq:exnayyar:1} \\
	    &=\dfrac{P_t^{0}(\tilde{x}_t, (l_t^i, \tilde{l}_t^{-i})|h_t^0)}{\sum_{\hat{x}_t, \hat{l}_t^{-i}} P_t^0(\hat{x}_t, (l_t^i, \hat{l}_t^{-i})|h_t^0)}\label{eq:exnayyar:2} \\
		&=\dfrac{ \pi_t(\tilde{x}_t, (l_t^i, \tilde{l}_t^{-i})) }{\sum_{\hat{x}_t,\hat{l}_t^{-i}} \pi_t(\hat{x}_t, (l_t^i, \hat{l}_t^{-i})) }\label{eq:exnayyar:3} \\
		&=:\tilde{\Phi}_t^{i, \rho^{-i}}(\tilde{x}_t, \tilde{l}_t^{-i}|\comp_t^i),
	\end{align}
	for some function $\tilde{\Phi}_t^{i, \rho^{-i}}$, where $\pi_t$ is the realization of $\Pi_t$ corresponding to $H_t^0=h_t^0$. {In steps \eqref{eq:exnayyar:1} and \eqref{eq:exnayyar:2} we apply Bayes rule on the conditional probabilities given $h_t^0$, and we use Assumption \ref{assump:cibstraindep} to express the belief with the strategy-independent function $P_t^0$.} 
	
	Note that $\Comp_t^{-i}$ is {contained in the vector} $(\Comp_t^i, L_t^{-i})$, hence we conclude that
	\begin{align}
		\Pr^{g^i, \rho^{-i}}(\tilde{x}_t, \tilde{\comp}_t^{-i}|h_t^i)&=:\Phi_t^{i, \rho^{-i}}(\tilde{x}_t, \tilde{\comp}_t^{-i}|\comp_t^i),
	\end{align}
	for some function $\Phi_t^{i, \rho^{-i}}$. By Lemma \ref{lem:msi} we conclude that $\Comp$ is mutually sufficient information. Therefore there exists at least one $\Comp$-based BNE.
	
	Similar to Examples \ref{ex:repeatedgames} and \ref{ex:maskintirole}, in general, $\Comp$ is not unilaterally sufficient information, and the set of $\Comp$-based BNE payoffs may not be the same as that of all BNE. The argument for both claims can be carried out in an analogous way to Examples \ref{ex:repeatedgames} and \ref{ex:maskintirole}.
\end{example}

\begin{example}\label{ex:ouyang}
	The following model is a variant of \cite{ouyang2016dynamic} and \cite{vasal2019spbe}.
	\begin{itemize}
		\item Each player $i$ is associated with a local state $X_t^i$, and $X_t=(X_t^i)_{i\in\mathcal{I}}$.
		\item Each player $i$ is associated with a local noise process $W_t^i$, and $W_t=(W_t^{i})_{i\in\mathcal{I}}$.
		\item There is no initial information, i.e. $H_1^i=\varnothing$ for all $i\in\mathcal{I}$.
		\item There is a public noisy observation $Y_t^i$ of the local state. The state transitions, observation processes, and reward generation processes satisfy the following:
		\begin{align}
			(X_{t+1}^i, Y_t^i) &= f_t^i(X_t^i, U_t, W_t^i),\quad\forall i\in\mathcal{I},\\
			R_t^i &= r_t^i(X_t, U_t),\quad\forall i\in\mathcal{I}.
		\end{align}
		\item The information player $i$ has at time $t$ is $H_t^i=(Y_{1:t-1}, U_{1:t-1}, X_{1:t}^i)$ for $i\in\mathcal{I}$, where $Y_t=(Y_t^i)_{i\in\mathcal{I}}$.
		\item All the primitive random variables, i.e. the random variables in the collection $(X_1^i)_{i\in\mathcal{I}}\cup (W_t^{i})_{i\in\mathcal{I}, t\in\mathcal{T}}$, are mutually independent.
	\end{itemize}
\end{example}

\begin{prop}\label{thm:ouyangusi}
	In the model of Example \ref{ex:ouyang}, $\Comp_t^i=(Y_{1:t-1}, U_{1:t-1}, X_t^i)$ is unilaterally sufficient information.\footnote{$\Comp^i$-based strategies in this setting are closely related to the ``strategies of type $s$'' defined in \cite{vasal2019spbe}. In \cite{vasal2019spbe}, the authors showed that strategy profiles of type $s$ can attain all equilibrium payoffs attainable by general strategy profiles. However, the authors did not show that strategy profiles of type $s$ can do so while being an equilibrium.}
\end{prop}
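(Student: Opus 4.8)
The plan is to reduce the statement to a single structural fact about the common information. Write $H_t^0 := (Y_{1:t-1}, U_{1:t-1})$ for the common information of the players in this model; then $\Comp_t^i = (H_t^0, X_t^i)$, $H_t^i = (H_t^0, X_{1:t}^i)$ and $H_t^{-i} = (H_t^0, X_{1:t}^{-i})$, so that once $\comp_t^i$ is fixed the only coordinate of a congruent $h_t^i$ left free is the past local-state trajectory $x_{1:t-1}^i$, and the only free coordinate of a congruent pair $(x_t, h_t^{-i})$ is $x_{1:t}^{-i}$. The key claim I would establish is: \emph{for every behavioral strategy profile $g$, every $t\in\mathcal{T}$, and every $h_t^0$ with $\Pr^g(h_t^0)>0$,}
\[
  \Pr^{g}\!\left( (x_{1:t}^j)_{j\in\mathcal{I}} \mid h_t^0 \right) \;=\; \prod_{j\in\mathcal{I}} \theta_t^{j, g^j}\!\left( x_{1:t}^j \mid h_t^0 \right),
\]
\emph{where each factor $\theta_t^{j,g^j}(\cdot\mid h_t^0)$ depends on $g$ only through $g^j$.} I would also note in passing that $\Comp^i$ is a legitimate compression, since $\Comp_{t+1}^i$ is a function of $\Comp_t^i$ and $Z_t^i = (Y_t, U_t, X_{t+1}^i)$.

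I would prove the claim by induction on $t$. The base case $t=1$ holds because $H_1^0$ is degenerate and the $(X_1^j)_{j}$ are mutually independent primitives, so $\theta_1^{j,g^j} = \Pr(X_1^j = \cdot)$. For the step, use $H_{t+1}^0 = (H_t^0, Y_t, U_t)$ and the chain rule to factor $\Pr^g(x_{1:t}, x_{t+1}, y_t, u_t \mid h_t^0)$ as the product of (i) $\Pr^g(x_{1:t}\mid h_t^0)$, handled by the inductive hypothesis; (ii) $\Pr^g(u_t \mid h_t^0, x_{1:t}) = \prod_j g_t^j(u_t^j \mid h_t^j)$, since under a behavioral profile the players randomize independently given their information and $h_t^j = (h_t^0, x_{1:t}^j)$; and (iii) $\Pr^g(x_{t+1}, y_t \mid h_t^0, x_{1:t}, u_t) = \prod_j q_t^j(x_{t+1}^j, y_t^j \mid x_t^j, u_t)$ with $q_t^j$ strategy-independent, because $(X_{t+1}^j, Y_t^j) = f_t^j(X_t^j, U_t, W_t^j)$ and the noises $(W_t^j)_j$ are mutually independent and independent of $(H_t^0, X_{1:t}, U_t)$, the latter being a deterministic function of primitives and randomizations indexed up to time $t$ that do not involve $W_t$. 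Dividing by the normalizing constant $\Pr^g(y_t, u_t\mid h_t^0)$, which itself factorizes across $j$ because the summand does, yields the asserted product form at time $t+1$, with the $j$-th factor depending on $g$ only through $g^j$ --- its dependence on $u_t$ and $y_t$ enters only through the conditioning variable $h_{t+1}^0$. (Admissibility of $h_{t+1}^0$ forces $\Pr^g(h_t^0)>0$, so the recursion is consistent.)

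Given the claim, I would finish by conditioning once more on $X_t^i = x_t^i$: letting $\bar\theta_t^{i,g^i}(x_t^i\mid h_t^0)$ denote the $X_t^i$-marginal of $\theta_t^{i,g^i}(\cdot\mid h_t^0)$, Bayes' rule gives, for congruent realizations,
\[
  \Pr^{g}\!\left( x_t, h_t \mid \comp_t^i \right) \;=\; \frac{\theta_t^{i,g^i}(x_{1:t}^i \mid h_t^0)}{\bar\theta_t^{i,g^i}(x_t^i \mid h_t^0)} \cdot \prod_{j\ne i} \theta_t^{j,g^j}\!\left( x_{1:t}^j \mid h_t^0 \right).
\]
Taking $F_t^{i,g^i}(h_t^i\mid\comp_t^i)$ to be the first factor (and $0$ when $h_t^i$ is not congruent with $\comp_t^i$) and $\Phi_t^{i,g^{-i}}(x_t, h_t^{-i}\mid\comp_t^i)$ to be the product over $j\ne i$ (and $0$ for incongruent $(x_t, h_t^{-i})$) yields exactly the factorization \eqref{eq:usicondition}: $F_t^{i,g^i}$ depends only on $g^i$, $\Phi_t^{i,g^{-i}}$ only on $g^{-i}$, and a routine computation shows both are probability distributions on $\mathcal{H}_t^i$ and on $\mathcal{X}_t\times\mathcal{H}_t^{-i}$, respectively. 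Incongruent combinations of $x_t, h_t^i, h_t^{-i}$ make both sides vanish, as permitted by the footnote to Definition \ref{def:usi}.

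The main obstacle is the induction step of the claim: one must argue that conditioning on the newly revealed common information $(Y_t, U_t)$ preserves both the across-players product structure \emph{and} the own-strategy-only dependence, even though this conditioning event couples all players --- $U_t^j$ depends on $X_{1:t}^j$ through $g_t^j$, and $(Y_t^j, X_{t+1}^j)$ depends on the entire action profile $U_t$. The resolution is that every source of coupling is absorbed into the conditioning variables: once $u_t$ and $y_t$ are components of $h_{t+1}^0$, player $j$'s Bayesian update factor is a function of $(h_t^0, y_t^j, u_t)$ and of $g^j$ alone, so the product structure is inherited from the previous step. The careful verification of the conditional-independence identities in (ii)--(iii) --- in particular that $W_t^j$ is independent of every random variable measurable with respect to the information available by time $t$ --- is the technical heart; the remainder is bookkeeping with the congruency conventions.
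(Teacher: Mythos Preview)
Your proposal is correct and follows essentially the same route as the paper: your key claim is precisely the paper's Lemma~\ref{lem:condindepouyang} (with your $\theta_t^{j,g^j}$ playing the role of the paper's $\xi_t^{g^j}$), proved by the same induction, and your final step of conditioning on $X_t^i$ to extract the two factors $F_t^{i,g^i}$ and $\Phi_t^{i,g^{-i}}$ matches the paper's proof of the proposition. The only addition is your explicit remark that $\Comp_t^i$ is sequentially updatable, which the paper leaves implicit.
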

\begin{proof}
    See Appendix \ref{app:thm:ouyangusi}.
\end{proof}

Finally, we note that the concept of USI is useful in the context of games among teams as well. {We omit the details of the following example due to its complicated nature.}

\begin{example}
    In the model of games among teams with delayed intra-team information sharing {analyzed} in \cite{tang2022dynamic}, the authors defined the notion of sufficient private information (SPI). It can be shown (through the arguments in {\cite[Section 4.3]{tang2022dynamic} and \cite[Chapters 4.6.1 and 4.6.2]{tangthesis}}) that $K_t^i = (H_t^0, S_t^i)$, which consists of the common information $H_t^0$ and the SPI $S_t^i$, is unilaterally sufficient information.
\end{example}

\section{An Open Problem}\label{sec:openproblems}
Identifying strategy-dependent compression maps that guarantee existence of at least one equilibrium (BNE or SE) or maintain all equilibria that exist under perfect recall is an open problem.

{A known strategy-dependent compression map is one that compress separately first each agent's private information, (resulting in ``sufficient private information''), and then the agents' common information (resulting in ``common information based (CIB) beliefs''  on the system state and the agents' sufficient private information \citep{ouyang2015oligopoly,ouyang2016dynamic,vasal2019spbe,tang2022dynamic,ouyang2024approach}). 
Such a compression does not possess any of the properties of the strategy-independent compression maps that result in MSI or USI. The following example presents a game where belief-based equilibria, i.e. equilibrium strategy profiles based on the above-described compression, do not exist.}

\begin{example}\label{ex:zerosumobsac}
	{Consider the following two-stage zero-sum game. The players are Alice (A) and Bob (B). Alice acts at stage $t=1$ and Bob at stage $t=2$. The game's}
    initial state $X_1$ is distributed uniformly at random on $\{-1, +1\}$. Let $H_t^A, H_t^B$ denote Alice's and Bob's information at stage $t$, and $U_t^A, U_t^B$ denote Alice's and Bob's actions at stage $t$, $t=1,2$.
    We assume that $H_1^A=X_1, H_1^B=\varnothing$, i.e. Alice knows $X_1$ and Bob does not. At stage $t=1$, Alice chooses $U_1^A\in\{-1, 1\}$, and the state transition is given by $X_2 = X_1 \cdot U_1^A$. At stage $t=2$, we assume that $H_2^A=(X_{1:2}, U_1^A)$ and $H_2^B=U_1^A$, i.e. Bob observes Alice's action but not the state before or after Alice's action. At time $t=2$, Bob picks an action $U_2^B \in\{\mathrm{U}, \mathrm{D}\}$. Alice's instantaneous rewards are given by
	\begin{align}
		R_1^A = \begin{cases}
			c&\text{if }U_1^A=+1;\\
			0&\text{if }U_1^A=-1,
		\end{cases}\qquad \text{and} \qquad
		R_2^A = \begin{cases}
			2&\text{if }X_2=+1, U_2^B=\mathrm{U};\\
			1&\text{if }X_2=-1, U_2^B=\mathrm{D};\\
			0&\text{otherwise},
		\end{cases}
	\end{align}
	where $c\in (0, 1/3)$. The stage reward for Bob is $R_t^B = -R_t^A$ for $t=1,2$.
	
	The above game is a signaling game which can be represented in extensive form as in Figure \ref{fig:zerosumobsac}.
	
	\begin{figure}[!ht]
		\centering\small
		\begin{tikzpicture}[scale=0.7]
			\draw[thick] (1, 1) -- (7, 1);
			\draw[thick] (1, 5) -- (7, 5);
			\draw[thick] (4, 1) node[anchor=north] {Alice} -- (4, 5) node[anchor=south] {Alice};
			
			\draw[thick] (1, 1) -- (-1, 0) node[anchor=east] {$(1, -1)$};
			\draw[thick] (1, 1) -- (-1, 2) node[anchor=east] {$(0, 0)$};
			
			\draw[thick] (1, 5) -- (-1, 4) node[anchor=east] {$(0, 0)$};
			\draw[thick] (1, 5) -- (-1, 6) node[anchor=east] {$(2, -2)$};
			
			\draw[thick] (7, 1) -- (9, 0) node[anchor=west] {$(c, -c)$};
			\draw[thick] (7, 1) -- (9, 2) node[anchor=west] {$(2+c, -2-c)$};
			
			\draw[thick] (7, 5) -- (9, 4) node[anchor=west] {$(1+c, -1-c)$};
			\draw[thick] (7, 5) -- (9, 6) node[anchor=west] {$(c, -c)$};
			
			\foreach \point in {(1, 1), (1, 5), (7, 1), (7, 5), (4, 1), (4, 5), (4, 3), (-1, 0), (-1, 2), (-1, 4), (-1, 6), (9, 0), (9, 2), (9, 4), (9, 6)}
			\draw[fill=black] \point circle [radius=0.07];
			
			\node[anchor=east] at (4.1, 2) {$+1$};
			\node[anchor=west] at (3.9, 2) {$[0.5]$};
			\node[anchor=west] at (4, 3) {N};
			\node[anchor=east] at (4.1, 4) {$-1$};
			\node[anchor=west] at (3.9, 4) {$[0.5]$};
			
			\draw[dotted] (1, 1) -- (1, 5);
			\node[anchor=west] at (0.9, 3) {Bob};
			\draw[dotted] (7, 1) -- (7, 5);
			\node[anchor=west] at (6.9, 3) {Bob};
			
			\node[anchor=south] at (2.5, 0.9) {$-1$};
			\node[anchor=north] at (2.5, 5.1) {$-1$};
			\node[anchor=south] at (5.5, 0.9) {$+1$};
			\node[anchor=north] at (5.5, 5.1) {$+1$};
			
			\node[anchor=north] at (0, 0.5) {D};
			\node[anchor=south] at (0, 1.5) {U};
			\node[anchor=north] at (0, 4.5) {D};
			\node[anchor=south] at (0, 5.5) {U};
			
			\node[anchor=north] at (8, 0.5) {D};
			\node[anchor=south] at (8, 1.5) {U};
			\node[anchor=north] at (8, 4.5) {D};
			\node[anchor=south] at (8, 5.5) {U};
		\end{tikzpicture}
		\caption{Extensive form of the game in Example \ref{ex:zerosumobsac}.}\label{fig:zerosumobsac}
	\end{figure}
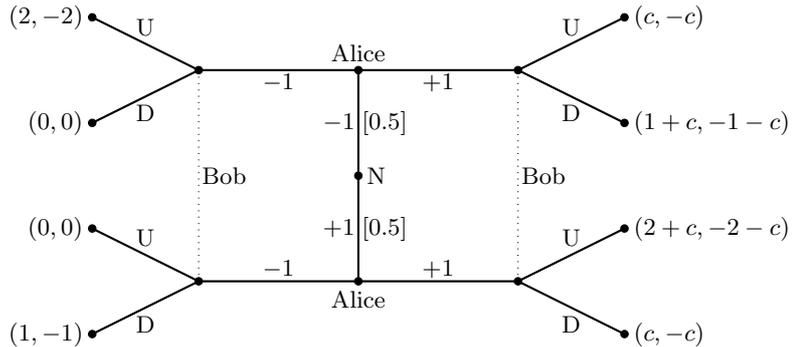

	In order to {define the concept of} belief based equilibrium for this game, we specify the common information $H_t^0$, along with Alice's and Bob's private information, denoted by $L_t^A, L_t^B$, respectively, for $t=1,2$ as follows:
	\begin{align}
		H_1^0 &= \varnothing,\quad L_1^A = X_1,\quad L_1^B = \varnothing,\\
		H_2^0 &= U_1^A,\quad L_2^A = X_2,\quad L_2^B = \varnothing.
	\end{align}
\end{example}

{We prove the following result.}

\begin{prop}\label{prop:zerosumobsac}
	In the game of Example \ref{ex:zerosumobsac} belief-based equilibria do not exist.
\end{prop}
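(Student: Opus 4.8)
The plan is to establish the stronger fact that the game of Example~\ref{ex:zerosumobsac} admits a \emph{unique} Bayes--Nash equilibrium, to show that in this equilibrium the common-information-based (CIB) belief is \emph{the same} after Alice's two possible actions, and to observe that Bob must nevertheless mix differently after those two actions. Since a belief-based equilibrium is, by definition, a belief-based strategy profile that is a BNE of the original game, and since in a belief-based profile Bob's stage-$2$ action can only depend on his CIB belief (his sufficient private information is trivial because $L_2^B=\varnothing$), these three facts together imply that the unique BNE cannot be recast as a belief-based profile; hence no belief-based BNE exists. A fortiori no belief-based SE, wPBE, or PBE exists either, since each of these refines BNE.

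To carry this out I would parametrize Alice's stage-$1$ behavior by $\alpha=\Pr^g(U_1^A=+1\mid X_1=+1)$ and $\beta=\Pr^g(U_1^A=+1\mid X_1=-1)$, and Bob's stage-$2$ behavior by $q_+=\Pr^g(U_2^B=\mathrm{U}\mid U_1^A=+1)$ and $q_-=\Pr^g(U_2^B=\mathrm{U}\mid U_1^A=-1)$. A direct computation writes $J^A(g)$ as an explicit bilinear function of $(\alpha,\beta)$ and $(q_+,q_-)$; since the game is zero-sum with Alice maximizing, the set of BNE equals the set of saddle points, and Alice's equilibrium strategies are exactly the maximizers of $W(\alpha,\beta):=\min_{q_+,q_-\in[0,1]}J^A(g)$. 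The function $W$ is concave (a minimum of affine functions), and a case analysis over the four regions of $[0,1]^2$ determined by the signs of the coefficients of $q_+$ and $q_-$ in $J^A$ --- namely $2\alpha-\beta$ and $1+\alpha-2\beta$ --- shows that on each region $W$ is affine and attains its maximum, equal to $v:=\tfrac23+\tfrac c2$, only at $(\alpha^\ast,\beta^\ast)=(\tfrac13,\tfrac23)$. Hence $(\tfrac13,\tfrac23)$ is Alice's unique equilibrium strategy. Because Alice then mixes strictly over both actions for both of her types, Bob's strategy must make both types indifferent, which forces the unique pair $(q_+^\ast,q_-^\ast)=\big(\tfrac{1-3c}{3},\tfrac{1+3c}{3}\big)$; one checks $q_\pm^\ast\in(0,1)$ since $c\in(0,\tfrac13)$, and that $\big((\alpha^\ast,\beta^\ast),(q_+^\ast,q_-^\ast)\big)$ is indeed a BNE. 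In particular the BNE is unique and $q_+^\ast\neq q_-^\ast$.

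It remains to compute the CIB belief along this equilibrium. Since $\alpha^\ast+\beta^\ast=1$, both $U_1^A=+1$ and $U_1^A=-1$ occur with probability $\tfrac12$, so Bayes' rule applies on the equilibrium path after either action; using $X_2=X_1U_1^A$ one finds $\Pr^g(X_2=+1\mid U_1^A=+1)=\alpha^\ast/(\alpha^\ast+\beta^\ast)=\tfrac13$ and $\Pr^g(X_2=+1\mid U_1^A=-1)=(1-\beta^\ast)/(2-\alpha^\ast-\beta^\ast)=\tfrac13$. Because Alice's private information at stage $2$ is just $X_2$ (so her sufficient private information is a function of $X_2$) and $L_2^B=\varnothing$, the CIB belief $\Pi_2$ after $U_1^A=+1$ and the CIB belief after $U_1^A=-1$ are identical. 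In any belief-based profile Bob's stage-$2$ action is a function of $\Pi_2$ alone, so Bob would have to play the \emph{same} mixed action after $U_1^A=+1$ and after $U_1^A=-1$, i.e.\ $q_+=q_-$; this contradicts $q_+^\ast\neq q_-^\ast$. Therefore the unique BNE is not belief-based, and belief-based equilibria do not exist.

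The crux --- and the main source of tedium --- is the uniqueness of the BNE, in particular ruling out the candidate equilibria in which one or both of Alice's types play a pure stage-$1$ action. The concavity-plus-region-analysis argument for $W$ handles this cleanly, but one has to verify carefully that the four affine pieces of $W$ genuinely agree at $(\tfrac13,\tfrac23)$ and that each attains its regional maximum only there, so that no vertex or edge of $[0,1]^2$ produces a tie; an alternative is the equivalent but equally case-heavy argument that enumerates which of Alice's types mix and shows only the fully-mixed profile survives. Everything after uniqueness --- Bob's indifference conditions, the belief computation, and the final contradiction --- is short and routine.
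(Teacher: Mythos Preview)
Your proposal is correct and follows essentially the same approach as the paper: both compute the unique BNE by maximizing Alice's security level $W(\alpha,\beta)=\min_{q_+,q_-}J^A$ as a piecewise-affine concave function, obtain Bob's unique equilibrium mixing from Alice's indifference conditions, and then derive the contradiction from the fact that the on-path CIB beliefs after $U_1^A=+1$ and $U_1^A=-1$ coincide while $q_+^\ast\neq q_-^\ast$. The only cosmetic differences are the parametrization (your $(\alpha,\beta)$ corresponds to the paper's $(\alpha_2,1-\alpha_1)$) and that the paper carries out the maximization by enumerating the seven extreme points of the affine pieces rather than by a region-by-region argument.
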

\begin{proof}
    See Appendix \ref{app:prop:zerosumobsac}.
\end{proof}

\section{Conclusion}\label{sec:suffinfo:conclusion}
In this paper, we investigated sufficient conditions for strategy-independent compression maps to be viable in dynamic games. Motivated by the literature on information states for control problems \citep{kumar1986stochastic,mahajan2016decentralized,subramanian2020approximate}, we provided two notions of information state, {both resulting in from strategy-independent information} compression maps for dynamic games, namely mutually sufficient information (MSI) and unilaterally sufficient information (USI). While MSI guarantees the existence of compression-based equilibria, USI guarantees that compression-based equilibria can attain all equilibrium payoff profiles that are achieved when all agents have perfect recall. We established the results under both the concepts of Bayes-Nash equilibrium and sequential equilibrium. We discussed how USI does not guarantee the preservation of payoff profiles under certain other equilibrium refinements. {We considered a strategy-depedent compression map that results in sufficient private information, for each agent, along with a CIB belief. We showed, by an example, that this information compression map does not possess any of the properties of the strategy-independent compression maps that result in MSI or USI.}

{The discovery of strategy-dependent information compression maps that lead to results similar to those of Theorem \ref{thm:msiexist} and \ref{thm:msiseexist} or to those of Theorems \ref{thm:usiequiv} and \ref{thm:usiseequiv} is a challenging open problem of paramount importance. Another important open problem is the discovery of information compression maps under which certain subsets of equilibrium payoff profiles are attained when strategies based on the resulting compressed information are used. The results of this paper have been derived for finite-horizon finite games. The extension of the results to infinite-horizon games and to games with continuous action and state spaces are other interesting technical problems}.

\subsection*{Author Contributions}
This work is a collaborative intellectual effort of the three authors, with Dengwang Tang being the leader. Due to the interconnected nature of the results, it is impossible to separate the contributions of each author.

\subsection*{Funding}
This work is supported by National Science Foundation (NSF) Grant No. ECCS 1750041, ECCS 2025732, ECCS 2038416, ECCS 1608361, CCF 2008130, CMMI 2240981, Army Research Office (ARO) Award No. W911NF-17-1-0232, and Michigan Institute for Data Science (MIDAS) Sponsorship Funds by General Dynamics.

\subsection*{Data Availability}
Not applicable since all results in this paper are theoretical.

\section*{Declarations}

\subsection*{Conflict of Interest}
The authors have no competing interests to declare that are relevant to the content of this article.

\subsection*{Ethical Approval}
Not applicable since no experiments are involved in this work.


\appendix
\begin{appendices}

\section{Information State of Single-Agent Control Problems}\label{app:aux:infostate}
In this section we consider single-agent Markov Decision Processes (MDPs) and develop auxiliary results. This section is a recap of \cite{mahajan2016decentralized} with more detailed results and proofs. 

Let $X_t$ be a controlled Markov Chain controlled by action $U_t$ with initial distribution $\nu_1\in\Delta(\mathcal{X}_1)$ and transition kernels $P=(P_t)_{t\in\mathcal{T}}, P_t\colon\mathcal{X}_t\times \mathcal{U}_t\mapsto \Delta(\mathcal{X}_{t+1})$. Let $r=(r_t)_{t\in\mathcal{T}}, r_t\colon\mathcal{X}_t\times \mathcal{U}_t\mapsto \mathbb{R}$ be a collection of instantaneous reward functions. An MDP is denoted by a tuple $(\nu_1, P, r)$. 

For a Markov strategy $g=(g_t)_{t\in\mathcal{T}}, g_t\colon\mathcal{X}_t\mapsto \Delta(\mathcal{U}_t)$, we use $\Pr^{g, \nu_1, P}$ and $\E^{g, \nu_1, P}$ to denote the probabilities of events and expectations of random variables under the distribution specified by controlled Markov Chain $(\nu_1, P)$ and strategy $g$. When $(\nu_1, P)$ is fixed and clear from the context, we use $\Pr^{g}$ and $\E^{g}$ respectively.

Define the total expected reward in the MDP $(\nu_1, P, r)$ under strategy $g$ by
\begin{align}
	J(g; \nu_1, P, r) := \E^{g, \nu_1, P}\left[\sum_{t=1}^T r_t(X_t, U_t) \right].\label{eq:MDP:J}
\end{align}

Define the value function and state-action quality function by
\begin{align}
	V_\tau(x_\tau; P, r) &:= \max_{g_{\tau:T}}\E^{g_{\tau:T}, P}\left[\sum_{t=\tau}^T r_t(X_t, U_t)|x_\tau\right],\qquad\forall \tau\in [T+1],\\
	\Qfunc_\tau(x_\tau, u_\tau; P, r) &:= r_\tau(x_\tau, u_\tau) + \sum_{\tilde{x}_{\tau+1}} V_{\tau+1}(\tilde{x}_{\tau+1}) P_\tau(\tilde{x}_{\tau+1}|x_\tau, u_\tau),\qquad\forall \tau\in [T].
\end{align}
Note that $V_{T+1}(\cdot;P,r)\equiv 0$.

\begin{defn}\label{def:app:infostate}
	\citep{mahajan2016decentralized} Let $\Comp_t=\Psi_t(X_t)$ for some function $\Psi_t$. Then, $\Comp_t$ is called an information state for $(P, r)$ if there exist functions $P_t^{\Comp}\colon\Compset_t\times \mathcal{U}_t\mapsto \Delta(\Compset_{t+1}), r_t^{\Comp}\colon\Compset_t\times \mathcal{U}_t\mapsto \mathbb{R}$ such that 
	\begin{enumerate}[(1)]
		\item $P_t(\comp_{t+1}|x_t, u_t) = P_t^{\Comp}(\comp_{t+1}|\Psi_t(x_t), u_t)$; and
		\item $r_t(x_t, u_t) = r_t^{\Comp}(\Psi_t(x_t), u_t)$.
	\end{enumerate}
\end{defn}

If $\Comp_t$ is an information state, then $\Comp_t$ is also a controlled Markov Chain with initial distribution $\nu_1^{\Comp}\in\Delta(\Comp_1)$ and transition kernel $P^{\Comp}=(P_t^{\Comp})_{t\in\mathcal{T}}$, where 
\begin{align*}
	\nu_1^{\Comp}(\comp_1) = \sum_{x_1} \bm{1}_{\{\comp_1 = \Psi_1(x_1) \} } \nu_1(x_1).
\end{align*}

The tuple $(\nu_1^{\Comp}, P^{\Comp}, r^{\Comp})$ defines a new MDP. For a \emph{$\Comp$-based strategy} $\rho=(\rho_t)_{t\in\mathcal{T}}, \rho_t\colon\Compset_t\mapsto \Delta(\mathcal{U}_t)$, the $J, V, \Qfunc$ functions can be defined as above for the new MDP.

We state the following standard result (see, for example, Section 2 of \cite{subramanian2020approximate}).

\begin{lemma}\label{lemma:twoMDPequiv}
	 Let $\Comp_t=\Psi_t(X_t)$ be an information state for $(P, r)$. Then
	\begin{enumerate}[(1)]
		\item $V_t(x_t; P, r) = V_t(\Psi_t(x_t); P^{\Comp}, r^{\Comp})$ for all $x_t$;
		\item $\Qfunc_t(x_t, u_t; P, r) = \Qfunc_t(\Psi_t(x_t), u_t; P^{\Comp}, r^{\Comp})$ for all $x_t, u_t$.
	\end{enumerate}
\end{lemma}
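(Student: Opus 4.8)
\textbf{Proof proposal for Lemma \ref{lemma:twoMDPequiv}.}

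The plan is to prove both statements simultaneously by backward induction on $t$, descending from $t = T+1$ to $t = 1$. The base case is immediate: $V_{T+1}(x_{T+1}; P, r) \equiv 0 \equiv V_{T+1}(\Psi_{T+1}(x_{T+1}); P^{\Comp}, r^{\Comp})$. For the inductive step, suppose $V_{t+1}(x_{t+1}; P, r) = V_{t+1}(\Psi_{t+1}(x_{t+1}); P^{\Comp}, r^{\Comp})$ for all $x_{t+1}$. I would first establish statement (2) at time $t$: starting from the definition $\Qfunc_t(x_t, u_t; P, r) = r_t(x_t, u_t) + \sum_{\tilde x_{t+1}} V_{t+1}(\tilde x_{t+1}; P, r) P_t(\tilde x_{t+1}|x_t, u_t)$, I substitute the information-state properties from Definition \ref{def:app:infostate}, namely $r_t(x_t, u_t) = r_t^{\Comp}(\Psi_t(x_t), u_t)$ and $P_t(\tilde x_{t+1}|x_t, u_t) = P_t^{\Comp}(\Psi_{t+1}(\tilde x_{t+1})|\Psi_t(x_t), u_t)$, together with the inductive hypothesis for $V_{t+1}$. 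The one point that needs a little care is re-indexing the sum over $\mathcal{X}_{t+1}$ as a sum over $\Compset_{t+1}$: grouping the terms $\tilde x_{t+1}$ with a common value $\Psi_{t+1}(\tilde x_{t+1}) = \comp_{t+1}$ and using that $P_t^{\Comp}(\comp_{t+1}|\Psi_t(x_t),u_t) = \sum_{\tilde x_{t+1}: \Psi_{t+1}(\tilde x_{t+1}) = \comp_{t+1}} P_t(\tilde x_{t+1}|x_t,u_t)$, the right-hand side collapses exactly to $\Qfunc_t(\Psi_t(x_t), u_t; P^{\Comp}, r^{\Comp})$.

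Statement (1) at time $t$ then follows by taking maxima over $u_t$: since $V_t(x_t; P, r) = \max_{u_t} \Qfunc_t(x_t, u_t; P, r)$ and likewise $V_t(\comp_t; P^{\Comp}, r^{\Comp}) = \max_{u_t} \Qfunc_t(\comp_t, u_t; P^{\Comp}, r^{\Comp})$ (both of which follow from the standard dynamic programming recursion, valid because an optimal strategy need only choose, at each stage, an action maximizing the $\Qfunc$-function given the current state), the equality of the $\Qfunc$-functions from statement (2) immediately yields the equality of the $V$-functions, with the action spaces $\mathcal{U}_t^i$ being identical on both sides. This closes the induction.

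I do not expect a serious obstacle here; the lemma is essentially a bookkeeping argument. The only mildly delicate step is the re-indexing of the transition sum from $\mathcal{X}_{t+1}$ to $\Compset_{t+1}$, where one must be careful to note that the inductive hypothesis $V_{t+1}(x_{t+1};P,r) = V_{t+1}(\Psi_{t+1}(x_{t+1}); P^{\Comp}, r^{\Comp})$ depends on $x_{t+1}$ only through $\Psi_{t+1}(x_{t+1})$, so that all $\tilde x_{t+1}$ in a given fiber contribute the same $V$-value and can legitimately be collected against the aggregated transition probability $P_t^{\Comp}(\comp_{t+1}|\cdot,\cdot)$. If one wished to be fully rigorous about the dynamic programming characterization $V_t = \max_{u_t} \Qfunc_t$ in the behavioral (mixed-action) setting, one would note that the maximum over $\Delta(\mathcal{U}_t)$ of the expected $\Qfunc_t$ value is attained at a point mass, reducing to the pure-action maximum; but this is routine and can be cited from \cite{subramanian2020approximate}.
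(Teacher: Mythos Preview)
Your backward-induction argument is correct and is exactly the standard proof the paper defers to by citing \cite{subramanian2020approximate} rather than writing out. One small slip worth fixing: the identity you state as ``$P_t(\tilde x_{t+1}\mid x_t, u_t) = P_t^{\Comp}(\Psi_{t+1}(\tilde x_{t+1})\mid \Psi_t(x_t), u_t)$'' is not what Definition~\ref{def:app:infostate} says and is false in general---the left-hand side is the probability of a single next state, while the right-hand side is the probability of the entire $\Psi_{t+1}$-fiber. The correct version is the aggregated identity $P_t^{\Comp}(\comp_{t+1}\mid \Psi_t(x_t),u_t) = \sum_{\tilde x_{t+1}:\,\Psi_{t+1}(\tilde x_{t+1})=\comp_{t+1}} P_t(\tilde x_{t+1}\mid x_t,u_t)$, which is precisely what you invoke in your re-indexing step; once you drop the misstated pointwise identity and go straight to the fiber-sum, the computation goes through verbatim.
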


\begin{defn}\label{def:app:association}
	Let $g$ be a Markov strategy, an $K$-based strategy $\rho$ is said to be associated with $g$ if
	\begin{align}\label{rhotheinfostatebasedstrategy}
		\rho_t(\comp_t) = \E^{g, \nu_1, P}[g_t(X_t)|\comp_t],
	\end{align}
	whenever $\Pr^{g, \nu_1, P}(\comp_t) > 0$.
\end{defn}

The following lemma will be used in the proofs in Appendix \ref{app:proofsmain}.

\begin{lemma}[Policy Equivalence Lemma]\label{lem:policyeval}
	Let $(\nu_1, P, r)$ be an MDP. Let $\Comp_t$ be an information state for $(P, r)$. Let an $\Comp$-based strategy $\rho$ be associated with a Markov strategy $g$, then
 	\begin{enumerate}[(1)]
 		\item $\Pr^{g, \nu_1, P}(\comp_t)=\Pr^{\rho, \nu_1, P}(\comp_t)$ for all $\comp_t\in\Compset_t$ and $t\in\mathcal{T}$;
 		\item $J(g; \nu_1, P, r) = J(\rho; \nu_1, P, r)$.
 	\end{enumerate}
\end{lemma}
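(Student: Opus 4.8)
\textbf{Proof plan for Lemma~\ref{lem:policyeval} (Policy Equivalence Lemma).}

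The plan is to prove part~(1) by induction on $t$, and then derive part~(2) as a direct consequence of part~(1) together with the information-state property of $r$. The base case $t=1$ is immediate: both $\Pr^{g,\nu_1,P}(\comp_1)$ and $\Pr^{\rho,\nu_1,P}(\comp_1)$ equal $\nu_1^{\Comp}(\comp_1)=\sum_{x_1}\bm{1}_{\{\comp_1=\Psi_1(x_1)\}}\nu_1(x_1)$, since the initial distribution of $\Comp_1$ does not depend on the strategy. For the inductive step, assume $\Pr^{g,\nu_1,P}(\comp_t)=\Pr^{\rho,\nu_1,P}(\comp_t)$ for all $\comp_t$. The key computation is to express $\Pr^{g,\nu_1,P}(\comp_{t+1})$ by conditioning on $(\Comp_t,U_t)$: using the information-state transition property (Definition~\ref{def:app:infostate}(1)), one has $\Pr^{g,\nu_1,P}(\comp_{t+1}\mid \comp_t,u_t)=P_t^{\Comp}(\comp_{t+1}\mid\comp_t,u_t)$, which is strategy-independent. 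Hence the only strategy-dependent quantities left are $\Pr^{g,\nu_1,P}(\comp_t)$ and the action law $\Pr^{g,\nu_1,P}(u_t\mid\comp_t)$.

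The crux is to verify that these action laws agree: $\Pr^{g,\nu_1,P}(u_t\mid\comp_t)=\Pr^{\rho,\nu_1,P}(u_t\mid\comp_t)$. Under $\rho$ this is simply $\rho_t(u_t\mid\comp_t)$. Under $g$, by the tower property,
\[
\Pr^{g,\nu_1,P}(u_t\mid\comp_t)=\E^{g,\nu_1,P}\!\big[\,\Pr^{g,\nu_1,P}(u_t\mid X_t)\,\big|\,\comp_t\,\big]=\E^{g,\nu_1,P}[\,g_t(u_t\mid X_t)\mid\comp_t\,],
\]
which is exactly $\rho_t(u_t\mid\comp_t)$ by the association relation \eqref{rhotheinfostatebasedstrategy} in Definition~\ref{def:app:association}. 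Combining,
\[
\Pr^{g,\nu_1,P}(\comp_{t+1})=\sum_{\comp_t,u_t}P_t^{\Comp}(\comp_{t+1}\mid\comp_t,u_t)\,\Pr^{g,\nu_1,P}(u_t\mid\comp_t)\,\Pr^{g,\nu_1,P}(\comp_t),
\]
and the same identity holds with $g$ replaced by $\rho$; the inductive hypothesis on $\Pr(\comp_t)$ and the just-established equality of action laws then force $\Pr^{g,\nu_1,P}(\comp_{t+1})=\Pr^{\rho,\nu_1,P}(\comp_{t+1})$.

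For part~(2), write $J(g;\nu_1,P,r)=\sum_{t=1}^T\E^{g,\nu_1,P}[r_t(X_t,U_t)]$. By the information-state reward property (Definition~\ref{def:app:infostate}(2)), $r_t(X_t,U_t)=r_t^{\Comp}(\Comp_t,U_t)$ almost surely, so $\E^{g,\nu_1,P}[r_t(X_t,U_t)]=\sum_{\comp_t,u_t}r_t^{\Comp}(\comp_t,u_t)\,\Pr^{g,\nu_1,P}(u_t\mid\comp_t)\,\Pr^{g,\nu_1,P}(\comp_t)$, and likewise for $\rho$. Applying part~(1) and the equality of action laws term by term in $t$ yields $J(g;\nu_1,P,r)=J(\rho;\nu_1,P,r)$. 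I expect the main (though modest) obstacle to be bookkeeping around the conditioning: one must be careful that $\Pr^{g,\nu_1,P}(\comp_t)>0$ exactly when $\Pr^{\rho,\nu_1,P}(\comp_t)>0$ (which follows inductively from part~(1) itself, so the conditional statements in \eqref{rhotheinfostatebasedstrategy} and in the tower-property step are well posed on the same set of histories), and that $U_t$ is conditionally independent of $\Comp_{t+1}$'s generation beyond $(\Comp_t,U_t)$ — a point that rests precisely on the information-state transition property rather than on any Markov property of $X_t$ alone.
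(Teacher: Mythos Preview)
Your proposal is correct and follows essentially the same approach as the paper's proof: induction on $t$ for part~(1), with the key step being the equality of action laws $\Pr^{g}(u_t\mid\comp_t)=\rho_t(u_t\mid\comp_t)$ via the tower property and Definition~\ref{def:app:association}, combined with the strategy-independent transition kernel $P_t^{\Comp}$; then part~(2) follows by rewriting each stage reward through $r_t^{\Comp}$ and applying part~(1). Your additional remarks about well-posedness of the conditionals are a reasonable bit of extra care that the paper leaves implicit.
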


\begin{proof}
	In this proof all probabilities and expectations are assumed to be defined with $(\nu_1, P)$.
	Given a Markov strategy $g$, let $\rho$ be an information state-based strategy that satisfies \eqref{rhotheinfostatebasedstrategy}. 
	
	First, we have
	\begin{align}
		\Pr^g(u_t|\comp_t) = \E^g[g_t(u_t|X_t)|\comp_t] = \rho_t(u_t|\comp_t) \label{eq:policyeval:ugivens},
	\end{align}
	for all $\comp_t$ such that $\Pr^g(\comp_t) > 0$.
	
	\begin{enumerate}[(1)]
		\item Proof by induction:
		
		\textbf{Induction Base:} We have $\Pr^g(\comp_1) = \Pr^\rho(\comp_1)$ since the distribution of $\Comp_1=\Psi_1(X_1)$ is strategy-independent.
		
		\textbf{Induction Step:} Suppose that 
		\begin{align}
			\Pr^g(\comp_t) = \Pr^\rho(\comp_t),\label{eq:policyeval:indhyp}
		\end{align}
		for all $\comp_t\in\Compset_t$. We prove the result for time $t+1$. Combining \eqref{eq:policyeval:ugivens} and \eqref{eq:policyeval:indhyp}, and incorporating the information state transition kernel $P_t^K$ defined in Definition \ref{def:app:infostate}, we have
		\begin{align}
			\Pr^g(\comp_{t+1}) &= \sum_{\tilde{\comp}_t, \tilde{u}_t} \Pr^g(\comp_{t+1}|\tilde{\comp}_t, \tilde{u}_t)\Pr^g(\tilde{u}_t|\tilde{\comp}_t)\Pr^g(\tilde{\comp}_t) \\
			&=\sum_{\tilde{\comp}_t, \tilde{u}_t} P_t^{\Comp}(\comp_{t+1}|\tilde{\comp}_t, \tilde{u}_t)\rho_t(u_t|\tilde{\comp}_t)\Pr^\rho(\tilde{\comp}_t)\\
			&=\Pr^\rho(\comp_{t+1}).
		\end{align}

	    Therefore we have established the induction step.
		\item Using \eqref{eq:policyeval:ugivens}\eqref{eq:policyeval:indhyp} along with the result of part (1), we obtain
		\begin{align}
			\E^g[r_{t}(X_t, U_t)] &= \E^g[r_{t}^{\Comp}(\Comp_t, U_t)] \\
			&=\sum_{\tilde{\comp}_t, \tilde{u}_t} r_t^{\Comp}(\tilde{\comp}_t, \tilde{u}_t) \Pr^g(\tilde{u}_t|\tilde{\comp}_t) \Pr^g(\tilde{\comp}_t) \\
			&=\sum_{\tilde{\comp}_t, \tilde{u}_t} r_t^{\Comp}(\tilde{\comp}_t, \tilde{u}_t) \rho_t(\tilde{u}_t|\tilde{\comp}_t) \Pr^\rho(\tilde{\comp}_t)\\
			&=\E^\rho[r_{t}(X_t, U_t)],
		\end{align}
		for each $t\in\mathcal{T}$. The result then follows from linearity of expectation.
	\end{enumerate}
 This concludes the proof.
\end{proof}

\section{Alternative Characterizations of Sequential Equilibria}\label{app:SE}
This section deals with the game model introduced in Section \ref{sec:suffinfo:gamemodel}. We provide three alternative definitions of sequential equilibria that are equivalent to the original one given by \cite{kreps1982sequential}. These definitions help simplify some of the proofs in Appendix \ref{app:proofsmain}.

We would like to note that several alternative definitions of sequential equilibria are also given in \cite{kreps1982sequential,halpern2009nonstandard}. The definition of {weak perfect equilibrium} in Proposition 6 of \cite{kreps1982sequential} is close to our definitions in spirit in terms of using sequences of payoff functions instead of beliefs as a vehicle to define
sequential rationality.

Notice that fixing the behavioral strategies $g^{-i}$ of players other than player $i$, player $i$'s best response problem (at every information set) can be considered as a Markov Decision Process with state $H_t^i$ and action $U_t^i$, where the transition kernels and instantaneous reward functions depend on $g^{-i}$. Inspired by this observation, we introduce an alternative definition of sequential equilibrium for our model, where we form \emph{conjectures} of transition kernels and reward functions instead of forming beliefs on nodes. This allows us for a more compact representation of the appraisals and beliefs of players. We will later show that this alternative definition is equivalent to the classical definition of sequential equilibrium in \cite{kreps1982sequential}.

For player $i\in\mathcal{I}$, let $P^i=(P_t^i)_{t\in\mathcal{T}\backslash\{T\}}, P_t^i\colon\mathcal{H}_t^i\times \mathcal{U}_t^i\mapsto \Delta(\mathcal{Z}_t^i)$ and $r^i=(r_t^i)_{t\in\mathcal{T}}, r_t^i\colon\mathcal{H}_t^i \times \mathcal{U}_t^i \mapsto [-1, 1]$ be collections of functions that represent conjectures of transition kernels and instantaneous reward functions. For a behavioral strategy profile $g^i$, define the reward-to-go function $J_t^i$ recursively through
\begin{subequations}\label{defofJ}
	\begin{align}
		&\quad~J_{T}^i(g_{T}^i; h_T^i, P^i, r^i) := \sum_{\tilde{u}_T^i} r_T^i(h_T^i, \tilde{u}_T^i) g_T^i(\tilde{u}_T^i|h_T^i);\label{defofJ:VT}\\
		&\quad~J_t^i(g_{t:T}^i; h_t^i, P^i, r^i)\label{defofJ:Vt} \\
		:= &\sum_{\tilde{u}_t^i} \left[ r_t^i(h_t^i, \tilde{u}_t^i) + \sum_{\tilde{z}_t^i} J_{t+1}^i(g_{t+1:T}^i; (h_t^i, \tilde{z}_t^i), P^i, r^i) P_t^i(\tilde{z}_t^i|h_t^i, \tilde{u}_t^i) \right] g_t^i(\tilde{u}_t^i|h_t^i).
	\end{align}
\end{subequations}
\noeqref{defofJ:VT,defofJ:Vt}

\begin{defn}[``Model-based'' Sequential Equilibrium]\label{def:PrSE}
	Let $g=(g^i)_{i\in\mathcal{I}}$ be a behavioral strategy profile. Let $(P, r)=(P^i, r^i)_{i\in\mathcal{I}}$ be a conjectured profile. Then, $g$ is said to be sequentially rational under $(P, r)$ if for each $i\in\mathcal{I}, t\in\mathcal{T}$ and each $h_t^i\in\mathcal{H}_t^i$, 
	\begin{equation}
		J_t^i(g_{t:T}^i; h_t^i, P^i, r^i)\geq J_t^i(\tilde{g}_{t:T}^i; h_t^i, P^i, r^i),
	\end{equation}
	for all behavioral strategies $\tilde{g}_{t:T}^i$. Conjectured profile $(P, r)$ is said to be fully consistent with $g$ if there exist a sequence of behavioral strategy and conjecture profiles  $(g^{(n)}, P^{(n)}, r^{(n)})_{n=1}^\infty$ such that
	\begin{enumerate}[(1)]
		\item $g^{(n)}$ is fully mixed, i.e. every action is chosen with positive probability at every information set.
		\item For each $i\in\mathcal{I}$, $(P^{(n), i}, r^{(n), i})$ is consistent with $g^{(n), -i}$, i.e. for each $i\in\mathcal{I}, t\in\mathcal{T}, h_t^i\in \mathcal{H}_t^i, u_t^i\in \mathcal{U}_t^i$,
		\begin{align}
			P_t^{(n), i}(z_t^i|h_t^i, u_t^i)&=\Pr^{g^{(n), -i}}(z_t^i|h_t^i, u_t^i),\\
			r_t^{(n), i}(h_t^i, u_t^i) &= \E^{g^{(n), -i}}[R_t^i|h_t^i, u_t^i].
		\end{align}
		\item $(g^{(n)}, P^{(n)}, r^{(n)})\rightarrow (g, P, r)$ as $n\rightarrow\infty$.
	\end{enumerate}
	
	A triple $(g, P, r)$ is said to be a \emph{``model-based'' sequential equilibrium}\footnote{Here we borrow the terms ``model-based'' (resp. ``model-free'') from the reinforcement learning literature: ``Model-based'' means that an algorithm constructs the underlying model $(P, r)$, while ``model-free'' usually means that the algorithm directly constructs state-action value functions $\Qfunc$.} if $g$ is sequentially rational under $(P, r)$ and $(P, r)$ is fully consistent with $g$.
\end{defn}

One can also form conjectures directly on the optimal reward-to-go given a state-action pair $(h_t^i, u_t^i)$. 
\begin{defn}[``Model-free'' Sequential Equilibrium, Definition \ref{def:KSEinmaintext} revisited]\label{def:KSE}
	Let $g=(g^i)_{i\in\mathcal{I}}$ be a behavioral strategy profile. Let $\Qfunc=(\Qfunc_t^i)_{i\in\mathcal{I},t\in\mathcal{T}}$ be a collection of functions where $\Qfunc_t^i\colon\mathcal{H}_t^i \times \mathcal{U}_t^i \mapsto [-T, T]$. The strategy profile $g$ is said to be sequentially rational under $\Qfunc$ if for each $i\in\mathcal{I}, t\in\mathcal{T}$ and each $h_t^i\in\mathcal{H}_t^i$, 
	\begin{equation}
		\mathrm{supp}(g_t^i(h_t^i))\subseteq \underset{u_t^i}{\arg\max}~ \Qfunc_t^i(h_t^i, u_t^i).
	\end{equation}
	
	The collection of functions $\Qfunc$ is said to be fully consistent with $g$ if there exist a sequence of behavioral strategy and conjectured profiles  $(g^{(n)}, \Qfunc^{(n)})_{n=1}^\infty$ such that
	\begin{enumerate}[(1)]
		\item $g^{(n)}$ is fully mixed, i.e. every action is chosen with positive probability at every information set.
		\item $\Qfunc^{(n)}$ is consistent with $g^{(n)}$, i.e.,
		\begin{align}
			\Qfunc_\tau^{(n), i}(h_\tau^i, u_\tau^i)&=\E^{g^{(n)}}\left[\sum_{t=\tau}^T R_t^i\Big|h_\tau^i, u_\tau^i\right],
		\end{align}
		for each $i\in\mathcal{I}, \tau\in\mathcal{T}, h_\tau^i\in \mathcal{H}_\tau^i, u_\tau^i\in \mathcal{U}_\tau^i$.
		\item $(g^{(n)}, \Qfunc^{(n)})\rightarrow (g, \Qfunc)$ as $n\rightarrow\infty$.
	\end{enumerate}
	A tuple $(g, \Qfunc)$ is said to be a ``model-free'' sequential equilibrium if $g$ is sequentially rational under $\Qfunc$ and $\Qfunc$ is fully consistent with $g$.
\end{defn}	

A slightly different definition is also equivalent:
\begin{defn}[``Model-free'' Sequential Equilibrium, Version 2]\label{def:KSE2}
A tuple $(g, \Qfunc)$ is said to be a ``model-free'' sequential equilibrium (version 2) if it satisfies Definition \ref{def:KSE} with condition (2) for full consistency replaced by the following condition:
\begin{itemize}
	\item[(2')] For each $i$, $\Qfunc^{(n), i}$ is consistent with $g^{(n), -i}$, i.e.
	\begin{align*}
		\Qfunc_\tau^{(n), i}(h_\tau^i, u_\tau^i)&=\E^{g^{(n), -i}}[R_\tau^i|h_\tau^i, u_\tau^i] + \underset{\tilde{g}_{\tau+1:T}^i}{\max}~\E^{\tilde{g}_{\tau+1:T}^i, g^{(n), -i}}\left[\sum_{t=\tau+1}^T R_t^i\Big|h_\tau^i, u_\tau^i\right],
	\end{align*}
	for each $\tau\in\mathcal{T}, h_\tau^i\in \mathcal{H}_\tau^i, u_\tau^i\in \mathcal{U}_\tau^i$.
\end{itemize}
\end{defn}

{To introduce the last definition of SE, which corresponds to the original definition proposed in \cite{kreps1982sequential}, we first describe the game in Section \ref{sec:suffinfo:gamemodel} as an extensive-form game tree as follows:} 
To convert the game from a simultaneous move game to a sequential game, we set $\mathcal{I}=\{1, 2, \cdots, I\}$, where the index indicates the order of movement. For convenience, for $i\in\mathcal{I}$, we use the superscript $<i$ (resp. $>i$) to represent the set of players $\{1, \cdots, i-1\}$ (resp. $\{i+1, \cdots, I\}$) that moves before (resp. after) player $i$ in any given round. At time $t=0$, nature takes action $w_0=(x_1, h_1)$ and the game enters $t=1$. For each time $t\in\mathcal{T}$, player $1$ takes action $u_t^1$ first, then followed by player $2$ taking action $u_t^2$, and so on, while nature takes action $w_t$ after player $I$ takes action $u_t^I$. 
{In this extensive form game, there are three types of nodes: (1) a node where some player $i\in\mathcal{I}$ takes action (at some time $t\in\mathcal{T}$), (2) a node where nature takes action (at some time $t\in  \{0\}\cup \mathcal{T}$), and (3) a terminal node, where the game has terminated. We denote the set of the first type of nodes corresponding to player $i$ and time $t$ as $\mathcal{O}_t^i$. A node $o_t^i\in\mathcal{O}_t^i$ can also be represented as a vector $o_t^i = (x_{1}, h_1, w_{1:t-1}, u_{1:t-1}, u_{t}^{<i})$ which contains all the moves (by all players and nature) before it. As a result, $o_t^i$ also uniquely determines the states $x_{1:t}$ and information increment vectors $z_{1:t-1}$. We denote the set of the terminal nodes as $\mathcal{O}_{T+1}$. A terminal node $o_{T+1}\in \mathcal{O}_{T+1}$ also has a vector representation $o_{T+1} = (x_{1}, h_1, w_{1:T}, u_{1:T})$.}

Given a terminal node $o_{T+1}$, all the actions of players and nature throughout the game are uniquely determined, hence the realizations of $(R_t)_{t\in\mathcal{T}}$ defined in Section \ref{sec:suffinfo:gamemodel} are also uniquely determined. Let $\Lambda=(\Lambda^i)_{i\in\mathcal{I}}, \Lambda^i\colon\mathcal{O}_{T+1} \mapsto \mathbb{R}$ be the mappings from terminal nodes to total payoffs, i.e. $\Lambda^i(o_{T+1}) = \sum_{t=1}^T r_t^i$, where $r_t^i$ is the realization of $R_t^i$ corresponding to $o_{T+1}$. Also define $\Lambda_{\tau}^i(o_{T+1}) = \sum_{t=\tau}^T r_t^i$ for each $\tau\in\mathcal{T}$.

{Now, as we have constructed the extensive-form game, it is helpful to view the nodes in the game tree as a stochastic process.}
Define $O_t^i$ to be a random variable with support on $\mathcal{O}_t^i$ that represents the node player $i$ is at before taking action at time $t$. Let $O_{T+1}$ be a random variable with support on $\mathcal{O}_{T+1}$ that represents the terminal node the game ends at. 
If we view $(\mathcal{T}\times \mathcal{I}) \cup \{T+1 \}$ as a set of time indices with lexicographic ordering, the random process $(O_t^i)_{(t, i)\in\mathcal{T}\times \mathcal{I}}\cup(O_{T+1})$ is a controlled Markov Chain controlled by action $U_t^i$ at time $(t, i)$. 

\begin{defn}[Classical Sequential Equilibrium \citep{kreps1982sequential}]\label{def:classicalSE}
	An assessment is a pair $(g, \mu)$, where $g$ is a behavioral strategy profile of players (excluding nature) as described in Section \ref{sec:suffinfo:gamemodel}, and $\mu=(\mu_t^i)_{t\in\mathcal{I}, i\in\mathcal{I}}, \mu_t^i\colon\mathcal{H}_t^i \mapsto \Delta(\mathcal{O}_t^i)$ is a belief system. Then, $g$ is said to be sequentially rational given $\mu$ if
	\begin{equation}\label{eq:seqrat}
		\sum_{o_t^i}\E^{g_{t:T}^i, g_t^{>i}, g_{t:T}^{-i}} [\Lambda^i(O_{T+1})|o_t^i]\mu_t^i(o_t^i|h_t^i) \geq \sum_{o_t^i}\E^{\tilde{g}_{t:T}^i, g_t^{>i}, g_{t:T}^{-i}} [\Lambda^i(O_{T+1})|o_t^i]\mu_t^i(o_t^i|h_t^i),
	\end{equation}
	for all $i\in\mathcal{I}, t\in\mathcal{T}, h_t^i\in\mathcal{H}_t^i$, and all behavioral strategies $\tilde{g}_{t:T}^i$. 
   The belief system $\mu$ is said to be fully consistent with $g$ if there exist a sequence of assessments $(g^{(n)}, \mu^{(n)})_{n=1}^\infty \rightarrow (g, \mu)$ such that $g^{(n)}$ is a fully mixed strategy profile and
	\begin{enumerate}[(1)]
		\item $g^{(n)}$ is fully mixed.
		\item $\mu^{(n)}$ is consistent with $g^{(n)}$, i.e. $\mu_t^{(n), i} (o_t^i|h_t^i) = \Pr^{g^{(n)}}(o_t^i|h_t^i)$ for all $t\in\mathcal{T}, i\in\mathcal{I} ,h_t^i\in\mathcal{H}_t^i$, and $o_t^i\in\mathcal{O}_t^i$.
		
		\item $(g^{(n)}, \mu^{(n)}) \rightarrow (g, \mu)$ as $n\rightarrow\infty$.
	\end{enumerate}
	An assessment $(g, \mu)$ is said to be a (classical) sequential equilibrium if $g$ is sequentially rational given $\mu$ and $\mu$ is fully consistent with $g$.
\end{defn}

\begin{remark}
	Since the instantaneous rewards $R_{1:t-1}^i$ {have already been} realized at time $t$, replacing the total reward $\Lambda$ with reward-to-go $\Lambda_t$ in \eqref{eq:seqrat} would result in an equivalent definition. 
\end{remark}

\begin{thm}\label{thm:SEdefequiv}
	Definitions ~\ref{def:PrSE}, \ref{def:KSE}, \ref{def:KSE2}, and \ref{def:classicalSE} are equivalent for strategy profiles.
\end{thm}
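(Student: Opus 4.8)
The plan is to prove that each of Definitions~\ref{def:classicalSE}, \ref{def:KSE}, and \ref{def:KSE2} is equivalent to Definition~\ref{def:PrSE} as a condition on the strategy profile $g$; the various belief and conjecture systems will be manufactured from one another by re-using the same fully mixed witness sequences and extracting convergent subsequences (justified by compactness of the relevant finite-dimensional simplices and cubes, together with continuity of the maps $g^{(n)}\mapsto\mu^{(n)}$, $g^{(n)}\mapsto(P^{(n)},r^{(n)})$, $g^{(n)}\mapsto Q^{(n)}$ and of the finite-horizon dynamic-programming map). The structural fact underlying everything is that, once $g^{-i}$ is fixed, player $i$'s best-response problem is a finite-horizon MDP with ``state'' $H_t^i$, action $U_t^i$, transition $(h_t^i,u_t^i)\mapsto\Pr^{g^{-i}}(z_t^i\mid h_t^i,u_t^i)$ under $H_{t+1}^i=(H_t^i,Z_t^i)$, and reward $(h_t^i,u_t^i)\mapsto\E^{g^{-i}}[R_t^i\mid h_t^i,u_t^i]$; crucially, because $h_t^i$ already determines $U_{1:t-1}^i$, all of these conditionals, as well as $\Pr^{g}(o_t^i\mid h_t^i)$, are independent of $g^i$. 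Hence, with the \emph{true} conjectures plugged in, $J_t^i(g_{t:T}^i;h_t^i,P^i,r^i)=\E^{g_{t:T}^i,g^{-i}}[\sum_{\tau\ge t}R_\tau^i\mid h_t^i]=\sum_{o_t^i}\E^{g_{t:T}^i,g^{-i}}[\Lambda_t^i(O_{T+1})\mid o_t^i]\,\Pr^{g}(o_t^i\mid h_t^i)$, the last equality being the tower property over $O_t^i$. I would first record these as identities between continuous functions of a fully mixed $g^{(n)}$.

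\emph{Definition~\ref{def:classicalSE} $\Leftrightarrow$ Definition~\ref{def:PrSE}.} From a witness $g^{(n)}$ for either side, one derives the other's conjecture system: $\mu_t^{(n),i}(\cdot\mid h_t^i)=\Pr^{g^{(n)}}(\cdot\mid h_t^i)$ from a model-based witness, or the conditionals $P_t^{(n),i},r_t^{(n),i}$ of $Z_t^i,R_t^i$ given $(h_t^i,u_t^i)$ under $g^{(n),-i}$ from a belief-based witness; a subsequence makes the derived quantities converge while preserving convergence of $g^{(n)}$, so one consistency notion yields the other. Passing to the limit in the pointwise identity above gives $\sum_{o_t^i}\E^{g_{t:T}^i,g^{-i}}[\Lambda_t^i\mid o_t^i]\mu_t^i(o_t^i\mid h_t^i)=J_t^i(g_{t:T}^i;h_t^i,P^i,r^i)$ for every $g_{t:T}^i$, so the two sequential-rationality conditions are literally the same inequality. \emph{Definition~\ref{def:PrSE} $\Leftrightarrow$ Definition~\ref{def:KSE2}.} For a conjectured profile, let $Q^i$ be produced from $(P^i,r^i)$ by the backward dynamic-programming recursion on the MDP $(P^i,r^i)$, with $Q_{T+1}\equiv0$. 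By the principle of optimality, $g$ is optimal from every $h_t^i$ (sequential rationality of Definition~\ref{def:PrSE}) iff $\mathrm{supp}(g_t^i(h_t^i))\subseteq\arg\max_{u_t^i}Q_t^i(h_t^i,u_t^i)$ for all $t,h_t^i$ (sequential rationality of Definition~\ref{def:KSE2}). Condition~(2') of Definition~\ref{def:KSE2} says exactly that $Q^{(n),i}$ is the optimal state--action value of $(P^{(n),i},r^{(n),i})$, i.e. the image of $(P^{(n),i},r^{(n),i})$ under the continuous DP map; so $Q^{(n),i}\to Q^i$ along a subsequence iff $(P^{(n),i},r^{(n),i})\to(P^i,r^i)$, and full consistency transfers both ways.

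\emph{Definition~\ref{def:KSE} $\Leftrightarrow$ Definition~\ref{def:KSE2}.} The sequential-rationality conditions are identical, so only the consistency clauses differ: Definition~\ref{def:KSE} uses the on-policy reward-to-go $Q_\tau^{(n),i}(h_\tau^i,u_\tau^i)=\E^{g^{(n)}}[\sum_{t\ge\tau}R_t^i\mid h_\tau^i,u_\tau^i]$, Definition~\ref{def:KSE2} the best-response value. I would show that a $Q$ coming from Definition~\ref{def:KSE} automatically equals the optimal value of the conjectured model $(P,r):=\lim(P^{(n)},r^{(n)})$, by backward induction on $t$: at $t=T$, $Q_T^{(n),i}=\E^{g^{(n),-i}}[R_T^i\mid\cdot]\to r_T^i$, the optimal value at $T$; assuming the claim at $t+1$, expand $Q_t^{(n),i}(h_t^i,u_t^i)$ by conditioning on $Z_t^i$ and $U_{t+1}^i$, let $n\to\infty$, and use that $g$ is greedy with respect to $Q_{t+1}$, so that $\sum_{u_{t+1}^i}Q_{t+1}^i(\cdot,u_{t+1}^i)g_{t+1}^i(u_{t+1}^i\mid\cdot)=\max_{u_{t+1}^i}Q_{t+1}^i(\cdot,u_{t+1}^i)$; this recovers the Bellman equation for $(P^i,r^i)$ at stage $t$, so $Q_t^i$ is the optimal value, and $(g,Q)$ is also a Definition~\ref{def:KSE2} equilibrium. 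Conversely, for a Definition~\ref{def:KSE2} equilibrium, extract a convergent subsequence of the on-policy values and run the same backward induction (greediness with respect to $Q$, which by the previous step equals the model-optimal value) to see that the on-policy limit equals $Q$, so $(g,Q)$ is a Definition~\ref{def:KSE} equilibrium.

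\emph{Main obstacle.} The delicate point is the last step: at finite $n$ the on-policy $Q^{(n)}$ of Definition~\ref{def:KSE} is \emph{not} the optimal $Q^{(n)}$ of Definition~\ref{def:KSE2}, since the witnesses $g^{(n)}$ are fully mixed and therefore suboptimal, so the two consistency notions genuinely look different. The reconciliation rests on the fact that in the \emph{limit} $g$ is greedy at \emph{every} information set --- off-path ones included, which is precisely the scope of the sequential-rationality requirement --- and this greediness propagates backward through the Bellman recursion to force the limiting on-policy value to coincide with the limiting optimal value. The remaining work is bookkeeping: keeping track of which quantities are derived from which witness, and extracting a single subsequence along which $\mu^{(n)}$, $(P^{(n)},r^{(n)})$, and both families of $Q^{(n)}$ converge simultaneously.
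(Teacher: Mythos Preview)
Your proposal is correct and rests on the same substantive ideas as the paper's proof: the MDP structure of player $i$'s problem with $g^{-i}$ fixed (so that $\Pr^{g}(o_t^i\mid h_t^i)$, $\Pr^{g^{-i}}(z_t^i\mid h_t^i,u_t^i)$, and $\E^{g^{-i}}[R_t^i\mid h_t^i,u_t^i]$ are all $g^i$-independent), continuity of the finite-horizon dynamic-programming map, compactness to extract a common convergent subsequence of all derived conjectures, and the backward-induction observation that greediness of $g$ with respect to the limiting $Q$ forces the on-policy limit to coincide with the optimal-value limit.

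The organization, however, differs. The paper argues along a \emph{cycle}, Definition~\ref{def:classicalSE} $\Rightarrow$ \ref{def:PrSE} $\Rightarrow$ \ref{def:KSE} $\Rightarrow$ \ref{def:KSE2} $\Rightarrow$ \ref{def:classicalSE}, so each step is a one-way implication; in particular, the paper never proves \ref{def:KSE2} $\Rightarrow$ \ref{def:KSE} directly but detours through classical SE (Step~4) and back around. You instead set up three \emph{bidirectional} equivalences with Definition~\ref{def:PrSE} (and \ref{def:KSE2}) as a hub. Your direct argument for \ref{def:KSE2} $\Rightarrow$ \ref{def:KSE} --- take the on-policy $\tilde Q^{(n)}$, pass to a subsequential limit, and use greediness of $g$ with respect to $Q$ (which you already know equals the model-optimal value) to close the Bellman recursion --- is cleaner than the paper's route through belief systems; it avoids Berge's theorem and the extensive-form bookkeeping of Step~4. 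The price is that you prove six implications instead of four, though several pairs share the same induction. One minor imprecision: your phrase ``$Q^{(n),i}\to Q^i$ along a subsequence iff $(P^{(n),i},r^{(n),i})\to(P^i,r^i)$'' overstates the converse (the DP map need not be injective); what you actually use, and what suffices, is that both are derived from the \emph{same} witness $g^{(n)}$, so a single subsequence makes both converge and continuity of the DP map pins down the relation between the limits.
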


\begin{proof}
    {We complete the proof via four steps: In each step, we show that if $g$ is a strategy profile satisfying one definition of SE, then it satisfy one of the other definitions of SE as well. We follow the following diagram: Definition \ref{def:classicalSE} $\Rightarrow$ Definition \ref{def:PrSE} $\Rightarrow$ Definition \ref{def:KSE} $\Rightarrow$ Definition \ref{def:KSE2} $\Rightarrow$ Definition \ref{def:classicalSE}.}\\~

    \textbf{Step 1:} Classical SE (Definition \ref{def:classicalSE}) $\Rightarrow$ ``Model-based'' SE (Definition \ref{def:PrSE})\\~
		
		Let $(g, \mu)$ satisfy Definition \ref{def:classicalSE}. Let $(g^{(n)}, \mu^{(n)})$ be a sequence of assessments that satisfies conditions (1)-(3) of fully consistency in Definition \ref{def:classicalSE}. 
		
		Set $P_t^{(n), i}(z_t^i|h_t^i, u_t^i) = \Pr^{g^{(n)}}(z_t^i|h_t^i, u_t^i)$ and $r_t^{(n), i}(h_t^i, u_t^i) = \E^{g^{(n)}}[R_t^i|h_t^i, u_t^i]$ for all $h_t^i\in\mathcal{H}_t^i, u_t^i\in\mathcal{U}_t^i$.
		
        {Recall that we can write $O_t^i = (X_1, H_1, W_{1:t-1}, U_{1:t-1}, U_t^{<i})$, and $(X_{1:t}, Z_{1:t-1})$ can be expressed as a function of $O_t^i$.}
        Therefore there exist fixed functions $f_t^{i, Z}, f_t^{i, R}$ such that 
			$Z_t^i = f_t^{i, Z}(O_t^i, U_t^i, U_t^{>i}, W_t)$,
			$R_t^i = f_t^{i, R}(O_t^i, U_t^i, U_t^{>i}, W_t)$. {Furthermore, for all $j>i$, there also exists functions $f_t^{j, i, H}$ such that
			$H_t^j = f_t^{j, i, H}(O_t^i)$ (since $H_t^j=(H_1^i,Z_{1:t-1}^i)$)}.
		Since $\mu_t^{(n), i}(o_t^i|h_t^i) = \Pr^{g^{(n)}}(o_t^i| h_t^i)$ we have 
		\begin{align}
  \begin{split}
			&\es P_t^{(n), i}(z_t^i|h_t^i, u_t^i) \\
			&= \sum_{o_t^i, \tilde{u}_t^{>i}, \tilde{w}_t} \bm{1}_{\{z_t^i = f_t^{i, Z}(o_t^i, u_t^i, \tilde{u}_t^{>i}, \tilde{w}_t) \} } \Pr(\tilde{w}_t) \left(\prod_{j=i+1}^I g_t^{(n), j}(\tilde{u}_t^j|f_t^{j, i, H}(o_t^i))\right) \mu_t^{(n)}(o_t^i|h_t^i),
   \end{split}\\
   \begin{split}
			&\es r_t^{(n), i}(h_t^i, u_t^i) \\
			&= \sum_{o_t^i, \tilde{u}_t^{>i}, \tilde{w}_t} f_t^{i, R}(o_t^i, u_t^i,\tilde{u}_t^{>i}, \tilde{w}_t) \Pr(\tilde{w}_t) \left(\prod_{j=i+1}^I g_t^{(n), j}(\tilde{u}_t^j|f_t^{j, i, H}(o_t^i))\right) \mu_t^{(n)}(o_t^i|h_t^i).
   \end{split}
		\end{align}
	Therefore, as $\mu^{(n)} \rightarrow \mu$, $g^{(n)}\rightarrow g$, we have $(P^{(n)}, r^{(n)} )\rightarrow (P, r)$ for some $(P, r)$.  
		
		Let $\tau\in\mathcal{T}$ and $\tilde{g}_{\tau:T}^i$ be an arbitrary strategy. First, observe that one can represent the conditional reward-to-go $\E^{g^{(n)}}[\sum_{t=\tau}^T R_t^i|h_\tau^i]$ using $\mu^{(n)}$ or $(P^{(n)}, r^{(n)})$. Hence we have
		\begin{equation}\label{eq:classicalSErtg}
		\begin{split}
			\sum_{o_\tau^i}\E^{\tilde{g}_{\tau:T}^i, g_{\tau}^{(n), >i}, g_{\tau+1:T}^{(n), -i}} [\Lambda_\tau^i(O_{T+1})|o_\tau^i]\mu_\tau^{(n), i}(o_\tau^i|\tau_t^i) 
			&= J_t^i(\tilde{g}_{\tau:T}^i; h_\tau^i, P^{(n), i}, r^{(n), i}),
		\end{split}
		\end{equation}
		where $J_t^i$ is as defined in \eqref{defofJ}.
		
		Observe that the left-hand side of \eqref{eq:classicalSErtg} is continuous in $(g_{\tau}^{(n), >i}, g_{\tau+1:T}^{(n), -i}, \mu_\tau^{(n), i})$ since it is a sum of products of components of $(g_{\tau}^{(n), >i}, g_{\tau+1:T}^{(n), -i}, \mu_\tau^{(n), i})$. Also observe that the right-hand side of \eqref{eq:classicalSErtg} is continuous in $(P^{(n), i}, r^{(n), i})$ since it is a sum of products of components of $(P^{(n), i}, r^{(n), i})$ by the definition in \eqref{defofJ}.
		Therefore by taking limit as $n\rightarrow\infty$, we conclude that
		\begin{equation}
			\sum_{o_\tau^i}\E^{\tilde{g}_{\tau:T}^i, g^{-i}} [\Lambda_\tau^i(O_{T+1})|o_\tau^i]\mu_\tau^{i}(o_\tau^i|h_\tau^i) = J_\tau^i(\tilde{g}_{\tau:T}^i; h_\tau^i, P^{i}, r^{i}),\label{eq:rewardequivmuPr}
		\end{equation} 
		for all strategies $\tilde{g}_{\tau:T}^i$. Using sequential rationality of $g$ with respect to $\mu$ and \eqref{eq:rewardequivmuPr} we conclude that
		\begin{align}
			J_t^i(g_{\tau:T}^i; h_\tau^i, P^{i}, r^{i}) \geq  J_t^i(\tilde{g}_{\tau:T}^i; h_\tau^i, P^{i}, r^{i}),
		\end{align}
		for all $\tau\in \mathcal{T}, i\in\mathcal{I}, h_\tau^i \in\mathcal{H}_\tau^i$, i.e. $g$ is also sequentially rational given $(P, r)$.\\~
		
		\textbf{Step 2}: ``Model-based'' SE (Definition \ref{def:PrSE}) $\Rightarrow$ ``Model-free'' SE version 1 (Definition \ref{def:KSE})\\~
		
		Let $(g, P, r)$ be a sequential equilibrium under Definition \ref{def:PrSE}, and let $(g^{(n)}, P^{(n)}, r^{(n)})$ satisfy conditions (1)-(3) of full consistency in Definition \ref{def:PrSE}. Set
		\begin{align}
			\Qfunc_\tau^{(n), i}(h_\tau^i, u_{\tau}^i) = \E^{g^{(n)}}\left[\sum_{t=\tau}^T R_t^i\Big|h_\tau^i, u_{\tau}^i \right], 
		\end{align} 
		for all $\tau\in \mathcal{T}, i\in\mathcal{I}, h_\tau^i \in\mathcal{H}_\tau^i, u_\tau^i\in\mathcal{U}_\tau^i$. Then $\Qfunc^{(n), i}$ satisfies the recurrence relation 
        \begin{subequations}
		\begin{align}
			\Qfunc_T^{(n), i}(h_T^i, u_T^i) &= r_T^{(n), i}(h_T^i, u_T^i),\\
			V_t^{(n), i}(h_t^i) &:= \sum_{\tilde{u}_t^i} \Qfunc_t^{(n), i}(h_t^i, \tilde{u}_t^i)g_t^{(n), i}(\tilde{u}_t^i|h_t^i), \quad \forall t\in\mathcal{T},\\
			\Qfunc_{t}^{(n), i}(h_t^i, u_t^i) &= r_t^{(n), i}(h_t^i, u_t^i) \\
			&+ \sum_{\tilde{z}_{t}^i} V_{t+1}^{(n), i}((h_t^i, \tilde{z}_{t}^i)) P_t^{(n), i}(\tilde{z}_{t}^i|h_t^i, u_t^i),\quad \forall t\in\mathcal{T}\backslash\{T\}.
		\end{align}
        \end{subequations}
	
		Since $(g^{(n)}, P^{(n)}, r^{(n)}) \rightarrow (g, P, r)$ as $n\rightarrow\infty$, we have $\Qfunc^{(n)} \rightarrow \Qfunc$ where $\Qfunc=(\Qfunc_t^i)_{t\in\mathcal{T}, i\in\mathcal{I}}$ satisfies
		\begin{subequations}\label{eq:PrlimitK}
		\begin{align}
			\Qfunc_T^{i}(h_T^i, u_T^i) &= r_T^{i}(h_T^i, u_T^i),\label{eqPrlimitK:K}\\
			V_t^{i}(h_t^i) &:= \sum_{\tilde{u}_t^i} \Qfunc_t^{i}(h_t^i, \tilde{u}_t^i)g_t^{i}(\tilde{u}_t^i|h_t^i),\quad \forall t\in\mathcal{T},\label{eqPrlimitK:V}\\
			\Qfunc_{t}^{i}(h_t^i, u_t^i) &= r_t^{i}(h_t^i, u_t^i) \notag \\
			&+ \sum_{\tilde{z}_{t}^i} V_{t+1}^{i}((h_t^i, \tilde{z}_{t}^i)) P_t^{i}(\tilde{z}_{t}^i|h_t^i, u_t^i),\quad \forall t\in\mathcal{T}\backslash\{T\}. \label{eqPrlimitK:K1}
		\end{align}
		\end{subequations}
		 Comparing \eqref{eq:PrlimitK} with {the reward-to-go function $J_t^i$ defined in \eqref{defofJ}, we observe that}
        \begin{align}
			V_t^{i}(h_t^i) = J_t^i(g_{t:T}^i; h_t^i, P^i, r^i)\label{eq:appSE:VisJ},
		\end{align}
		for all $t\in \mathcal{T}, i\in\mathcal{I}, h_\tau^i \in\mathcal{H}_\tau^i$.
	
		Let $\tilde{g}_t^i$ be a strategy such that $\hat{g}_t^i(h_t^i) = \eta \in\Delta(\mathcal{U}_t^i)$, then 
		\begin{align}
			&\es J_t^i((\tilde{g}_{t}^i, g_{t+1:T}^i); h_t^i, P^i, r^i)\\
			&= \sum_{\tilde{u}_t}\left( r_t^{i}(h_t^i, \tilde{u}_t^i) + \sum_{\tilde{z}_{t}^i} J_{t+1}^i(g_{t+1:T}^i; (h_t^i, \tilde{z}_t^i), P^i, r^i) P_t^i(\tilde{z}_t^i|h_t^i, \tilde{u}_t^i)\right)\eta(\tilde{u}_t^i)\label{eq:appSE:Jexpansion} \\
			&= \sum_{\tilde{u}_t}\left(r_t^{i}(h_t^i, \tilde{u}_t^i) + \sum_{\tilde{z}_{t}^i} V_{t+1}^i((h_t^i, \tilde{z}_t^i)) P_t^i(\tilde{z}_t^i|h_t^i, \tilde{u}_t^i)\right)\eta(\tilde{u}_t^i)\label{eq:appSE:Vreplace} \\
			&= \sum_{\tilde{u}_t}\Qfunc_{t}^{i}(h_t^i, \hat{u}_t^i) \eta(\tilde{u}_t^i),\label{eq:appSE:Qeta}
		\end{align}
        {where we substitute \eqref{defofJ} in \eqref{eq:appSE:Jexpansion}, \eqref{eq:appSE:VisJ} in \eqref{eq:appSE:Vreplace}, and \eqref{eqPrlimitK:K1} in \eqref{eq:appSE:Qeta}.}
	
		By sequential rationality of $g$ with respect to $(P, r)$, we have $$J_t^i(g_{t:T}^i; h_t^i, P^i, r^i) \geq J_t^i((\tilde{g}_{t}^i, g_{t+1:T}^i); h_t^i, P^i, r^i),$$which means that
		\begin{align}
			\sum_{\tilde{u}_t}\Qfunc_{t}^{i}(h_t^i, \tilde{u}_t^i) g_t^i(\tilde{u}_t^i|h_t^i) \geq \sum_{\tilde{u}_t}\Qfunc_{t}^{i}(h_t^i, \tilde{u}_t^i) \eta(\tilde{u}_t^i),
		\end{align}
		for all $\eta\in\Delta(\mathcal{U}_t^i)$ for all $t\in \mathcal{T}, i\in\mathcal{I}, h_\tau^i \in\mathcal{H}_\tau^i$. Hence $g$ is sequentially rational given $\Qfunc$. Therefore $(g, \Qfunc)$ is a sequential equilibrium under Definition \ref{def:KSE}.\\~
		
		\textbf{Step 3}: ``Model-free'' SE version 1 (Definition \ref{def:KSE}) $\Rightarrow$ ``Model-free'' SE version 2 (Definition \ref{def:KSE2})\\~
		
		Let $(g, \Qfunc)$ be a sequential equilibrium under Definition \ref{def:KSE} and let $(g^{(n)}, \Qfunc^{(n)})$ satisfies conditions (1)-(3) of full consistency in Definition \ref{def:KSE}. Then $\Qfunc^{(n), i}$ satisfies 
        \begin{subequations}
		\begin{align}
			\Qfunc_T^{(n), i}(h_T^i, u_T^i) &= \E^{g^{(n), -i}}[R_T^i|h_T^i, u_T^i],\\
			V_t^{(n), i}(h_t^i) &:= \sum_{\tilde{u}_t^i} \Qfunc_t^{(n), i}(h_t^i, \tilde{u}_t^i)g_t^{(n), i}(\tilde{u}_t^i|h_t^i), \quad \forall t\in\mathcal{T},\\
			\Qfunc_{t}^{(n), i}(h_t^i, u_t^i) &= \E^{g^{(n), -i}}[R_t^i|h_t^i, u_t^i]\notag \\
			&+ \sum_{\tilde{z}_{t}^i} V_{t+1}^{(n), i}((h_t^i, \tilde{z}_{t}^i)) \Pr^{g^{(n), -i}}(\tilde{z}_{t}^i|h_t^i, u_t^i),\quad \forall t\in\mathcal{T}\backslash\{T\},\label{eq:appSE:2:Qnt}
		\end{align}
        \end{subequations}
		and $\Qfunc^{(n)} \rightarrow \Qfunc$ as $n\rightarrow\infty$.	
		Set
		\begin{align}
			\hat{\Qfunc}_\tau^{(n), i}(h_\tau^i, u_\tau^i) = \E^{g^{(n), -i}}[R_\tau^i|h_\tau^i, u_\tau^i] + \underset{\tilde{g}_{\tau+1:T}^i}{\max}~\E^{\tilde{g}_{\tau+1:T}^i, g^{(n), -i}}\left[\sum_{t=\tau+1}^T R_t^i\Big|h_\tau^i, u_\tau^i\right],
		\end{align}
		for each $\tau\in\mathcal{T}, h_\tau^i\in \mathcal{H}_\tau^i, u_\tau^i\in \mathcal{U}_\tau^i$. Then $\hat{\Qfunc}^{(n), i}$ satisfies the recurrence relation
        \begin{subequations}
		\begin{align}
			\hat{\Qfunc}_T^{(n), i}(h_T^i, u_T^i) &= \E^{g^{(n), -i}}[R_T^i|h_T^i, u_T^i],\\
			\hat{V}_t^{(n), i}(h_t^i) &:= \max_{\tilde{u}_t^i} \hat{\Qfunc}_t^{(n), i}(h_t^i, \tilde{u}_t^i), \quad \forall t\in\mathcal{T},\\
			\hat{\Qfunc}_{t}^{(n), i}(h_t^i, u_t^i) &= \E^{g^{(n), -i}}[R_t^i|h_t^i, u_t^i] \notag\\
			&+ \sum_{\tilde{z}_{t}^i} \hat{V}_{t+1}^{(n), i}((h_t^i, \tilde{z}_{t}^i)) \Pr^{g^{(n), -i}}(\tilde{z}_{t}^i|h_t^i, u_t^i),\quad \forall t\in\mathcal{T}\backslash\{T\}.\label{eq:appSE:2:hatQnt}
		\end{align}
        \end{subequations}
	
		\textbf{Claim:} $\hat{\Qfunc}_{t}^{(n)} \rightarrow \Qfunc_{t}^{i}$ as $n\rightarrow\infty$.
		
		Given the claim, we have $(g^{(n)}, \hat{\Qfunc}^{(n)})$ satisfying conditions (1)(2')(3) of full consistency in Definition \ref{def:KSE2}. Therefore $(g, \Qfunc)$ is also a sequential equilibrium under Definition \ref{def:KSE2}, and we complete this part of the proof.
		
		\textbf{Proof of Claim:} By induction on time $t\in\mathcal{T}$.
		
		\textbf{Induction Base:} Observe that $\hat{\Qfunc}_T^{(n)} = \Qfunc_T^{(n)}$ by construction. Since $\Qfunc_T^{(n)}\rightarrow \Qfunc_T$ we also have $\hat{\Qfunc}_T^{(n)} \rightarrow \Qfunc_T$.
		
		\textbf{Induction Step:} Suppose that the result is true for time $t$. We prove it for time $t-1$.
		
		By induction hypothesis and $g^{(n)}\rightarrow g$, we have
		\begin{align}
			\hat{V}_t^{(n), i}(h_t^i) =& \max_{\tilde{u}_t^i} \hat{\Qfunc}_t^{(n), i}(h_t^i, \tilde{u}_t^i)
			\xrightarrow{n\rightarrow\infty} \max_{\tilde{u}_t^i} \Qfunc_t^{i}(h_t^i, \tilde{u}_t^i).\label{eq:hatVconv}
		\end{align}
		
		Since $\Qfunc^{(n)} \rightarrow \Qfunc$ and $g^{(n)}\rightarrow g$, we have
		\begin{align}
			V_t^{(n), i}(h_t^i) =& \sum_{\tilde{u}_t^i} \Qfunc_t^{(n), i}(h_t^i, \tilde{u}_t^i)g_t^{(n), i}(\tilde{u}_t^i|h_t^i)\\
			\xrightarrow{n\rightarrow\infty}& \sum_{\tilde{u}_t^i} \Qfunc_t^{i}(h_t^i, \tilde{u}_t^i)g_t^{i}(\tilde{u}_t^i|h_t^i)=: V_t^i(h_t^i).\label{eq:KconvVconv}
		\end{align}
	
		Since $g$ is sequentially rational given $\Qfunc$, we have
		\begin{align}
			\sum_{\tilde{u}_t^i} \Qfunc_t^{i}(h_t^i, \tilde{u}_t^i)g_t^{i}(\tilde{u}_t^i|h_t^i) &= \max_{\tilde{u}_t^i} \Qfunc_t^{i}(h_t^i, \tilde{u}_t^i).\label{eq:Veqmax}
		\end{align}
	
		Combining \eqref{eq:hatVconv}\eqref{eq:KconvVconv}\eqref{eq:Veqmax} we have $\hat{V}_t^{(n), i}(h_t^i) \rightarrow V_t^{i}(h_t^i)$ for all $h_t^i\in\mathcal{H}_t^i$. Since $\mathcal{H}_t^i$ is a finite set, we have
		\begin{align}
			\max_{\tilde{h}_t^i}|\hat{V}_t^{(n), i}(\tilde{h}_t^i) - V_t^{(n), i}(\tilde{h}_t^i)| \xrightarrow{n\rightarrow\infty} 0.
		\end{align}
	
		We then have
		\begin{align}
			&\es|\hat{\Qfunc}_{t-1}^{(n), i}(h_t^i, u_t^i) -\Qfunc_{t-1}^{(n), i}(h_t^i, u_t^i)| \\
			&= \left| \sum_{\tilde{z}_{t-1}^i} \left[\hat{V}_{t}^{(n), i}((h_{t-1}^i, \tilde{z}_{t-1}^i)) - V_{t}^{(n), i}((h_{t-1}^i, \tilde{z}_{t-1}^i))\right] \Pr^{g_{t-1}^{(n), -i}}(\tilde{z}_{t-1}^i|h_{t-1}^i, u_{t-1}^i) \right|\label{eq:appSE:2:QV} \\
			&\leq \max_{\tilde{z}_{t-1}^i}|\hat{V}_{t}^{(n), i}((h_{t-1}^i, \tilde{z}_{t-1}^i)) - V_{t}^{(n), i}((h_{t-1}^i, \tilde{z}_{t-1}^i))|\xrightarrow{n\rightarrow\infty} 0,
		\end{align}
        {where we substitute \eqref{eq:appSE:2:Qnt}\eqref{eq:appSE:2:hatQnt} in \eqref{eq:appSE:2:QV}.}
		Since $\Qfunc_{t-1}^{(n), i}(h_t^i, u_t^i) \rightarrow \Qfunc_{t-1}^{i}(h_t^i, u_t^i)$, we {conclude that} $\hat{\Qfunc}_{t-1}^{(n), i}(h_t^i, u_t^i) \rightarrow \Qfunc_{t-1}^{i}(h_t^i, u_t^i)$, establishing the induction step.\\~
		
		\textbf{Step 4}: ``Model-free'' SE version 2 (Definition \ref{def:KSE2}) $\Rightarrow$ Classical SE (Definition \ref{def:classicalSE})\\~
		
		Let $(g, \Qfunc)$ be a sequential equilibrium under Definition \ref{def:KSE2} and let $(g^{(n)}, \hat{\Qfunc}^{(n)})$ satisfies conditions (1)(2')(3) of full consistency in Definition \ref{def:KSE2}.
		
		Define the beliefs $\mu^{(n)}$ on the nodes of the extensive-form game through $\mu^{(n)}(o_t^i|h_t^i) = \Pr^{g^{(n)}}(o_t^i|h_t^i)$. By taking subsequences, without lost of generality, assume that $\mu^{(n)} \rightarrow \mu$.
		
		Let $\hat{g}_{t}^i$ be an arbitrary strategy, then {by condition (2') of Definition \ref{def:KSE2}, we can write}	\begin{equation}\label{eq:maxrew2goghattau}
			\begin{split}
				&\es\sum_{\tilde{u}_t^i} \hat{\Qfunc}_t^{(n), i}(h_t^i, \tilde{u}_t^i) \hat{g}_{t}^i(\tilde{u}_t^i|h_{t}^i) \\
				&= \max_{\tilde{g}_{t+1:T}^i} \sum_{o_t^i} \E^{\hat{g}_t^i, \tilde{g}_{t+1:T}^i, g_{t}^{(n), >i}, g_{t+1:T}^{(n), -i}}[\Lambda_t^i(O_{T+1})|o_t^i]\ \mu_t^{(n), i}(o_t^i|h_t^i).
			\end{split}
		\end{equation}
		
		For each $o_t^i$, $\E^{\tilde{g}_{t}^{\geq i}, \tilde{g}_{t+1:T}}[\Lambda_t^i(O_{t+1})|o_t^i]$ is continuous in $(\tilde{g}_{t}^{\geq i}, \tilde{g}_{t+1:T})$ since it is the sum of product of components of $(\tilde{g}_{t}^{\geq i}, \tilde{g}_{t+1:T})$. Therefore, 
		\begin{align}
			&\sum_{o_t^i}\E^{\hat{g}_t^i, \tilde{g}_{t+1:T}^i, g_{t}^{(n), >i}, g_{t+1:T}^{(n), -i}}[\Lambda_t^i(O_{t+1})|o_t^i]\ \mu_t^{(n), i}(o_t^i|h_t^i)\notag\\
			\xrightarrow{n\rightarrow\infty}& \sum_{o_t^i}\E^{\hat{g}_t^i, \tilde{g}_{t+1:T}^i, g_{t}^{>i}, g_{t+1:T}^{ -i}}[\Lambda_t^i(O_{t+1})|o_t^i]\ \mu_t^{i}(o_t^i|h_t^i),
		\end{align}
		for each behavioral straetegy $\tilde{g}_{t+1:T}^i$. Applying Berge's Maximum Theorem \citep{sundaram1996first}, and taking the limit on both sides of \eqref{eq:maxrew2goghattau}, we obtain
		\begin{align}
			\sum_{\tilde{u}_t^i} \Qfunc_t^{i}(h_t^i, \tilde{u}_t^i)~ \hat{g}_{t}^i(\tilde{u}_t^i|h_{t}^i)=
			\max_{\tilde{g}_{t+1:T}^i} \sum_{o_t^i} \E^{\hat{g}_t^i, \tilde{g}_{t+1:T}^i, g_{t}^{>i}, g_{t+1:T}^{-i}}[\Lambda_t^i(O_{T+1})|o_t^i]~\mu_t^{i}(o_t^i|h_t^i),\label{eq:B50}
		\end{align}
		for all $t\in\mathcal{T}, i\in\mathcal{I}, h_t^i\in\mathcal{H}_t^i$, and all behavioral strategy $\hat{g}_{t}^i$.
		
		Sequential rationality of $g$ to $\Qfunc$ means that
		\begin{equation}
		\begin{split}
			g_{t}^i &\in \underset{\hat{g}_t^i}{\arg\max }~\sum_{\tilde{u}_t^i} \Qfunc_t^{i}(h_t^i, \tilde{u}_t^i) ~\hat{g}_{t}^i(\tilde{u}_t^i|h_{t}^i)\\
			&=\underset{\hat{g}_t^i}{\arg\max }~\max_{\tilde{g}_{t+1:T}^i} \sum_{o_t^i} \E^{\hat{g}_t^i, \tilde{g}_{t+1:T}^i, g_{t}^{>i}, g_{t+1:T}^{-i}}[\Lambda_t^i(O_{T+1})|o_t^i]~\mu_t^{i}(o_t^i|h_t^i),
		\end{split}
		\end{equation}
		for all $t\in\mathcal{T}, i\in\mathcal{I}$, and all $h_t^i\in\mathcal{H}_t^i$.
		
		  {Recall that the node $O_t^i$ uniquely determines $(X_1, W_{1:t-1}, U_{1:t-1})$. Therefore, the instantaneous rewards $R_\tau^i$ for $\tau \leq t-1$ are uniquely determined by $O_t^i$ as well. For $\tau \leq t-1$, let $r_\tau^i$ be realizations of $R_\tau^i$ under $O_t^i=o_t^i$. Recall that $\Lambda^i$ is the total reward function and $\Lambda_t^i$ is the reward-to-go function starting with (and including) time $t$.} We have $\E^{\hat{g}_t^i, \tilde{g}_{t+1:T}^i, g_{t}^{>i}, g_{t+1:T}^{-i}}[\Lambda^i(O_{T+1}) - \Lambda_t^i(O_{T+1})|o_t^i] = \sum_{\tau=1}^{t-1} r_\tau^i$ to be independent of the strategy profile. Therefore we have 
		
        \begin{equation}
			g_{t}^i \in\underset{\hat{g}_t^i}{\arg\max }~\max_{\tilde{g}_{t+1:T}^i} \sum_{o_t^i} \E^{\hat{g}_t^i, \tilde{g}_{t+1:T}^i, g_{t}^{>i}, g_{t+1:T}^{-i}}[\Lambda^i(O_{T+1})|o_t^i]~\mu_t^{i}(o_t^i|h_t^i).\label{eq:dynmu}
		\end{equation}
		
	Fixing $h_\tau^i$, the problem of optimizing 
		\begin{equation}
			J_\tau^i(\tilde{g}_{\tau:T}^i; h_\tau^i, \mu_\tau^i):= \sum_{o_\tau^i} \E^{\tilde{g}_{\tau:T}^i, g_{\tau}^{>i}, g_{\tau+1:T}^{-i}}[\Lambda^i(O_{T+1})|o_\tau^i]~\mu_\tau^{i}(o_\tau^i|h_\tau^i),
		\end{equation}
		over all $\tilde{g}_{\tau:T}^i$ is a POMDP problem with

        \begin{itemize}
			\item \small Timestamps $\tilde{T} = \{\tau, \tau+1, \cdots, T, T+1\}$;
			\item State process $(O_t^i)_{t=\tau}^T\cup (O_{T+1})$;
            \item Control actions $(U_t^i)_{t=\tau}^T$;
			\item Initial state distribution $\mu_\tau^i(h_\tau^i)\in\Delta(\mathcal{O}_{\tau}^i)$;
			\item State transition kernel $\Pr^{g_{t}^{>i}, g_{t+1}^{<i}}(o_{t+1}^i|o_t^i, u_t^i)$ for $t<T$ and $\Pr^{g_{T}^{>i}}(o_{T+1}|o_T^i, u_T^i)$ for $t=T$;
			\item Observation history: $(H_t^i)_{t=\tau}^T$;
			\item Instantaneous rewards are 0. Terminal reward is $\Lambda^i(O_{T+1})$.
		\end{itemize}
 
		The belief $\mu$ is fully consistent with $g$ by construction. From standard results in game theory, we know that $\mu_{t+1}^i(h_{t+1}^i)$ can be updated with Bayes rule from $\mu_{t}^i(h_{t}^i)$ and $g$ whenever applicable. Therefore, $(\mu_t)_{t=\tau}^T$ represent the true beliefs of the state given observations in the above POMDP problem. Therefore, through standard control theory \citep[Section 6.7]{kumar1986stochastic}, \eqref{eq:dynmu} is a sufficient condition for $g_{t:T}^i$ to be optimal for the above POMDP problem, which means that $g$ is sequentially rational given $\mu$.
		
		Therefore we conclude that $(g, \mu)$ is a sequential equilibrium under Definition \ref{def:classicalSE}.
	
\end{proof}

\section{Proofs for Sections \ref{sec:twoinfostates} and \ref{sec:isbe}}\label{app:proofsmain}

\subsection{Proof of Lemma \ref{lem:msi}}\label{app:lem:msi}

\begin{lemma}[Lemma \ref{lem:msi}, restated]
	If for all $i\in\mathcal{I}$ and all $\Comp^{-i}$-based strategy profiles $\rho^{-i}$, there exist functions $(\Phi_t^{i, \rho^{-i}})_{t\in\mathcal{T}}$ where $\Phi_t^{i, \rho^{-i}}\colon\Compset_t^i \mapsto \Delta(\mathcal{X}_t\times \Compset_t^{-i})$ such that
	\begin{equation}
		\Pr^{g^i, \rho^{-i}}(x_t, \comp_t^{-i}|h_t^i) = \Phi_t^{i,\rho^{-i}}(x_t, \comp_t^{-i}|\comp_t^i),
	\end{equation}
	for all behavioral strategies $g^i$, all $t\in\mathcal{T}$, and all $h_t^i$ admissible under $(g^i, \rho^{-i})$, then $\Comp=(\Comp^i)_{i\in\mathcal{I}}$ is mutually sufficient information.
\end{lemma}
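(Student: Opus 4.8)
The plan is to unwind Definitions~\ref{def:msi} and \ref{def:infostate}: it suffices to fix a player $i\in\mathcal{I}$ and a $\Comp^{-i}$-based strategy profile $\rho^{-i}$ and to exhibit functions $P_t^{i,\rho^{-i}}$ and $r_t^{i,\rho^{-i}}$ satisfying conditions~(1)--(2) of Definition~\ref{def:infostate} for every behavioral strategy $g^i$. Write the relevant coordinates of the dynamics as $Z_t^i=f_t^{i,Z}(X_t,U_t,W_t)$ and $R_t^i=f_t^{i,R}(X_t,U_t,W_t)$, and recall that $\Comp_{t+1}^i=\iota_{t+1}^i(\Comp_t^i,Z_t^i)$, so that $\Comp_{t+1}^i$ is a fixed function of $(\Comp_t^i,X_t,U_t,W_t)$.

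The key step is a factorization of the one-step conditional law. Fix $g^i$, a time $t\in\mathcal{T}$, and a pair $(h_t^i,u_t^i)$ admissible under $(g^i,\rho^{-i})$. I claim
\begin{equation}
\Pr^{g^i,\rho^{-i}}(x_t,\comp_t^{-i},u_t^{-i},w_t\mid h_t^i,u_t^i)=\Phi_t^{i,\rho^{-i}}(x_t,\comp_t^{-i}\mid\comp_t^i)\Bigl(\textstyle\prod_{j\neq i}\rho_t^j(u_t^j\mid\comp_t^j)\Bigr)\Pr(w_t).
\end{equation}
This rests on three independence facts, each a consequence of the fact that a behavioral strategy draws its action from fresh randomization independent of all primitive variables and of all other randomizations: (i) $(X_t,\Comp_t^{-i})\perp U_t^i\mid H_t^i$, so that further conditioning on $U_t^i=u_t^i$ leaves the conditional law of $(X_t,\Comp_t^{-i})$ given $h_t^i$ unchanged, and by the hypothesis of the lemma this law equals $\Phi_t^{i,\rho^{-i}}(\cdot\mid\comp_t^i)$ and is free of $g^i$; (ii) under $\rho^{-i}$, the vector $U_t^{-i}$ depends on the past only through $\Comp_t^{-i}$, and does so through the independent product $\prod_{j\neq i}\rho_t^j(\cdot\mid\comp_t^j)$, i.e.\ $U_t^{-i}\perp(X_t,H_t^i,U_t^i)\mid\Comp_t^{-i}$; and (iii) $W_t$ is primitive and independent of all variables realized up to and including the time-$t$ actions.

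Given the factorization, the conclusion follows by summation. Multiplying the display by $\bm{1}_{\{\comp_{t+1}^i=\iota_{t+1}^i(\comp_t^i,f_t^{i,Z}(x_t,(u_t^i,u_t^{-i}),w_t))\}}$ and summing over $(x_t,\comp_t^{-i},u_t^{-i},w_t)$ gives an expression for $\Pr^{g^i,\rho^{-i}}(\comp_{t+1}^i\mid h_t^i,u_t^i)$ (for $t\in\mathcal{T}\backslash\{T\}$) that depends on $h_t^i$ only through $\comp_t^i$ and does not involve $g^i$; call it $P_t^{i,\rho^{-i}}(\comp_{t+1}^i\mid\comp_t^i,u_t^i)$ when $\comp_t^i$ is admissible under $\rho^{-i}$, and define it arbitrarily otherwise. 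Summing this expression over $\comp_{t+1}^i$ gives $1$, so $P_t^{i,\rho^{-i}}(\cdot\mid\comp_t^i,u_t^i)\in\Delta(\Compset_{t+1}^i)$. Similarly, multiplying the display by $f_t^{i,R}(x_t,(u_t^i,u_t^{-i}),w_t)$ and summing gives an expression for $\E^{g^i,\rho^{-i}}[R_t^i\mid h_t^i,u_t^i]$ that depends on $h_t^i$ only through $\comp_t^i$; call it $r_t^{i,\rho^{-i}}(\comp_t^i,u_t^i)$, which is a weighted average of values of $R_t^i$ and hence lies in $[-1,1]$. Thus $\Comp^i$ is an information state under $\rho^{-i}$ for every $\Comp^{-i}$-based $\rho^{-i}$, and by Definition~\ref{def:msi}, $\Comp$ is mutually sufficient information.

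The main obstacle is the factorization display, and within it the correct handling of the conditioning on the realized action $U_t^i=u_t^i$: the lemma's hypothesis is stated conditionally on $h_t^i$ alone, so one must invoke $(X_t,\Comp_t^{-i})\perp U_t^i\mid H_t^i$ to import it, and one must likewise verify that under $\rho^{-i}$ the conditional law of $U_t^{-i}$ given $\Comp_t^{-i}$ is exactly the product of the $\rho_t^j$'s and is unaffected by conditioning on $(X_t,H_t^i,U_t^i)$. These are routine consequences of the independence of behavioral randomization, but they must be spelled out carefully because, unlike under perfect recall, $\comp_t^i$ is only a lossy summary of $h_t^i$.
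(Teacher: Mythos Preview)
Your proof is correct and follows essentially the same approach as the paper's: both arguments use the hypothesis to replace the belief on $(X_t,\Comp_t^{-i})$ by $\Phi_t^{i,\rho^{-i}}(\cdot\mid\comp_t^i)$, use the structure of $\rho^{-i}$ to factor the conditional law of $U_t^{-i}$ as $\prod_{j\neq i}\rho_t^j(\cdot\mid\comp_t^j)$, and then push through the strategy-independent dynamics to obtain $P_t^{i,\rho^{-i}}$ and $r_t^{i,\rho^{-i}}$. The only difference is cosmetic: you package everything into a single joint factorization of $(x_t,\comp_t^{-i},u_t^{-i},w_t)$ given $(h_t^i,u_t^i)$ and then marginalize, whereas the paper first derives the law of $(x_t,u_t^{-i})$ given $h_t^i$ and then absorbs $W_t$ into the strategy-free transition kernel $\Pr(\comp_{t+1}^i\mid\comp_t^i,x_t,u_t)$.
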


\begin{proof}

{Let $g^i$ be an arbitrary behavioral strategy for player $i$ and $\rho^{-i}$ be any $K^{-i}$-based strategy profile. Let $h_t^i$ be admissible under $(g^i, \rho^{-i})$.} We have	
\begin{align}
		\Pr^{g^i, \rho^{-i}}(\tilde{x}_t, \tilde{u}_t^{-i}|h_t^i) &= \sum_{\tilde{h}_t^{-i}} \Pr^{g^i, \rho^{-i}}(\tilde{u}_t^{-i}|\tilde{x}_t, \tilde{h}_t^{-i}, h_t^i, u_t^i)\Pr^{g^i, \rho^{-i}}(\tilde{x}_t, \tilde{h}_t^{-i}|h_t^i, u_t^i)\label{eq:lemmsi:1} \\
		&=\sum_{\tilde{h}_t^{-i}}\left(\prod_{j\neq i} \rho_t^{j}(\tilde{u}_t^j|\tilde{\comp}_t^j) \right)\Pr^{g^i, \rho^{-i}}(\tilde{x}_t, \tilde{h}_t^{-i}|h_t^i)\\
		&=\sum_{\tilde{\comp}_t^{-i}}\left(\prod_{j\neq i} \rho_t^{j}(\tilde{u}_t^j|\tilde{\comp}_t^j) \right)\Pr^{g^i, \rho^{-i}}(\tilde{x}_t, \tilde{\comp}_t^{-i}|h_t^i)\label{eq:lemmsi:3} \\
		&=\sum_{\tilde{\comp}_t^{-i}}\left( \prod_{j\neq i} \rho_t^{j}(\tilde{u}_t^j|\tilde{\comp}_t^j) \right)\Phi_t^{i, \rho^{-i}}(\tilde{x}_t, \tilde{\comp}_t^{-i}|\comp_t^i),\label{eq:lemmsi:4}
	\end{align}
    {where in \eqref{eq:lemmsi:1} we applied the Law of Total Probability. In \eqref{eq:lemmsi:3} we combined the realizations of $\Tilde{h}_t^{-i}$ corresponding to the same compressed information $\Tilde{k}_t^{-i}$. In the final equation, we used the condition of Lemma \ref{lem:msi}.}
 
	By the definition of the model, $Z_t^i = f_t^{i,Z}(X_t, U_t, W_t)$ for some fixed function $f_t^{i,Z}$ independent of the strategy profile. 
    Since the compressed information can be sequentially updated as $\Comp_{t+1}^i=\iota_{t+1}^i(\Comp_t^i, Z_t^i)$, this means that we can write $\Comp_{t+1}^i = \xi_t^i(\Comp_t^i, X_t, U_t, W_t)$ for some fixed function $\xi_t^i$. Since $W_t$ is a primitive random variable, we conclude that $\Pr(\comp_{t+1}^i| \comp_t^i, x_t, u_t)$ is independent of any strategy profile. Therefore,
	\begin{align}
		&\quad~\Pr^{g^i, \rho^{-i}}(\comp_{t+1}^i|h_t^i, u_t^i)\label{eq:C59} \\
		&= \sum_{\tilde{x}_t, \tilde{u}_t^{-i}} \Pr(\comp_{t+1}^i|\comp_t^i, \tilde{x}_t, (\tilde{u}_t^{-i}, u_t^i) )  \Pr^{g^i, \rho^{-i}}(\tilde{x}_t, \tilde{u}_t^{-i}|h_t^i)\label{eq:lemmsi:C59expansion} \\
		&=\sum_{\tilde{x}_t, \tilde{u}_t^{-i}} \left[ \Pr(\comp_{t+1}^i|\comp_t^i, \tilde{x}_t, (\tilde{u}_t^{-i}, u_t^i) ) \sum_{\tilde{\comp}_t^{-i}}\left( \prod_{j\neq i} \rho_t^{j}(\tilde{u}_t^j|\tilde{\comp}_t^j) \right)\Phi_t^{i, \rho^{-i}}(\tilde{x}_t, \tilde{\comp}_t^{-i}|\comp_t^i) \right] \label{eq:lemmsi:giantthing} \\
		&=:P_t^{i, \rho^{-i}}(\comp_{t+1}^i|\comp_t^i, u_t^i),\label{eq:proofmsi:transition}
	\end{align}
	for some function $P_t^{i, g^{-i}}$, {where in \eqref{eq:lemmsi:C59expansion} we used the Law of Total Probability, and we substituted \eqref{eq:lemmsi:4} in \eqref{eq:lemmsi:giantthing}. }
	
	Since $R_t^i = f_t^{i,R}(X_t, U_t, W_t)$ for some fixed function $f_t^{i,R}$ and $W_t$ is a primitive random variable, we have $\E[R_t^i|X_t, U_t]$ to be independent of the strategy profile $g$. By an argument similar to the one that leads from \eqref{eq:C59} to \eqref{eq:proofmsi:transition} we obtain
	\begin{align}
		&\quad~\E^{g^i, \rho^{-i}}[R_t^i|h_t^i, u_t^i] \\
        &= \sum_{\tilde{x}_t, \tilde{u}_t^{-i}} \left[ \E[R_t^i|\tilde{x}_t, (u_t^i, \tilde{u}_t^{-i})] \sum_{\tilde{\comp}_t^{-i}}\left( \prod_{j\neq i} \rho_t^{j}(\tilde{u}_t^j|\tilde{\comp}_t^j) \right)\Phi_t^{i, \rho^{-i}}(\tilde{x}_t, \tilde{\comp}_t^{-i}|\comp_t^i) \right]  \\
		&=: r_t^{i, \rho^{-i}}(\comp_t^i, u_t^i),\label{eq:proofmsi:reward}
	\end{align}
	for some function $r_i^{i, \rho^{-i}}$. With \eqref{eq:proofmsi:transition} and \eqref{eq:proofmsi:reward}, we have shown that $\Comp$ satisfies Definition \ref{def:msi} and hence $\Comp$ is MSI.

\end{proof}

\subsection{Proof of Theorem \ref{thm:msiexist}}\label{app:thm:msiexist}

\begin{thm}[Theorem \ref{thm:msiexist}, restated]
	If $\Comp$ is mutually sufficient information, then there exists at least one $\Comp$-based BNE.
\end{thm}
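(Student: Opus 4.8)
The plan is to reduce, for each player $i$ and each fixed $\Comp^{-i}$-based profile $\rho^{-i}$, player $i$'s best-response problem to a finite-horizon MDP with state $\Comp_t^i$, and then obtain a $\Comp$-based BNE as a fixed point of the resulting best-response correspondence via Kakutani's theorem. The role of the MSI hypothesis is precisely to make this reduction valid \emph{uniformly over all} $\Comp^{-i}$-based opponent profiles.

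Fix $i$ and a $\Comp^{-i}$-based profile $\rho^{-i}$. Since $\Comp$ is MSI, $\Comp^i$ is an information state under $\rho^{-i}$ (Definition \ref{def:msi}), with parameters $P^{i,\rho^{-i}}$, $r^{i,\rho^{-i}}$. Holding $\rho^{-i}$ fixed, player $i$'s problem $\max_{g^i}J^i(g^i,\rho^{-i})$ is a finite MDP with state $H_t^i$ and action $U_t^i$; conditions (1)--(2) of Definition \ref{def:infostate} say exactly that $\Comp_t^i$, viewed as the function of $H_t^i$ given by the compression $\iota^i$, is an information state for this MDP in the sense of Definition \ref{def:app:infostate} (modulo the standard reduction of random rewards to their conditional expectations). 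Hence, by the single-agent information-state theory of Appendix \ref{app:aux:infostate} (Lemmas \ref{lemma:twoMDPequiv} and \ref{lem:policyeval}), player $i$'s problem is equivalent to the reduced MDP $\mathcal M^i(\rho^{-i}):=(\nu_1^i,P^{i,\rho^{-i}},r^{i,\rho^{-i}})$ with state $\Comp_t^i$ (here $\nu_1^i$ is the strategy-independent law of $\Comp_1^i$), and there is a $\Comp^i$-based strategy attaining the optimal value \emph{among all full-history strategies}: namely any $\rho^i$ with $\mathrm{supp}(\rho_t^i(\comp_t^i))\subseteq\arg\max_{u_t^i}\Qfunc_t^i(\comp_t^i,u_t^i;\rho^{-i})$ for every $t,\comp_t^i$, where $\Qfunc_t^i(\cdot,\cdot;\rho^{-i})$ is the state--action value function of $\mathcal M^i(\rho^{-i})$. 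This is the crucial consequence of MSI: a $\Comp^i$-based best reply against $\rho^{-i}$ is a best reply even against full-recall deviations.

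Now let $\mathcal R:=\prod_{i}\prod_{t}\prod_{\comp_t^i}\Delta(\mathcal U_t^i)$, the nonempty compact convex set of $\Comp$-based profiles, and define $\mathrm{BR}\colon\mathcal R\rightrightarrows\mathcal R$ by taking $\mathrm{BR}(\rho)$ to be the set of $\sigma\in\mathcal R$ with $\mathrm{supp}(\sigma_t^i(\comp_t^i))\subseteq\arg\max_{u_t^i}\Qfunc_t^i(\comp_t^i,u_t^i;\rho^{-i})$ for all $i,t,\comp_t^i$. Each $\mathrm{BR}(\rho)$ is a product of nonempty faces of simplices, hence nonempty, compact, and convex. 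Moreover, by the previous paragraph, any $\rho^*\in\mathrm{BR}(\rho^*)$ satisfies $J^i(\rho^*)=\max_{g^i}J^i(g^i,\rho^{*-i})$ for every $i$, i.e.\ $\rho^*$ is a $\Comp$-based BNE. So it suffices to check the hypotheses of Kakutani's theorem; the only one that remains is that $\mathrm{BR}$ has closed graph.

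The closed-graph property is the main obstacle. It follows once we know $\rho^{-i}\mapsto\Qfunc_t^i(\cdot,\cdot;\rho^{-i})$ is continuous: if $\rho^{(n)}\to\rho$, $\sigma^{(n)}\in\mathrm{BR}(\rho^{(n)})$, and $\sigma^{(n)}\to\sigma$, then each $u$ in the support of $\sigma_t^i(\comp_t^i)$ lies in the support of $\sigma_t^{(n),i}(\comp_t^i)$ for all large $n$, hence maximizes $\Qfunc_t^{(n),i}(\comp_t^i,\cdot;\rho^{(n),-i})$, and letting $n\to\infty$ shows $u$ maximizes $\Qfunc_t^i(\comp_t^i,\cdot;\rho^{-i})$; thus $\sigma\in\mathrm{BR}(\rho)$. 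Continuity of $\Qfunc_t^i$ in turn follows, through the finite backward dynamic-programming recursion, from continuity of $(P^{i,\rho^{-i}},r^{i,\rho^{-i}})$ in $\rho^{-i}$. The delicate point is that Definition \ref{def:infostate} pins $P_t^{i,\rho^{-i}},r_t^{i,\rho^{-i}}$ down only on pairs admissible under $\rho^{-i}$, and the admissible set can move with $\rho^{-i}$; I would handle this by fixing a fully mixed reference strategy $\hat g^i$ and setting $P_t^{i,\rho^{-i}}(\comp_{t+1}^i|\comp_t^i,u_t^i):=\Pr^{\hat g^i,\rho^{-i}}(\comp_{t+1}^i|\comp_t^i,u_t^i)$ and $r_t^{i,\rho^{-i}}(\comp_t^i,u_t^i):=\E^{\hat g^i,\rho^{-i}}[R_t^i|\comp_t^i,u_t^i]$ (extending arbitrarily on the finitely many zero-probability states), which coincide with the information-state parameters wherever those are determined and are explicit ratios of strategy-multilinear quantities; alternatively one can write these data out directly by a computation in the style of the proof of Lemma \ref{lem:msi}. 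One then verifies the required continuity --- e.g.\ by a Selten-type perturbation argument that first restricts to fully mixed profiles, or by working on neighbourhoods where the admissible set is locally constant. With this in hand, Kakutani's theorem yields a fixed point $\rho^*$, which by the above is a $\Comp$-based BNE; the remaining items (finiteness of the DP, nonemptiness and convexity of the argmax faces, and the limiting argument above) are routine.
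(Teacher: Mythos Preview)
Your proposal is correct and matches the paper's approach: reduce each player's best-response problem against a $\Comp^{-i}$-based profile to a finite MDP with state $\Comp_t^i$ via the MSI hypothesis, build a dynamic-programming best-response correspondence on the space of $\Comp$-based profiles, and apply Kakutani, with the continuity of $(P^{i,\rho^{-i}},r^{i,\rho^{-i}})$ handled precisely by the Selten-type perturbation you name. The paper commits to this perturbation from the outset---restricting to $\mathcal P^\epsilon$ so that $(\hat g^i,\rho^{-i})$ is fully mixed and every $\comp_t^i$ is reached, hence $P^{i,\rho^{-i}}$ and $r^{i,\rho^{-i}}$ are ratios of strategy-polynomials with nonvanishing denominators---applies Kakutani there, and then sends $\epsilon\to 0$ using Berge's theorem.
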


\begin{proof}
{The proof will proceed as follows: We first construct a best-response correspondence using stochastic control theory, and then we establish the existence of equilibria by applying Kakutani's fixed-point theorem to this correspondence. For technical reasons, we first consider only behavioral strategies where each action has probability at least $\epsilon>0$ of being played at each information set. We then take $\epsilon$ to zero. }\\~

    Fixing {a $K^{-i}$-based strategy profile} $\rho^{-i}$, we first argue that $\Comp_t^i$ is a controlled Markov process controlled by player $i$'s action $U_t^i$. 
	
	From the definition of an information state (Definition \ref{def:infostate}) we know that
	\begin{align}
		\Pr^{\tilde{g}^i, \rho^{-i}}(\comp_{t+1}^i|h_t^i, u_t^i) = P_t^{i, \rho^{-i}}(\comp_{t+1}^i|\comp_t^i, u_t^i).
	\end{align}
	
	Since $(\Comp_{1:t}^i, U_{1:t}^i)$ is a function of $(H_t^i, U_t^i)$, by the smoothing property of conditional probability we have
	\begin{align}
		\Pr^{\tilde{g}^i, \rho^{-i}}(\comp_{t+1}^i|\comp_{1:t}^i, u_{1:t}^i) = P_t^{i, \rho^{-i}}(\comp_{t+1}^i|\comp_t^i, u_t^i).
	\end{align}
	
	Therefore we have shown that $\Comp_t^i$ is a controlled Markov process controlled by player $i$'s action $U_t^i$.
	
	From the definition of information state (Definition \ref{def:infostate}) we know that
	\begin{align}
		\E^{\tilde{g}^i, \rho^{-i}}\left[R_t^i|\comp_t^i, u_t^i \right] = r_t^{i, \rho^{-i}}(\comp_t^i, u_t^i),
	\end{align}
	for all $(\comp_t^i, u_t^i)$ admissible under $(\tilde{g}^i, \rho^{-i})$. 
	
	Therefore, using the Law of Total Expectation we have
	\begin{align}
		J^i(\tilde{g}^i, \rho^{-i}) &= \E^{\tilde{g}^i, \rho^{-i}}\left[\sum_{t=1}^T R_t^i\right] = \E^{\tilde{g}^i, \rho^{-i}}\left[\sum_{t=1}^T \E^{\tilde{g}^i, \rho^{-i}}\left[R_t^i|\Comp_t^i, U_t^i \right] \right]\\
		&=\E^{\tilde{g}^i, \rho^{-i}}\left[\sum_{t=1}^T r_t^{i, \rho^{-i}}(\Comp_t^i, U_t^i) \right].
	\end{align}
	
	By standard MDP theory, there exist $\Comp^i$-based strategies $\rho^i$ that maximize $J^i(\tilde{g}^i, \rho^{-i})$ over all behavioral strategies $\tilde{g}^i$. Furthermore, optimal $\Comp^i$-based strategies can be found through dynamic programming.
	
    Assume $\epsilon> 0$, let $\mathcal{P}^{\epsilon, i}$ denote the set of $\Comp^i$-based strategies for player $i$ where each action $u_t^i\in \mathcal{U}_t^i$ is chosen with probability at least $\epsilon$ at any information set. To endow $\mathcal{P}^{\epsilon, i}$ with a topology, we consider it as a product of sets of distributions, i.e.
	\begin{align}
		\mathcal{P}^{\epsilon, i} = \prod_{t\in\mathcal{T}}\prod_{\comp_t^i\in\Compset_t^i} \Delta^\epsilon(\mathcal{U}_t^i),
	\end{align}
	where
	\begin{align}
		\Delta^\epsilon(\mathcal{U}_t^i) = \{\eta \in\Delta(\mathcal{U}_t^i): \eta(u_t^i)\geq \epsilon~\forall u_t^i\in\mathcal{U}_t^i \}.
	\end{align}
	
	Define $\mathcal{P}^\epsilon = \prod_{i\in\mathcal{I}} \mathcal{P}^{\epsilon, i}$. Denote the set of all $\Comp^i$-based strategy profiles by $\mathcal{P}^0$.
	
	For the rest of the proof, assume that $\epsilon$ is small enough such that $\Delta^\epsilon(\mathcal{U}_t^i)$ is non-empty for all $t\in\mathcal{T}$ and $i\in\mathcal{I}$. 
	
	For each $t\in\mathcal{T}$, $i\in\mathcal{I}$ and $\comp_t^i\in\Compset_t^i$, define the correspondence $\mathrm{BR}_t^{\epsilon, i}[\comp_t^i]: \mathcal{P}^{\epsilon, -i} \mapsto \Delta^\epsilon(\mathcal{U}_t^i)$ sequentially through 
 \begin{subequations}\label{defofBR}
		\begin{align}
			\Qfunc_{T}^{\epsilon, i}(\comp_T^i, u_T^i;\rho^{-i})&:=r_T^{i, \rho^{-i}}(\comp_T^i, u_T^i)\label{defofBR:ini}, \\
			\mathrm{BR}_t^{\epsilon, i}[\comp_t^i](\rho^{-i}) &:= \underset{\eta\in \Delta^\epsilon(\mathcal{U}_t^i)}{\arg\max}~ \sum_{\tilde{u}_t^i}  \Qfunc_t^{\epsilon, i}(\comp_t^i, \tilde{u}_t^i;\rho^{-i}) \eta(\tilde{u}_t^i),\label{defofBR:BR} \\
			V_t^{\epsilon, i}(\comp_t^i;\rho^{-i})&:=\max_{\eta\in \Delta^\epsilon(\mathcal{U}_t^i)} \sum_{\tilde{u}_t^i}  \Qfunc_t^{\epsilon, i}(\comp_t^i, \tilde{u}_t^i;\rho^{-i}) \eta(\tilde{u}_t^i), \label{defofBR:V}\\
			\Qfunc_{t-1}^{\epsilon, i}(\comp_{t-1}^i, u_{t-1}^i; \rho^{-i})&:=r_{t-1}^{i, \rho^{-i}}(\comp_{t-1}^i, u_{t-1}^i) +\label{defofBR:K}\\
			+ \sum_{\comp_{t}^i\in\Compset_{t}^i } &V_{t}^{\epsilon, i}(\comp_t^i ; \rho^{-i}) P_{t-1}^{i, \rho^{-i}}(\comp_{t}^i|\comp_{t-1}^i, u_t^i).
		\end{align}
	\end{subequations}
	
	\noeqref{defofBR:ini}\noeqref{defofBR:BR}\noeqref{defofBR:V}\noeqref{defofBR:K}
	
	Define $\mathrm{BR}^{\epsilon}: \mathcal{P}^\epsilon \mapsto \mathcal{P}^\epsilon$ by
	\begin{align}
		\mathrm{BR}^{\epsilon}(\rho) = \prod_{i\in\mathcal{I}} \prod_{t\in\mathcal{T}} \prod_{\comp_t^i\in\Compset_t^i} \mathrm{BR}_t^{\epsilon, i}[\comp_t^i](\rho^{-i}).
	\end{align}
	
	\textbf{Claim}:
	\begin{enumerate}[(a)]
		\item $P_t^{i, \rho^{-i}}(\comp_{t+1}^i|\comp_{t}^i, u_t^i)$ is continuous in $\rho^{-i}$ on $\mathcal{P}^{\epsilon, -i}$ for all $t\in\mathcal{T}$ and all $\comp_{t+1}^i\in\Compset_{t+1}^i, \comp_t^i\in\Compset_t^i, u_t^i\in\mathcal{U}_t^i$.
		\item $r_t^{i, \rho^{-i}}(\comp_t^i, u_t^i)$ is continuous in $\rho^{-i}$ on $\mathcal{P}^{\epsilon, -i}$ for all $t\in\mathcal{T}$ and all $\comp_t^i\in\Compset_t^i, u_t^i\in\mathcal{U}_t^i$.
	\end{enumerate}
	
	Given the claims we prove by induction that $\Qfunc_t^{\epsilon, i}(\comp_t^i, u_{t}^i; \rho^{-i})$ is continuous in $\rho^{-i}$ on $\mathcal{P}^{\epsilon, -i}$ for each $\comp_t^i\in\Compset_t^i, u_t^i\in\mathcal{U}_t^i$.
	
	\textbf{Induction Base}: {$\Qfunc_T^{\epsilon, i}(\comp_T^i, u_{T}^i; \rho^{-i}) = r_T^{i, \rho^{-i}}(\comp_T^i, u_T^i)$ is continuous in $\rho^{-i}$ on $\mathcal{P}^{\epsilon, -i}$ due to part (a) of the claims.}
	
	\textbf{Induction Step}: Suppose that the induction hypothesis is true for $t$. Then $V_t^{\epsilon, i}(\comp_t^i; \rho^{-i})$ is continuous in $\rho^{-i}$ on $\mathcal{P}^{\epsilon, -i}$ due to Berge's Maximum Theorem \citep{sundaram1996first}. Then, $\Qfunc_{t-1}^{\epsilon, i}(\comp_{t-1}^i, u_{t-1}^i; \rho^{-i})$ is continuous in $\rho^{-i}$ on $\mathcal{P}^{\epsilon, -i}$ due to the claims.\\
	
	Applying Berge's Maximum Theorem \citep{sundaram1996first} once again, we conclude that $\mathrm{BR}_t^{\epsilon, i}[\comp_t^i]$ is upper hemicontinuous on $\mathcal{P}^{\epsilon, -i}$. For each $\rho^{-i} \in \mathcal{P}^{\epsilon, -i}$, $\mathrm{BR}_t^{\epsilon, i}[\comp_t^i](\rho^{-i})$ is non-empty and convex since it is the solution set of a linear program.
	
	As a product of compact-valued upper hemicontinuous correspondences, $\mathrm{BR}^\epsilon$ is upper hemicontinuous. For each $\rho \in \mathcal{P}^{\epsilon}$, $\mathrm{BR}^\epsilon(\rho)$ is non-empty and convex. By Kakutani's fixed point theorem, $\mathrm{BR}^\epsilon$ has a fixed point. 
 

    {The above construction provides an approximate $K$-based BNE for small $\epsilon$. Next, we show that we can take $\epsilon$ to zero to obtain an exact BNE:}
	Let $(\epsilon_n)_{n=1}^{\infty}$ be a sequence such that $\epsilon_n > 0, \epsilon_n\rightarrow 0$. Let $\rho^{(n)}$ be a fixed point of $\mathrm{BR}^{\epsilon_n}$. Then for each $i\in \mathcal{I}$ we have
	\begin{equation}
		\rho^{(n), i} \in \underset{\tilde{\rho}^i\in \mathcal{P}^{\epsilon_n, i}}{\arg\max}~J^i(\tilde{\rho}^i, \rho^{(n), -i}).
	\end{equation}
	
	Let $\rho^{(\infty)} \in \mathcal{P}^0$ be the limit of a sub-sequence of $(\rho^{(n)})_{n=1}^{\infty}$. Since $J^i(\rho)$ is continuous in $\rho$ on $\mathcal{P}^0$, and $\epsilon \mapsto \mathcal{P}^{\epsilon, i}$ is a continuous correspondence with compact, non-empty value, through applying Berge's Maximum Theorem \citep{sundaram1996first} one last time, we conclude that for each $i\in \mathcal{I}$
	\begin{equation}
		\rho^{(\infty), i} \in \underset{\tilde{\rho}^i\in \mathcal{P}^{0, i}}{\arg\max}~J^i(\tilde{\rho}^i, \rho^{(\infty), -i}),
	\end{equation}
	i.e. $\rho^{(\infty), i}$ is optimal among $\Comp^i$-based strategies in response to $\rho^{(\infty), -i}$. Recall that we have shown that there exist $\Comp^i$-based strategies $\rho^i$ that maximizes $J^i(\tilde{g}^i, \rho^{-i})$ over all behavioral strategies $\tilde{g}^i$. Therefore, we conclude that $\rho^{(\infty)}$ forms a BNE, proving the existence of $\Comp$-based BNE.\\~
	
	\textbf{Proof of Claim}: 
    {We establish the continuity of the two functions by showing that they can be expressed with basic functions (i.e. summation, multiplication, division).}
 
	Let $\hat{g}^i$ be a behavioral strategy where player $i$ chooses actions uniformly at random at every information set. For $\rho^{-i} \in\mathcal{P}^{\epsilon, -i}$, we have $\Pr^{\hat{g}^i, \rho^{-i}}(\comp_t^i) > 0$ for all $\comp_t^i\in\Compset_t^i$ since $(\hat{g}^i, \rho^{-i})$ is a strategy profile that always plays strictly mixed actions. Therefore we have
	\begin{align}
		P_t^{i, \rho^{-i}}(\comp_{t+1}^i|\comp_t^i, u_t^i) &= \Pr^{\hat{g}^i, \rho^{-i}}(\comp_{t+1}^i| \comp_t^i, u_t^i) = \dfrac{\Pr^{\hat{g}^i, \rho^{-i}}(\comp_{t+1}^i, \comp_t^i, u_t^i) }{\Pr^{\hat{g}^i, \rho^{-i}}(\comp_t^i, u_t^i)},\\
		r_t^{i, \rho^{-i}}(\comp_t^i, u_t^i) &= \E^{\hat{g}^i, \rho^{-i}}[R_t^i| \comp_t^i, u_t^i] \\
		&=\sum_{x_t\in \mathcal{X}_t, u_t^{-i}\in\mathcal{U}_t } \E[R_t^i| x_t, u_t] \Pr^{\hat{g}^i, \rho^{-i}}(x_t, u_t^{-i}| \comp_t^i, u_t^i),
	\end{align}
	where $\E[R_t^i| x_t, u_t]$ is independent of the strategy profile.
	
	We know that both $\Pr^{\hat{g}^i, \rho^{-i}}(\comp_{t+1}^i, \comp_t^i, u_t^i)$ and $\Pr^{\hat{g}^i, \rho^{-i}}(\comp_t^i, u_t^i)$ are sums of products of components of $\rho^{-i}$ and $\hat{g}^i$, hence both are continuous in $\rho^{-i}$. Therefore $P_t^{i, \rho^{-i}}(z_{t}^i|\comp_t^i, u_t^i)$ is continuous in $\rho^{-i}$ on $\mathcal{P}^{\epsilon, -i}$. The continuity of $r_t^{i, \rho^{-i}}(\comp_t^i, u_t^i)$ in $\rho^{-i}$ on $\mathcal{P}^{\epsilon, -i}$  can be shown with an analogous argument.

\end{proof}
 
\subsection{Proof of Theorem \ref{thm:usiequiv}}\label{app:thm:usiequiv}
\begin{thm}[Theorem \ref{thm:usiequiv}, restated]
	If $\Comp=(\Comp^i)_{i\in\mathcal{I}}$ where $\Comp^i$ is unilaterally sufficient information for player $i$, then the set of $\Comp$-based BNE payoffs is the same as that of all BNE.
\end{thm}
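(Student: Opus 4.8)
The plan is to prove the two inclusions separately. The inclusion ``$\Comp$-based BNE payoffs $\subseteq$ all BNE payoffs'' is immediate: a $\Comp$-based BNE is by definition also a BNE, so its payoff profile is a BNE payoff profile. The content is the reverse inclusion, so fix an arbitrary BNE $g=(g^i)_{i\in\mathcal{I}}$; I will build a $\Comp$-based strategy profile $\rho=(\rho^i)_{i\in\mathcal{I}}$ that is again a BNE and satisfies $J^m(\rho)=J^m(g)$ for every $m\in\mathcal{I}$. For each $i$ and each $\comp_t^i$ set
\begin{equation*}
	\rho_t^i(u_t^i\mid\comp_t^i):=\sum_{h_t^i} g_t^i(u_t^i\mid h_t^i)\,F_t^{i,g^i}(h_t^i\mid\comp_t^i),
\end{equation*}
where $F_t^{i,g^i}$ is the map from Definition~\ref{def:usi}; by \eqref{eq:usicondition} this equals $\E^{g}[g_t^i(u_t^i\mid H_t^i)\mid\comp_t^i]$ whenever $\comp_t^i$ is admissible under $g$. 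The decisive point is that $F_t^{i,g^i}$, hence $\rho^i$, depends on $g$ only through $g^i$; this is what makes a one-player-at-a-time switching argument legitimate. Heuristically, $\rho^i$ is obtained from $g^i$ by a Rao--Blackwell-type construction against the sufficient statistic $\Comp_t^i$, with the part of $H_t^i$ outside $\Comp_t^i$ re-randomized internally.

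The key technical step is a strengthened policy-equivalence statement, which I would state as follows. Fix $S\subseteq\mathcal{I}$, arbitrary strategies $\hat g^k$ for $k\notin S$, and for $U\subseteq S$ let $\pi_U$ be the profile in which players in $U$ play $\rho^k$, players in $S\setminus U$ play $g^k$, and players outside $S$ play $\hat g^k$. Then for every $U$ and every $t$ the joint law of $(X_t,U_t,W_t)$ under $\pi_U$ equals that under $\pi_\emptyset$; in particular $J^m(\pi_U)=J^m(\pi_\emptyset)$ for all $m$. I would prove this by induction on $|U|$, each step swapping one player $j$ from $g^j$ to $\rho^j$. For that swap, introduce the augmented state $\xi_t:=(X_t,H_t^{-j},\Comp_t^j)$: since $Z_t$ and $R_t$ are fixed strategy-independent functions of $(X_t,U_t,W_t)$, the $W_t$ are primitive, and $\Comp^j$ is sequentially updated, $\xi_{t+1}$ is a fixed function of $(\xi_t,U_t,W_t)$, with $W_t$ independent of $(\xi_t,U_t)$. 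An inner induction on $t$ shows the law of $\xi_t$ is unchanged by the swap: $\xi_1$ is a strategy-independent function of $(X_1,H_1)$, and in the step it suffices to check that the conditional law of $U_t$ given $\xi_t$ is the same under $\pi_U$ and $\pi_{U\cup\{j\}}$. The conditional law of $U_t^{-j}$ given $\xi_t$ is unchanged (it is a product of the unaltered $g_t^k(\cdot\mid h_t^k)$, with $H_t^{-j}$ contained in $\xi_t$), and $U_t^j$ is conditionally independent of $U_t^{-j}$ given $\xi_t$ under both profiles since the players' private randomizations are independent. Finally, the conditional law of $U_t^j$ given $\xi_t$ equals $\rho_t^j(\cdot\mid\Comp_t^j)$ under both profiles: under $\rho^j$ this holds by construction, and when $j$ plays $g^j$ it follows from \eqref{eq:usicondition}, which yields $H_t^j\perp(X_t,H_t^{-j})\mid\Comp_t^j$ together with $\Pr(h_t^j\mid\comp_t^j)=F_t^{j,g^j}(h_t^j\mid\comp_t^j)$ (no dependence on $g^{-j}$), whence $\Pr(u_t^j\mid\xi_t)=\sum_{h_t^j}g_t^j(u_t^j\mid h_t^j)\,F_t^{j,g^j}(h_t^j\mid\comp_t^j)=\rho_t^j(u_t^j\mid\comp_t^j)$.

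Granting the policy-equivalence statement, the conclusion is short. With $S=U=\mathcal{I}$ it gives $J^m(\rho)=J^m(g)$ for all $m$, so $\rho$ attains the payoff profile of $g$. For sequential rationality, fix $i\in\mathcal{I}$ and an arbitrary deviation $\tilde g^i$; applying the statement with $S=\mathcal{I}\setminus\{i\}$, $\hat g^i=\tilde g^i$, and $U=S$ gives $J^i(\tilde g^i,\rho^{-i})=J^i(\tilde g^i,g^{-i})$. Since $g$ is a BNE, $J^i(\tilde g^i,g^{-i})\le J^i(g)=J^i(\rho^i,\rho^{-i})$, hence $J^i(\tilde g^i,\rho^{-i})\le J^i(\rho^i,\rho^{-i})$; as $i$ and $\tilde g^i$ were arbitrary, $\rho$ is a BNE with the same payoff profile as $g$, which finishes the reverse inclusion. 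I expect the main obstacle to be the strengthened policy-equivalence step: the ordinary policy-equivalence lemma (Lemma~\ref{lem:policyeval}) propagates only the marginal law of $\Comp_t^j$, whereas here one must carry the \emph{joint} law of $\Comp_t^j$ with $(X_t,H_t^{-j})$ through the swap while leaving every other player's and the environment's dynamics intact, and it is precisely the product structure in \eqref{eq:usicondition} --- both the strategy-independence of $H_t^j\mid\Comp_t^j$ and its conditional independence from $(X_t,H_t^{-j})$ --- that makes this propagation possible.
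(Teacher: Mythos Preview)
Your proposal is correct and follows essentially the same strategy as the paper: define each $\rho^i$ by the Rao--Blackwell construction $\rho_t^i(\cdot\mid\comp_t^i)=\sum_{h_t^i}g_t^i(\cdot\mid h_t^i)F_t^{i,g^i}(h_t^i\mid\comp_t^i)$, exploit that $F_t^{i,g^i}$ depends only on $g^i$, and swap players one at a time. The packaging differs slightly: the paper factors the one-player swap through the single-agent MDP machinery (Lemma~\ref{lem:usiisinfostate} shows $\Comp^i$ is an information state for \emph{every} player's payoff under any $g^{-i}$, then Lemma~\ref{lem:policyeval} yields $J^j(g^i,g^{-i})=J^j(\rho^i,g^{-i})$ for all $j$, and Lemma~\ref{lem:usireplace} checks the BNE is preserved), and only afterwards iterates over players. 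Your route instead tracks the augmented state $\xi_t=(X_t,H_t^{-j},\Comp_t^j)$ directly and proves the stronger statement that the full law of $(X_t,U_t,W_t)$ is invariant under any partial swap, which lets you handle both payoff preservation and the BNE inequality in one stroke (via $S=\mathcal{I}$ and $S=\mathcal{I}\setminus\{i\}$ respectively). Both arguments hinge on exactly the same use of \eqref{eq:usicondition}: the factorization gives $H_t^j\perp(X_t,H_t^{-j})\mid\Comp_t^j$ with $\Pr(h_t^j\mid\comp_t^j)=F_t^{j,g^j}(h_t^j\mid\comp_t^j)$ independent of $g^{-j}$, and crucially this holds for \emph{every} profile, including profiles in which some other players have already been switched to their $\rho^k$. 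The paper's approach is more modular (it reuses the Policy Equivalence Lemma); yours is more self-contained and yields a marginally stronger intermediate statement.
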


{
To establish Theorem \ref{thm:usiequiv}, we first introduce Definition \ref{def:infostateforj}, an extension of Definition \ref{def:infostate}, for the convenience of the proof. Then, we establish Lemmas \ref{lem:usiisinfostate}, \ref{lem:usiclaim}, \ref{lem:usireplace}. Finally, we conclude the proof of Theorem \ref{thm:usiequiv} from the three lemmas.}

{In the following definition, we provide an extension of the definition of the information state where not only player $i$'s payoff are considered. This definition allows us to characterize compression maps that preserve payoff profiles, as required in the statement of Theorem \ref{thm:usiequiv}.}

\begin{defn}\label{def:infostateforj}
	Let $g^{-i}$ be a behavioral strategy profile of players other than $i$ and $\mathcal{J}\subseteq \mathcal{I}$ be a subset of players. We say that $\Comp^i$ is an \emph{information state under $g^{-i}$ for the payoffs of $\mathcal{J}$} if there exist functions $(P_t^{i,g^{-i}})_{t\in\mathcal{T}}, (r_t^{j,g^{-i}})_{j\in\mathcal{J}, t\in\mathcal{T}}$, where $P_t^{i,g^{-i}}: \Compset_t^i\times \mathcal{U}_t^i \mapsto \Delta(\Compset_{t+1}^i)$ and $r_t^{j,g^{-i}}: \Compset_t^i\times \mathcal{U}_t^i \mapsto [-1, 1]$, such that
	\begin{enumerate}[(1)]
		\item $\Pr^{g^i, g^{-i}}(\comp_{t+1}^i|h_t^i, u_t^i) = P_t^{i,g^{-i}}(\comp_{t+1}^i|\comp_t^i, u_t^i)$ for all $t\in\mathcal{T}\backslash\{T\}$; and
		\item $\E^{g^i, g^{-i}}[R_t^j|h_t^i, u_t^i] = r_t^{j,g^{-i}}(\comp_t^i, u_t^i)$ for all $j\in\mathcal{J}$ and all $t\in \mathcal{T}$,
	\end{enumerate}
	for all $g^i$, and all $(h_t^i, u_t^i)$ admissible under $(g^i, g^{-i})$.
\end{defn}

Notice that condition (2) of Definition \ref{def:infostateforj} means that the information state $K^i$ is sufficient for evaluating \emph{other} agents' payoffs as well. This property is essential in establishing the preservation of payoff profiles of other agents when player $i$ switches to a compression-based strategy.

\begin{lemma}\label{lem:usiisinfostate}
	If $\Comp^i$ is unilaterally sufficient information, then $\Comp^i$ is an information state under $g^{-i}$ for the payoffs of $\mathcal{I}$ under all behavioral strategy profiles $g^{-i}$.
\end{lemma}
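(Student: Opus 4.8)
\emph{Proof idea.} The plan is to verify the two conditions of Definition~\ref{def:infostateforj} (with $\mathcal{J}=\mathcal{I}$) directly, the crucial preliminary step being to extract from the USI factorization~\eqref{eq:usicondition} the identity
\begin{equation}
	\Pr^{g^i,g^{-i}}(x_t,h_t^{-i}\mid h_t^i) = \Phi_t^{i,g^{-i}}(x_t,h_t^{-i}\mid\comp_t^i),
\end{equation}
valid for every behavioral profile $g=(g^i,g^{-i})$ and every $h_t^i$ admissible under $g$ (with the congruency convention from the footnote to Definition~\ref{def:usi}, under which both sides vanish when $x_t,h_t^i,h_t^{-i}$ disagree on shared components). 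I would obtain this by first summing \eqref{eq:usicondition} over $(x_t,h_t^{-i})$ and using that $\Phi_t^{i,g^{-i}}(\cdot\mid\comp_t^i)$ is a probability distribution to get $\Pr^{g}(h_t^i\mid\comp_t^i)=F_t^{i,g^i}(h_t^i\mid\comp_t^i)$; then, since $\{H_t^i=h_t^i\}\subseteq\{\Comp_t^i=\comp_t^i\}$, writing $\Pr^{g}(x_t,h_t^{-i}\mid h_t^i)=\Pr^{g}(x_t,h_t\mid\comp_t^i)/\Pr^{g}(h_t^i\mid\comp_t^i)$ and cancelling the common factor $F_t^{i,g^i}(h_t^i\mid\comp_t^i)$, which is legitimate because admissibility of $h_t^i$ forces $\Pr^g(h_t^i\mid\comp_t^i)>0$.

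Given this identity, I would finish by mimicking the computation in Appendix~\ref{app:lem:msi}. As there, the model structure provides fixed functions with $Z_t^i=f_t^{i,Z}(X_t,U_t,W_t)$, hence $\Comp_{t+1}^i=\xi_t^i(\Comp_t^i,X_t,U_t,W_t)$ for a fixed $\xi_t^i$, and $R_t^j=f_t^{j,R}(X_t,U_t,W_t)$; since $W_t$ is a primitive random variable independent of the past, both $\Pr(\comp_{t+1}^i\mid\comp_t^i,x_t,u_t)$ and $\E[R_t^j\mid x_t,u_t]$ are strategy-independent. Expanding $\Pr^{g^i,g^{-i}}(\comp_{t+1}^i\mid h_t^i,u_t^i)$ by the law of total probability over $(X_t,H_t^{-i},U_t^{-i})$, using that $U_t^i$ is conditionally independent of $(X_t,H_t^{-i})$ given $H_t^i$, that $U_t^{-i}$ is drawn according to $\prod_{j\neq i}g_t^j(u_t^j\mid h_t^j)$ given $H_t^{-i}$, and substituting the identity above, yields
\begin{align}
	\Pr^{g^i,g^{-i}}(\comp_{t+1}^i\mid h_t^i,u_t^i)
	&= \sum_{x_t,h_t^{-i},u_t^{-i}} \Pr(\comp_{t+1}^i\mid\comp_t^i,x_t,u_t^i,u_t^{-i})\notag\\
	&\qquad\times\Big(\textstyle\prod_{j\neq i}g_t^j(u_t^j\mid h_t^j)\Big)\,\Phi_t^{i,g^{-i}}(x_t,h_t^{-i}\mid\comp_t^i),
\end{align}
and an entirely analogous expansion for $\E^{g^i,g^{-i}}[R_t^j\mid h_t^i,u_t^i]$ (with $\Pr(\comp_{t+1}^i\mid\comp_t^i,x_t,u_t)$ replaced by $\E[R_t^j\mid x_t,u_t]$). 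The right-hand sides depend on $(h_t^i,u_t^i)$ only through $(\comp_t^i,u_t^i)$ and on $g$ only through $g^{-i}$; taking them as the definitions of $P_t^{i,g^{-i}}$ and of $r_t^{j,g^{-i}}$ establishes conditions (1) and (2) of Definition~\ref{def:infostateforj}. Since the reward computation goes through verbatim for each $j\in\mathcal{I}$, this yields condition (2) simultaneously for all players, i.e.\ with $\mathcal{J}=\mathcal{I}$.

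The main obstacle is the preliminary identity rather than the subsequent bookkeeping: one must be careful about the congruency convention and about checking that the cancellation of $F_t^{i,g^i}(h_t^i\mid\comp_t^i)$ is valid exactly on the admissible set (for inadmissible $h_t^i$ the factorization still holds but is uninformative, and such histories are excluded by the admissibility hypothesis in Definition~\ref{def:infostateforj}). Once this is settled, the remaining steps are a near-verbatim specialization of \eqref{eq:proofmsi:transition}--\eqref{eq:proofmsi:reward} in the proof of Lemma~\ref{lem:msi}, with $H_t^{-i}$ playing the role of $\Comp_t^{-i}$ and the USI belief $\Phi$ playing the role of the MSI belief.
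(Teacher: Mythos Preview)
Your proposal is correct and follows essentially the same route as the paper's proof: extract the identity $\Pr^{g}(x_t,h_t^{-i}\mid h_t^i)=\Phi_t^{i,g^{-i}}(x_t,h_t^{-i}\mid\comp_t^i)$ from the USI factorization, then expand the conditional transition and expected rewards by the law of total probability exactly as in the proof of Lemma~\ref{lem:msi}, with $H_t^{-i}$ in place of $\Comp_t^{-i}$. The only noteworthy difference is that the paper simply asserts the preliminary identity as a consequence of Definition~\ref{def:usi}, whereas you supply the missing justification (summing~\eqref{eq:usicondition} over $(x_t,h_t^{-i})$ to identify $F_t^{i,g^i}(h_t^i\mid\comp_t^i)=\Pr^g(h_t^i\mid\comp_t^i)$ and then cancelling); this extra care is warranted and the admissibility check you flag is exactly the right point to watch.
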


\begin{proof}[Proof of Lemma \ref{lem:usiisinfostate}]
	Let $\Phi_t^{i, g^{-i}}$ be as in the definition of USI (Definition \ref{def:usi}), we have
	\begin{align}
		\Pr^g(x_t, h_t^{-i}| h_t^i) &= \Phi_t^{i, g^{-i}}(x_t, h_t^{-i}|\comp_t^i).
	\end{align}
	
	{Applying the Law of Total Probability,}
	\begin{align}
		\Pr^g(\tilde{x}_t, \tilde{u}_t^{-i}|h_t^i) &= \sum_{\tilde{h}_t^{-i}} \Pr^g(\tilde{u}_t^{-i}|\tilde{x}_t, \tilde{h}_t^{-i}, h_t^i) \Pr^g(\tilde{x}_t, \tilde{h}_t^{-i}| h_t^i)\label{eq:usiproof:1} \\
		&= \sum_{\tilde{h}_t^{-i}} \left(\prod_{j\neq i}g_t^j(\tilde{u}_t^j|\tilde{h}_t^j) \right) \Phi_t^{i, g^{-i}}(\tilde{x}_t, \tilde{h}_t^{-i}| \comp_t^i)\\
		&=: \tilde{P}_t^{i, g^{-i}}(\tilde{x}_t, \tilde{u}_t^{-i}|\comp_t^i).
	\end{align}

	{We know that $\Comp_{t+1}^i = \iota_{t+1}^i(\Comp_t^i, Z_t^i) = \xi_t^i(\Comp_t^i, X_t, U_t, W_t)$ for some fixed function $\xi_t^i$ independent of the strategy profile $g$. Since $W_t$ is a primitive random variable, $\Pr(\comp_{t+1}^i | \comp_t^i, x_t, u_t)$ is independent of the strategy profile $g$.} Therefore,
	\begin{align}
		\Pr^g(\comp_{t+1}^i|h_t^i, u_t^i) &= \sum_{\tilde{x}_t, \tilde{u}_t^{-i}} \Pr(\comp_{t+1}^i | \comp_t^i, \tilde{x}_t, (\tilde{u}_t^{-i}, u_t^i)) \tilde{P}_t^{i, g^{-i}}(\tilde{x}_t, \tilde{u}_t^{-i}|\comp_t^i)\\
		&=:P_t^{i, g^{-i}}(\comp_{t+1}^i|\comp_t^i, u_t^i),
	\end{align}
    {establishing part (1) of Definition \ref{def:infostateforj}.}
	
	Consider any $j\in\mathcal{I}$. Since $R_t^j$ is a strategy-independent function of $(X_t, U_t, W_t)$, $\E[R_t^j|x_t, u_t]$ is independent of $g$. Therefore
	\begin{align}
		\E^g[R_t^j|h_t^i, u_t^i] &= \sum_{\tilde{x}_t, \tilde{u}_t^{-i}} \E[R_t^j|\tilde{x}_t, (u_t^i, \tilde{u}_t^{-i})] \tilde{P}_t^{i, g^{-i}}(\tilde{x}_t, \tilde{u}_t^{-i}|\comp_t^i)\\
		&=:r_t^{j, g^{-i}}(\comp_t^i, u_t^i),
	\end{align}
    {establishing part (2) of Definition \ref{def:infostateforj}.}
\end{proof}

{In Lemma \ref{lem:usiclaim}, we show that any behavioral strategy of player $i$ can be replaced by an equivalent randomized USI-based strategy while preserving payoffs of \emph{all} players.}

\begin{lemma}\label{lem:usiclaim}
	Let $\Comp^i$ be unilaterally sufficient information. Then for every behavioral strategy profile $g^i$, if the $\Comp^i$ based strategy $\rho^i$ is given by
	\begin{equation}\label{rhotheultimateequivalentstrategy}
		\rho_t^i(u_t^i|\comp_t^i) = \sum_{\tilde{h}_t^i\in\mathcal{H}_t^i}g_t^i(u_t^i|\tilde{h}_t^i)F_t^{i, g^i}(\tilde{h}_t^i|\comp_t^i),
	\end{equation}
	where $F_t^{i, g^i}(\tilde{h}_t^i|\comp_t^i)$ is defined in Definition \ref{def:usi}, then
	$$J^j(g^i, g^{-i}) = J^j(\rho^i, g^{-i}),$$
	for all $j\in\mathcal{I}$ and all behavioral strategy profiles $g^{-i}$ of players other than $i$.
\end{lemma}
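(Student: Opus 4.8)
The plan is to recognize player $i$'s best‑response problem against a fixed $g^{-i}$ as a single‑agent MDP and then invoke the Policy Equivalence Lemma (Lemma \ref{lem:policyeval}). Fix any behavioral strategy profile $g^{-i}$ of the other players. First I would argue that, against $g^{-i}$, the pair $(H_t^i, U_t^i)$ evolves as a controlled Markov chain whose one‑step transition $\Pr^{g^{-i}}(h_{t+1}^i\mid h_t^i, u_t^i)$ and whose conditional one‑step payoffs $\E^{g^{-i}}[R_t^j\mid h_t^i, u_t^i]$ (for every $j\in\mathcal{I}$) do not depend on $g^i$. Both facts come from the $\Phi$‑factor in the definition of USI: conditioning on $H_t^i=h_t^i$, the joint law of $(X_t, H_t^{-i})$ equals $\Phi_t^{i,g^{-i}}(\cdot\mid\comp_t^i)$ regardless of $g^i$, while $Z_t^i$ and $R_t^j$ are strategy‑independent functions of $(X_t, U_t, W_t)$ with $W_t$ primitive, so summing out $U_t^{-i}$ against $g^{-i}$ yields the claim. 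This produces, for each $j$, an MDP $(\nu_1^i, P^{i,g^{-i}}, r^{j,g^{-i}})$ with ``state'' $H_t^i$ and action $U_t^i$.

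Next I would invoke Lemma \ref{lem:usiisinfostate}, which already establishes that $\Comp^i$ is an information state for this MDP in the sense of Definition \ref{def:app:infostate}: the transition and each reward $r^{j,g^{-i}}$ factor through $\comp_t^i$. I would then check that the $\Comp^i$‑based strategy $\rho^i$ defined by \eqref{rhotheultimateequivalentstrategy} is precisely the information‑state‑based strategy associated with the Markov strategy $g^i$ in the sense of Definition \ref{def:app:association}, i.e.\ it satisfies \eqref{rhotheinfostatebasedstrategy}. This reduces to the identity $F_t^{i,g^i}(h_t^i\mid\comp_t^i)=\Pr^g(h_t^i\mid\comp_t^i)$, obtained by marginalizing the USI factorization \eqref{eq:usicondition} over $(x_t, h_t^{-i})$; note in particular that this quantity depends only on $g^i$, so $\rho^i$ is a well‑defined strategy. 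Consequently, applying Lemma \ref{lem:policyeval} once for each $j$ gives $J(g^i;\nu_1^i, P^{i,g^{-i}}, r^{j,g^{-i}})=J(\rho^i;\nu_1^i, P^{i,g^{-i}}, r^{j,g^{-i}})$.

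Finally I would translate these MDP payoffs back into game payoffs. Since $P^{i,g^{-i}}$ is exactly the in‑game law of $H_t^i$ under $g^{-i}$, the joint law of $(H_{1:T}^i, U_{1:T}^i)$ induced by $(g^i, g^{-i})$ in the game coincides with the one induced by $g^i$ in the MDP, and similarly for $\rho^i$ (viewing a $\Comp^i$‑based strategy as a special $H_t^i$‑based one). Using the tower property together with $\E^{g^{-i}}[R_t^j\mid H_t^i, U_t^i]=r_t^{j,g^{-i}}(\Comp_t^i, U_t^i)$ one obtains $J^j(g^i, g^{-i})=J(g^i;\nu_1^i, P^{i,g^{-i}}, r^{j,g^{-i}})$ and $J^j(\rho^i, g^{-i})=J(\rho^i;\nu_1^i, P^{i,g^{-i}}, r^{j,g^{-i}})$; chaining the three equalities yields $J^j(g^i, g^{-i})=J^j(\rho^i, g^{-i})$ for all $j$ and all $g^{-i}$.

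I expect the main obstacle to be the bookkeeping in the first step: carefully justifying that the transition kernel and reward functions of the $H_t^i$‑MDP are genuinely independent of $g^i$ (so that the MDP, and hence Lemma \ref{lem:policyeval}, is applicable), and handling admissibility --- that is, defining $P^{i,g^{-i}}$, $r^{j,g^{-i}}$, and $\rho^i$ arbitrarily on histories and compressions that are off‑path, while ensuring these choices never enter the payoff computations. Everything downstream --- the information‑state property, the association identity, and the translation back to game payoffs --- then follows directly from the two halves of the USI factorization: the $\Phi$‑factor controlling what player $i$ can predict about the rest of the world, and the $F$‑factor controlling the conditional law of $H_t^i$ given $\Comp_t^i$. (Alternatively, one can bypass Lemma \ref{lem:policyeval} and prove $\Pr^{g^i, g^{-i}}(\comp_t^i, u_t^i)=\Pr^{\rho^i, g^{-i}}(\comp_t^i, u_t^i)$ directly by induction on $t$, using that $\Pr^{g^i,g^{-i}}(u_t^i\mid\comp_t^i)=\rho_t^i(u_t^i\mid\comp_t^i)$ and the Markov property of $\Comp_t^i$; the argument is essentially a re‑derivation of Lemma \ref{lem:policyeval} in this setting.)
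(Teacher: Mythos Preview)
Your proposal is correct and follows essentially the same approach as the paper: fix $g^{-i}$, cast player $i$'s problem as an MDP with state $H_t^i$ and reward $\E^{g^{-i}}[R_t^j\mid h_t^i,u_t^i]$, invoke Lemma~\ref{lem:usiisinfostate} to conclude $\Comp^i$ is an information state for this MDP, and then apply the Policy Equivalence Lemma~\ref{lem:policyeval}. Your write-up is in fact more careful than the paper's terse proof, since you explicitly verify (via marginalizing \eqref{eq:usicondition}) that $F_t^{i,g^i}(\cdot\mid\comp_t^i)=\Pr^g(\cdot\mid\comp_t^i)$ so that $\rho^i$ is the associated strategy in the sense of Definition~\ref{def:app:association}, and you spell out the tower-property step linking MDP payoffs to game payoffs.
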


\begin{proof}[Proof of Lemma \ref{lem:usiclaim}]
Let $j\in\mathcal{I}$. Consider an MDP with state $H_t^i$, action $U_t^i$ and instantaneous reward $\tilde{r}_t^{i, j}(h_t^i, u_t^i):=\E^{g^{-i}}[R_t^j|h_t^i, u_t^i]$. By Lemma \ref{lem:usiisinfostate}, $\Comp^i$ is an information state (as defined in Definition \ref{def:app:infostate}) for this MDP. Hence $J^j(g^i, g^{-i}) = J^j(\rho^i, g^{-i})$ follows from the Policy Equivalence Lemma (Lemma \ref{lem:policyeval}).
\end{proof} 

In Lemma \ref{lem:usireplace}, we proceed to show that a behavioral strategy can be replaced with an USI-based strategy while preserving not only the payoffs of all players, but also the \emph{equilibrium}.

\begin{lemma}\label{lem:usireplace}
	If $\Comp^i$ is unilaterally sufficient information for player $i$, then for any BNE strategy profile $g=(g^i)_{i\in\mathcal{I}}$ there exists a $\Comp^i$-based strategy $\rho^i$ such that $(\rho^i, g^{-i})$ forms a BNE with the same expected payoff profile as $g$.
\end{lemma}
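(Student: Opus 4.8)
The plan is to let $\rho^i$ be exactly the $\Comp^i$-based strategy associated with $g^i$ via formula \eqref{rhotheultimateequivalentstrategy}; note that this $\rho^i$ is built from $g^i$ and the maps $F_t^{i,g^i}$ of Definition \ref{def:usi} only, and in particular does not depend on $g^{-i}$. With this choice, two things must be verified: (i) $(\rho^i, g^{-i})$ yields the same payoff profile as $g$, and (ii) $(\rho^i, g^{-i})$ is a BNE. Part (i) is immediate from Lemma \ref{lem:usiclaim} applied with the equilibrium profile $g^{-i}$: it gives $J^j(\rho^i, g^{-i}) = J^j(g^i, g^{-i}) = J^j(g)$ for every $j\in\mathcal{I}$. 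Thus the substance of the lemma is entirely in part (ii). Write $h := (\rho^i, g^{-i})$; we must show $J^j(h)\ge J^j(\tilde g^j, h^{-j})$ for every player $j$ and every behavioral strategy $\tilde g^j$.

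For $j = i$ this is straightforward: since a $\Comp^i$-based strategy is in particular an ordinary behavioral strategy, the set of deviations of player $i$ is unchanged, and $h^{-i} = g^{-i}$, so the required inequality reads $J^i(\rho^i, g^{-i}) \ge J^i(\tilde g^i, g^{-i})$. Using $J^i(\rho^i, g^{-i}) = J^i(g^i, g^{-i})$ from part (i) together with the equilibrium property $J^i(g^i, g^{-i})\ge J^i(\tilde g^i, g^{-i})$ of $g$, this holds for all $\tilde g^i$.

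For $j \ne i$, fix a deviation $\tilde g^j$. The profile $(\tilde g^j, h^{-j})$ is the one in which player $i$ plays $\rho^i$ and the remaining players play $(\tilde g^j, (g^k)_{k\ne i,j})$; since Lemma \ref{lem:usiclaim} holds for \emph{every} behavioral strategy profile of the players other than $i$ (and the same $\rho^i$ serves throughout), we may apply it with this profile to get $J^j(\tilde g^j, h^{-j}) = J^j(\tilde g^j, g^{-j})$, and with the equilibrium profile to get $J^j(h) = J^j(g)$. Combining with the equilibrium property $J^j(g)\ge J^j(\tilde g^j, g^{-j})$ of $g$ gives $J^j(h) = J^j(g)\ge J^j(\tilde g^j, g^{-j}) = J^j(\tilde g^j, h^{-j})$. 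Hence no player can profitably deviate from $h$, so $h = (\rho^i, g^{-i})$ is a BNE, which together with part (i) proves the lemma.

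I do not expect a genuinely hard step here: the lemma is essentially a bookkeeping consequence of Lemma \ref{lem:usiclaim}. The one point worth stressing --- and the only place where USI (rather than merely being an information state under $g^{-i}$) is used --- is that the payoff-equivalence of $g^i$ and $\rho^i$ holds against \emph{arbitrary} strategies of the other players, including off-path deviations $\tilde g^j$ of a third player; it is this uniformity that lets us carry over the incentive constraints of the players $j\ne i$, not merely their on-path payoffs.
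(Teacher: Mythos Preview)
Your proof is correct and follows essentially the same approach as the paper's: define $\rho^i$ via \eqref{rhotheultimateequivalentstrategy}, use Lemma \ref{lem:usiclaim} once to preserve the payoff profile and player $i$'s best-response property, and use it a second time (against the deviated profile $(\tilde g^j, g^{-\{i,j\}})$) to carry over the incentive constraints of each $j\ne i$. Your explicit observation that $\rho^i$ is independent of $g^{-i}$---so that the \emph{same} $\rho^i$ serves in both applications of Lemma \ref{lem:usiclaim}---is exactly the point the paper uses implicitly when it says ``by using Lemma \ref{lem:usiclaim} twice.''
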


\begin{proof}[Proof of Lemma \ref{lem:usireplace}]
	Let $\rho^i$ be associated with $g^i$ as specified in Lemma \ref{lem:usiclaim}. Set $\bar{g} = (\rho^i, g^{-i})$. Since $J^i(\rho^i, g^{-i}) = J^i(g^i, g^{-i})$ and $g^i$ is a best response to $g^{-i}$, $\rho^i$ is also a best response to $g^{-i}$.
	
	Consider $j\neq i$. Let $\tilde{g}^j$ be an arbitrary behavioral strategy of player $j$. By using Lemma \ref{lem:usiclaim} twice we have
	\begin{align}
		J^j(\bar{g}^j, \bar{g}^{-j}) &= J^j(\rho^i, g^{-i}) = J^j(g)\geq J^j(\tilde{g}^j, g^{-j}) \\
		&= J^j(\tilde{g}^j, (\rho^i, g^{-\{i, j\}})) = J^j(\tilde{g}^j, \bar{g}^{-j}).
	\end{align}
	Therefore $\bar{g}^j$ is a best response to $(\rho^i, g^{-\{i, j\}})$. We conclude that $\bar{g} = (\rho^i, g^{-i})$ is also a BNE.
\end{proof}

\begin{proof}[Proof of Theorem \ref{thm:usiequiv}]
	Given any BNE strategy profile $g$, applying Lemma \ref{lem:usireplace} iteratively for each $i\in\mathcal{I}$, we obtain a $\Comp$-based BNE strategy profile $\rho$ with the same expected payoff profile as $g$. Therefore the set of $\Comp$-based BNE payoffs is the same as that of all BNE.
\end{proof}

\subsection{Proof of Theorem \ref{thm:msiseexist}}\label{app:thm:msiseexist}
\begin{thm}[Theorem \ref{thm:msiseexist}, restated]
	If $\Comp$ is mutually sufficient information, then there exists at least one $\Comp$-based sequential equilibrium.
\end{thm}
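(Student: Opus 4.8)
The plan is to recycle the entire scaffolding of the proof of Theorem~\ref{thm:msiexist} (Appendix~\ref{app:thm:msiexist}) and then attach to it an explicit sequence of conjectured history--action value functions that witnesses full consistency in the sense of Definition~\ref{def:KSEinmaintext}. So I would again fix a sequence $\epsilon_n\downarrow 0$, form the perturbed $\Comp$-based strategy spaces $\mathcal{P}^{\epsilon_n}$ and the best-response correspondence $\mathrm{BR}^{\epsilon_n}$ built from the information-state dynamic program \eqref{defofBR}, and invoke Kakutani's theorem to obtain a fixed point $\rho^{(n)}$. Since every action is played with probability at least $\epsilon_n$ at every $\comp_t^i$, hence at every $h_t^i$, $\rho^{(n)}$ is a fully mixed behavioral strategy profile in the full-history sense, which is exactly what condition~(1) of full consistency asks for.

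Next I would construct the conjectures. For each $i$ and $t$, define $\bar{\Qfunc}_t^{(n),i}$ and $\bar V_t^{(n),i}$ on $\Compset_t^i\times\mathcal{U}_t^i$ by the same recursion as in \eqref{defofBR} except that the maximum defining $V_t^{\epsilon,i}$ is replaced by the on-policy average $\sum_{\tilde u_t^i}\bar{\Qfunc}_t^{(n),i}(\comp_t^i,\tilde u_t^i)\,\rho_t^{(n),i}(\tilde u_t^i\mid\comp_t^i)$, and set $\Qfunc_t^{(n),i}(h_t^i,u_t^i):=\bar{\Qfunc}_t^{(n),i}(\comp_t^i,u_t^i)$. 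Two facts are then established by backward induction on $t$. First, since $\Comp^i$ is an information state under the $\Comp^{-i}$-based profile $\rho^{(n),-i}$ (this is exactly what MSI buys us, via Definition~\ref{def:infostate}), so that $\Pr^{\rho^{(n)}}(\comp_{t+1}^i\mid h_t^i,u_t^i)=P_t^{i,\rho^{(n),-i}}(\comp_{t+1}^i\mid\comp_t^i,u_t^i)$ and $\E^{\rho^{(n)}}[R_t^i\mid h_t^i,u_t^i]=r_t^{i,\rho^{(n),-i}}(\comp_t^i,u_t^i)$, a policy-evaluation argument analogous to Lemma~\ref{lem:policyeval} shows $\Qfunc_\tau^{(n),i}(h_\tau^i,u_\tau^i)=\E^{\rho^{(n)}}\big[\sum_{t=\tau}^T R_t^i\,\big|\,h_\tau^i,u_\tau^i\big]$, so $(\rho^{(n)},\Qfunc^{(n)})$ satisfies condition~(2) of full consistency. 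Second, since at the fixed point $\rho_t^{(n),i}(\comp_t^i)=\mathrm{BR}_t^{\epsilon_n,i}[\comp_t^i](\rho^{(n),-i})$ maximizes the linear objective $\sum_{\tilde u}\Qfunc_t^{\epsilon_n,i}(\comp_t^i,\tilde u;\rho^{(n),-i})\,\eta(\tilde u)$ over $\Delta^{\epsilon_n}(\mathcal{U}_t^i)$, its on-policy value coincides with the $\epsilon_n$-constrained optimal value, i.e.\ $\bar{\Qfunc}_t^{(n),i}=\Qfunc_t^{\epsilon_n,i}(\cdot,\cdot;\rho^{(n),-i})$ and $\bar V_t^{(n),i}=V_t^{\epsilon_n,i}(\cdot;\rho^{(n),-i})$.

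Then I would pass to a subsequence along which both $\rho^{(n)}\to\rho^{(\infty)}\in\mathcal{P}^{0}$ and $\Qfunc^{(n)}\to\Qfunc^{(\infty)}$; this is possible purely by compactness since $\Qfunc^{(n)}$ is uniformly bounded in $[-T,T]$, and a subsequence is genuinely needed because the conditioning events $\{H_t^i=h_t^i,U_t^i=u_t^i\}$ may become null in the limit, so no continuity argument is available at $\epsilon=0$. By construction $(\rho^{(n)},\Qfunc^{(n)})$ is itself a sequence of fully mixed profiles with consistent conjectures converging to $(\rho^{(\infty)},\Qfunc^{(\infty)})$, so $\Qfunc^{(\infty)}$ is fully consistent with $\rho^{(\infty)}$. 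For sequential rationality I would note that $\Qfunc_t^{(\infty),i}$ still factors through $\comp_t^i$, write it $\bar{\Qfunc}_t^{(\infty),i}$, and argue: if $\rho_t^{(\infty),i}(u^*\mid\comp_t^i)>0$, then $\rho_t^{(n),i}(u^*\mid\comp_t^i)>\epsilon_n$ for all large $n$; but the $\epsilon_n$-constrained linear program defining $\mathrm{BR}_t^{\epsilon_n,i}$ places mass strictly above the floor $\epsilon_n$ only on maximizers of $\bar{\Qfunc}_t^{(n),i}(\comp_t^i,\cdot)=\Qfunc_t^{\epsilon_n,i}(\comp_t^i,\cdot;\rho^{(n),-i})$, whence $\bar{\Qfunc}_t^{(n),i}(\comp_t^i,u^*)\ge\bar{\Qfunc}_t^{(n),i}(\comp_t^i,u')$ for all $u'$; letting $n\to\infty$ gives $u^*\in\arg\max_{u'}\bar{\Qfunc}_t^{(\infty),i}(\comp_t^i,u')$. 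Since $\rho^{(\infty)}$ is $\Comp$-based, this yields $\mathrm{supp}(\rho_t^{(\infty),i}(h_t^i))\subseteq\arg\max_{u_t^i}\Qfunc_t^{(\infty),i}(h_t^i,u_t^i)$ for every $h_t^i$, so $(\rho^{(\infty)},\Qfunc^{(\infty)})$ is a $\Comp$-based sequential equilibrium.

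I expect the main obstacle to be reconciling the two roles of the conjectured $Q$-functions: the same family $\Qfunc^{(n)}$ must be the true on-policy reward-to-go functions (so they witness consistency in Definition~\ref{def:KSEinmaintext}) and simultaneously be the objects whose $\epsilon_n$-constrained maximizers the perturbed equilibrium strategies play (so that sequential rationality is preserved in the $\epsilon_n\to0$ limit). The identity $\bar{\Qfunc}^{(n)}=\Qfunc^{\epsilon_n}$ from the second fact above is precisely what reconciles them; once it is in place, everything else is the compactness-plus-Berge bookkeeping already carried out for Theorem~\ref{thm:msiexist}.
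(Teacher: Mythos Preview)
Your proposal is correct and follows essentially the same route as the paper: reuse the perturbed fixed points $\rho^{(n)}$ from Theorem~\ref{thm:msiexist}, let the conjectures $\Qfunc^{(n)}$ be the on-policy reward-to-go functions, establish the key identity $\Qfunc_t^{(n),i}(h_t^i,u_t^i)=\Qfunc_t^{\epsilon_n,i}(\comp_t^i,u_t^i;\rho^{(n),-i})$ (your ``second fact'' is exactly the paper's ``Claim''), and pass to a subsequential limit. The only cosmetic difference is that for sequential rationality the paper invokes Berge's Maximum Theorem on the $\epsilon$-constrained argmax correspondence, whereas you argue directly from the LP structure that mass strictly above the floor $\epsilon_n$ sits only on maximizers; both arguments are valid and yield the same conclusion.
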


\begin{proof}
    {The proof of Theorem \ref{thm:msiseexist} follows similar steps to that of Theorem \ref{thm:msiexist}, where we construct a sequence of strictly mixed strategy profiles via the fixed points of dynamic program based best response mappings. In addition, we show the sequential rationality of the strategy profile constructed.}
    
	Let $(\rho^{(n)})_{n=1}^\infty$ be a sequence of $\Comp$-based strategy profiles that always assigns strictly mixed actions as constructed in the proof of Theorem \ref{thm:msiexist}. By taking a sub-sequence, without loss of generality, assume that $\rho^{(n)}\rightarrow \rho^{(\infty)}$ for some $\Comp$-based strategy profile $\rho^{(\infty)}$. 
	
	Let $\Qfunc^{(n)}$ be conjectures of reward-to-go functions consistent (in the sense of Definition \ref{def:KSE}) with $\rho^{(n)}$, i.e.
	\begin{align}
		\Qfunc_\tau^{(n), i}(h_t^i, u_t^i) := \E^{\rho^{(n)}}\left[\sum_{t=\tau}^T R_t^i\Big|h_\tau^i, u_\tau^i\right].
	\end{align}
	
	Let $\Qfunc^{(\infty)}$ be the limit of a sub-sequence of $(\Qfunc^{(n)})_{n=1}^\infty$ (such a limit exists since the range of each $\Qfunc_\tau^{(n), i}$ is a compact set). We proceed to show that $(\rho^{(\infty)}, \Qfunc^{(\infty)})$ forms a sequential equilibrium (as defined in Definition \ref{def:KSE}). Note that by construction, $\Qfunc^{(\infty)}$ is fully consistent with $\rho^{(\infty)}$. We only need to show sequential rationality.
	
	\textbf{Claim:} Let $\Qfunc_t^{\epsilon, i}$ be as defined in \eqref{defofBR} in the proof of Theorem \ref{thm:msiexist}, then
	\begin{equation}
		\Qfunc_t^{(n), i}(h_t^i, u_t^i) = \Qfunc_t^{\epsilon_n, i}(\comp_{t}^i, u_t^i; \rho^{(n), -i}),
	\end{equation}
	for all $i\in\mathcal{I}, t\in\mathcal{T}, h_t^i\in\mathcal{H}_t^i$, and $u_t^i\in\mathcal{U}_t^i$.
	
	By construction in the proof of Theorem \ref{thm:msiexist}, $\rho_t^{(n), i}(\comp_t^i)\in \mathrm{BR}_t^{\epsilon_n, i}[\comp_t^i](\rho^{(n), -i})$. Given the claim, this means that
	\begin{equation}
		\rho_{t}^{(n), i}(\comp_t^i) \in \underset{\eta\in \Delta^{\epsilon_n}(\mathcal{U}_{t}^{i}) }{\arg\max}~\sum_{\tilde{u}_{t}^i} \Qfunc_t^{(n), i}(h_{t}^i, \tilde{u}_{t}^i) \eta(\tilde{u}_{t}^i),
	\end{equation}
	for all $i\in\mathcal{I}, t\in \mathcal{T}$ and $h_{t}^i\in \mathcal{H}_{t}^i$.
	
	Applying Berge's Maximum Theorem \citep{sundaram1996first} in a similar manner to the proof of Theorem \ref{thm:msiexist} we obtain
	\begin{equation}
		\rho_{t}^{(\infty), i}(\comp_t^i) \in \underset{\eta\in \Delta(\mathcal{U}_{t}^{i}) }{\arg\max}~\sum_{\tilde{u}_{t}^i} \Qfunc_t^{(\infty), i}(h_{t}^i, \tilde{u}_{t}^i) \eta(\tilde{u}_{t}^i),
	\end{equation}
	for all $i\in\mathcal{I}, t\in \mathcal{T}$ and $h_{t}^i\in \mathcal{H}_{t}^i$. 
	
	Therefore, we have shown that $\rho^{(\infty)}$ is sequentially rational under $\Qfunc^{(\infty)}$ and we have completed the proof.
	
	\textbf{Proof of Claim:} For clarity of exposition we drop the superscript $(n)$ of $\rho^{(n)}$. We know that $\Qfunc_t^{(n), i}$ satisfies the following equations: 
    \begin{subequations}
	\begin{align}
		\Qfunc_{T}^{(n), i}(h_T^i, u_T^i)&=\E^{\rho}[R_T^i|h_T^i, u_T^i], \\
		V_t^{(n), i}(h_t^i)&:=\sum_{\tilde{u}_t^i}  \Qfunc_t^{(n), i}(h_t^i, \tilde{u}_t^i) \rho^{ i}_t(\tilde{u}_t^i|\comp_t^i),\\
		\Qfunc_{t-1}^{(n), i}(h_{t-1}^i, u_{t-1}^i)&:=
		\E^{\rho}[R_{t-1}^i|h_{t-1}^i, u_{t-1}^i] + \sum_{\tilde{h}_{t}^i\in\mathcal{H}_{t}^i } V_{t}^{(n), i}(\tilde{h}_t^i) \Pr^{\rho}(\tilde{h}_{t}^i|h_{t-1}^i, u_t^i).
	\end{align}
    \end{subequations}
	
	Since $\Comp$ is mutually sufficient information, we have
	\begin{align}
		\Pr^{\rho}(\comp_{t+1}^i|h_t^i, u_t^i) &:= P_t^{i, \rho^{-i}}(\comp_{t+1}^i|\comp_t^i, u_t^i),\\
		\E^{\rho}[R_t^i|h_t^i, u_t^i] &:= r_t^{i, \rho^{-i}}(\comp_t^i, u_t^i),
	\end{align}
	where $P_t^{i, \rho^{-i}}$ and $r_t^{i, \rho^{-i}}$ are as specified in Definition \ref{def:infostate}.
	
	Therefore, through an inductive argument, one can show then $\Qfunc_t^{(n), i}(h_t^i, u_t^i)$ depends on $h_t^i$ only through $\comp_t^i$, and
	\begin{subequations}\label{defofKinSE}
		\begin{align}
			\Qfunc_{T}^{(n), i}(\comp_T^i, u_T^i)&=r_{T}^{i, \rho^{-i}}(\comp_T^i, u_T^i), \label{defofKinSE:throwaway1}\\
			V_t^{(n), i}(\comp_t^i)&:=\sum_{\tilde{u}_t^i}  \Qfunc_t^{i}(\comp_t^i, \tilde{u}_t^i;\rho^{-i}) \rho^{i}_t(\tilde{u}_t^i|\comp_t^i),\label{defofKinSE:throwaway2}\\
			\quad \Qfunc_{t-1}^{(n), i}(\comp_{t-1}^i, u_{t-1}^i)&:=
			r_{t-1}^{i,\rho^{-i}}(\comp_{t-1}^i, u_{t-1}^i)+ \sum_{\tilde{\comp}_{t}^i\in\Compset_{t}^i } V_{t}^{(n), i}(\tilde{\comp}_t^i) P_{t-1}^{i,\rho^{-i}}(\tilde{\comp}_{t}^i|\comp_{t-1}^i, u_t^i).\label{defofKinSE:throwaway3}
		\end{align}
	\end{subequations}
	
	The claim is then established by comparing \eqref{defofKinSE} with \eqref{defofBR} and combining with the fact that $\rho_t^i(\comp_t^i)\in \mathrm{BR}_t^{\epsilon, i}[\comp_t^i] (\rho^{-i})$.
\end{proof}

\subsection{Proof of Theorem \ref{thm:usiseequiv}}\label{app:thm:usiseequiv}

\begin{thm}[Theorem \ref{thm:usiseequiv}, restated]
	If $\Comp=(\Comp^i)_{i\in\mathcal{I}}$ where $\Comp^i$ is unilaterally sufficient information for player $i$, then the set of $\Comp$-based sequential equilibrium payoffs is the same as that of all sequential equilibria.
\end{thm}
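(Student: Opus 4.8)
The plan is to mirror the proof of Theorem~\ref{thm:usiequiv}: I will establish a sequential‑equilibrium analogue of Lemma~\ref{lem:usireplace} --- for any sequential equilibrium $g$ and any player $i$, replacing $g^i$ by the associated $\Comp^i$-based strategy $\rho^i$ of \eqref{rhotheultimateequivalentstrategy} again yields a sequential equilibrium with the same payoff profile --- and then iterate over $i\in\mathcal{I}$, which shows that every sequential equilibrium payoff is attained by a $\Comp$-based sequential equilibrium while the reverse inclusion is trivial. Throughout I would work with the ``model‑free'' version~2 of sequential equilibrium (Definition~\ref{def:KSE2}), which is equivalent to the other formulations by Theorem~\ref{thm:SEdefequiv}; the point of this choice is that its consistency condition $(2')$ for player $i$ involves only $g^{-i}$, so player $i$'s conjectured value functions will be insensitive to the swap of $g^i$ for $\rho^i$.

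Given a sequential equilibrium $(g,\Qfunc)$ under Definition~\ref{def:KSE2} with approximating sequence $(g^{(n)},\Qfunc^{(n)})$, I set $\bar g^{(n),j}=g^{(n),j}$ for $j\neq i$ and let $\bar g^{(n),i}=\rho^{(n),i}$ be the $\Comp^i$-based strategy associated with $g^{(n),i}$ via \eqref{rhotheultimateequivalentstrategy}; since $\rho^{(n),i}_t(\comp_t^i)$ is a convex combination of the fully mixed distributions $g^{(n),i}_t(h_t^i)$, the profile $\bar g^{(n)}$ is again fully mixed. Re‑deriving conjectures $\bar\Qfunc^{(n)}$ from $\bar g^{(n)}$ through $(2')$ and passing to a subsequence, $\bar g^{(n)}\to\bar g=(\rho^i,g^{-i})$ and $\bar\Qfunc^{(n)}\to\bar\Qfunc$, so $\bar\Qfunc$ is fully consistent with $\bar g$; payoff preservation is immediate from Lemma~\ref{lem:usiclaim}. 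For player $i$'s sequential rationality: because $(2')$ does not see $g^i$ one has $\bar\Qfunc^i=\Qfunc^i$; by Lemma~\ref{lem:usiisinfostate} together with Lemma~\ref{lemma:twoMDPequiv}, $\Qfunc^i_t(h_t^i,u_t^i)$ is a function of $(\comp_t^i,u_t^i)$ alone, so $\arg\max_{u_t^i}\Qfunc^i_t(h_t^i,u_t^i)$ depends on $h_t^i$ only through $\comp_t^i$; since $\rho^i_t(\comp_t^i)$ is a convex combination of the distributions $g^i_t(h_t^i)$ over histories $h_t^i$ compatible with $\comp_t^i$, each of which is supported on that common $\arg\max$ set by sequential rationality of $g$, so is $\rho^i_t(\comp_t^i)$, and $\bar g^i$ is sequentially rational under $\bar\Qfunc^i$.

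The step I expect to be the main obstacle is sequential rationality of $\bar g^j=g^j$ for $j\neq i$: player $j$'s $(2')$-conjectures depend on player $i$'s strategy, and the swap $g^i\to\rho^i$ shifts player $j$'s beliefs about player $i$'s full information $H_\tau^i$ (although, notably, not about $\Comp_\tau^i$). The resolution is a strengthening of Lemma~\ref{lem:usiclaim}: for distinct players $a\neq b$ with $\Comp^a$ unilaterally sufficient for $a$, any strategies $\tilde g^{-a}$ of the remaining players, and the associated $\rho^a$, the joint law of $(\Comp_{1:T}^a,X_{1:T},H_{1:T}^{-a},U_{1:T})$ is the same under $(g^a,\tilde g^{-a})$ and under $(\rho^a,\tilde g^{-a})$. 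This is proved by induction on $t$: the conditional independence $H_t^a\perp(X_t,H_t^{-a})\mid\Comp_t^a$ built into Definition~\ref{def:usi} forces $U_t^a$ to have the same law given $(\Comp_t^a,X_t,H_t^{-a},U_{1:t-1})$ under $g^a$ as under $\rho^a$, while the dynamics $f_t$, the updates $\iota_t^a$, and the unchanged strategies $\tilde g^{-a}$ supply the remaining transitions identically in both cases. Because $H_\tau^b$, $U_\tau^b$, and the conditional means $\E[R_t^b\mid X_t,U_t]$ are all measurable with respect to this joint process, the quantities $\E^{\sigma}\bigl[\bm{1}_{\{H^b_\tau=h^b_\tau,\,U^b_\tau=u^b_\tau\}}\sum_{t\geq\tau}R_t^b\bigr]$ and $\Pr^{\sigma}(h^b_\tau,u^b_\tau)$ are invariant across $\sigma\in\{(g^a,\tilde g^{-a}),(\rho^a,\tilde g^{-a})\}$, so their ratio $\E^{\sigma}[\sum_{t\geq\tau}R_t^b\mid h^b_\tau,u^b_\tau]$ is invariant, and likewise with an added maximization over $\tilde g^b_{\tau+1:T}$. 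Applying this with $a=i$ (and, at later iterations, with $a$ each previously swapped player) shows that the $(2')$-conjectures of every $j\neq i$ are unchanged along the approximating sequence, hence $\bar\Qfunc^j=\Qfunc^j$ and $g^j$ stays sequentially rational. This establishes the sequential‑equilibrium analogue of Lemma~\ref{lem:usireplace}, and iterating over $i\in\mathcal{I}$, exactly as in Theorem~\ref{thm:usiequiv}, completes the proof.
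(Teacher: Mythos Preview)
Your overall architecture matches the paper's proof closely: working with Definition~\ref{def:KSE2}, building the associated $\Comp^i$-based strategies $\rho^{(n),i}$ along the approximating sequence, and splitting into ``$\rho^i$ is sequentially rational under $\Qfunc^i$'' versus ``$\Qfunc^{(n),j}$ is unchanged for $j\neq i$''. Your treatment of player $i$'s part---that $\Qfunc_t^{(n),i}(h_t^i,u_t^i)$ depends on $h_t^i$ only through $\comp_t^i$, and that $\rho_t^i(\comp_t^i)$ lies in the convex hull of the $g_t^i(h_t^i)$ for compatible $h_t^i$---is essentially the paper's Lemma~\ref{lem:selfKfunc} and the support argument in Lemma~\ref{lem:usisereplace}.

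The gap is in your argument for $j\neq i$. The assertion that the \emph{full} joint law of $(\Comp_{1:T}^a,X_{1:T},H_{1:T}^{-a},U_{1:T})$ is invariant under $g^a\to\rho^a$ is false. The USI conditional independence $H_t^a\perp(X_t,H_t^{-a})\mid\Comp_t^a$ does yield
\[
\Pr^{g}(u_t^a\mid \comp_t^a,x_t,h_t^{-a})=\sum_{h_t^a}g_t^a(u_t^a\mid h_t^a)\,F_t^{a,g^a}(h_t^a\mid \comp_t^a)=\rho_t^a(u_t^a\mid\comp_t^a),
\]
but once you \emph{also} condition on $U_{1:t-1}^a$---which is a function of $H_t^a$, not of $\Comp_t^a$---this identity fails. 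A two-line counterexample: $T=2$, stateless, $H_t^b\equiv\varnothing$, $H_2^a=U_1^a$, $\Comp_t^a\equiv\varnothing$ (USI holds trivially); let $g_1^a$ be uniform on $\{0,1\}$ and $g_2^a$ set $U_2^a=U_1^a$. Under $g^a$, $(U_1^a,U_2^a)$ is uniform on $\{(0,0),(1,1)\}$; under the associated $\rho^a$, it is uniform on $\{0,1\}^2$. So the joint you invoke differs, and with it your direct computation of $\E^\sigma\bigl[\bm{1}_{\{H_\tau^b=h_\tau^b,U_\tau^b=u_\tau^b\}}\sum_{t\ge\tau}R_t^b\bigr]$.

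The paper closes this by proving only a \emph{one-step} invariance (Lemma~\ref{lem:prequiv}): for fully mixed $g$ and each $j\neq i$, the transition kernel $\Pr^{g}(\tilde h_{t+1}^j\mid h_t^j,u_t^j)$ and the expected reward $\E^{g}[R_t^j\mid h_t^j,u_t^j]$ are unchanged when $g^i$ is replaced by the associated $\rho^i$. The $(2')$-conjectures $\Qfunc^{(n),j}$ are then recovered from these one-step data via the Bellman recursion, so $\Qfunc^{(n),j}$ is unchanged without any multi-time joint-law claim. Your induction is salvageable along exactly these lines: prove that for each fixed $t$ the law of the triple $(\Comp_t^a,X_t,H_t^{-a})$---and hence of $(H_t^j,U_t^j,X_t,U_t,H_{t+1}^j)$---is invariant, then feed that into the backward recursion for $\Qfunc^{(n),j}$ rather than trying to control the whole trajectory at once.
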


{To prove the assertion of Theorem \ref{thm:usiseequiv} we establish a series of technical results that appear in Lemmas \ref{lem:htihtjcondindepqti} - \ref{lem:usisereplace} below. The two key results needed for the proof of the theorem are provided by Lemmas \ref{lem:prequiv} and \ref{lem:usisereplace}. Lemma \ref{lem:prequiv} asserts that a player can switch to a USI-based strategy without changing the dynamic decision problems faced by the other players. The result of Lemma \ref{lem:prequiv} allows to establish the analogue of the payoff equivalence result of Lemma \ref{lem:policyeval} under the concept of sequential equilibrium. Lemma \ref{lem:usisereplace} asserts that any one player can switch to a USI-based strategy without affecting the sequential equilibrium (under perfect recall) and its payoffs. The proof of Lemma \ref{lem:prequiv} is based on two technical results provided by Lemmas \ref{lem:htihtjcondindepqti} and \ref{lem:usisatisoldconds}. The proof of Lemma \ref{lem:usisereplace} is based on Lemmas \ref{lem:prequiv} and \ref{lem:selfKfunc} which states that the history-action value function of a player $i\in\mathcal{I}$ can be expressed with their USI.}

\begin{lemma}\label{lem:htihtjcondindepqti}
	Suppose that $\Comp^i$ is unilaterally sufficient information. Then
	\begin{align}
		\Pr^g(h_t^i|h_t^j) &= \Pr^{g}(h_t^i|\comp_t^i)\Pr^g(\comp_t^i|h_t^j),\label{delaware}
	\end{align}
	whenever $\Pr^g(\comp_t^i) > 0, \Pr^g(h_t^j) > 0$.
\end{lemma}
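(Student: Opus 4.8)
The plan is to derive \eqref{delaware} from the factorization \eqref{eq:usicondition} in the definition of USI together with the fact that $\Comp_t^i$ is a deterministic function of $H_t^i$.

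First I would extract from \eqref{eq:usicondition} the conditional independence of $H_t^i$ and $H_t^j$ given $\Comp_t^i$ (here $j\ne i$, so $H_t^j$ is a sub-vector of $H_t^{-i}$). Summing \eqref{eq:usicondition} over $x_t$ and over the components of $h_t^{-i}$ other than $h_t^j$ yields
\[
	\Pr^g(h_t^i, h_t^j\mid \comp_t^i) = F_t^{i,g^i}(h_t^i\mid \comp_t^i)\,\tilde{\Phi}_t^{i,g^{-i}}(h_t^j\mid \comp_t^i),
\]
where $\tilde{\Phi}_t^{i,g^{-i}}(h_t^j\mid\comp_t^i):=\sum_{x_t,\,h_t^{-\{i,j\}}}\Phi_t^{i,g^{-i}}(x_t,h_t^{-i}\mid\comp_t^i)$. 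Because $F_t^{i,g^i}(\cdot\mid\comp_t^i)$ and $\Phi_t^{i,g^{-i}}(\cdot\mid\comp_t^i)$ are probability distributions, summing the displayed identity over $h_t^j$ (resp.\ over $h_t^i$) identifies $\Pr^g(h_t^i\mid\comp_t^i) = F_t^{i,g^i}(h_t^i\mid\comp_t^i)$ and $\Pr^g(h_t^j\mid\comp_t^i) = \tilde{\Phi}_t^{i,g^{-i}}(h_t^j\mid\comp_t^i)$, so the right-hand side factors as $\Pr^g(h_t^i\mid\comp_t^i)\Pr^g(h_t^j\mid\comp_t^i)$; equivalently, $\Pr^g(h_t^i\mid\comp_t^i, h_t^j) = \Pr^g(h_t^i\mid\comp_t^i)$ whenever $\Pr^g(\comp_t^i, h_t^j)>0$.

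Next I would combine this with the chain rule. Since $\Comp_t^i$ is generated from $H_t^i$ by the fixed maps $\iota_{1:t}^i$, the value $\comp_t^i$ is determined by $h_t^i$, so $\Pr^g(h_t^i\mid h_t^j) = \Pr^g(h_t^i, \comp_t^i\mid h_t^j) = \Pr^g(h_t^i\mid\comp_t^i, h_t^j)\Pr^g(\comp_t^i\mid h_t^j)$, and substituting the conditional-independence identity from the previous step gives \eqref{delaware}. The degenerate case $\Pr^g(\comp_t^i, h_t^j)=0$ is handled separately: then $\Pr^g(\comp_t^i\mid h_t^j)=0$, while also $\Pr^g(h_t^i, h_t^j)\le\Pr^g(\comp_t^i, h_t^j)=0$, so both sides of \eqref{delaware} vanish.

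I do not expect a deep obstacle here: the argument is a marginalization followed by the chain rule. The only points requiring care are the bookkeeping flagged in the footnote to Definition~\ref{def:usi} (when $X_t$, $H_t^i$, $H_t^{-i}$ share components, the marginalizations must range over congruent realizations, incongruent combinations contributing zero) and the treatment of zero-probability conditioning events, which I would handle exactly via the convention stated in Section~\ref{sec:notations}.
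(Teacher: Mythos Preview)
Your proposal is correct and follows essentially the same approach as the paper's proof: marginalize the USI factorization \eqref{eq:usicondition} to obtain the conditional independence of $H_t^i$ and $H_t^j$ given $\Comp_t^i$, then combine with the chain rule and the fact that $\Comp_t^i$ is a function of $H_t^i$. Your treatment is in fact slightly more careful than the paper's, since you explicitly identify the factors as the marginals and handle the degenerate case $\Pr^g(\comp_t^i, h_t^j)=0$, which the paper leaves implicit.
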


\begin{proof}
	From the definition of unilaterally sufficient information (Definition \ref{def:usi}) we have 
	\begin{align}
		\Pr^{g}(\tilde{h}_t^i, \tilde{h}_t^j|\comp_t^i) &= F_t^{i, g^i}(\tilde{h}_t^i|\comp_t^i) F_t^{i,j, g^{-i}}(\tilde{h}_t^j|\comp_t^i), 
	\end{align}
	where
	\begin{align}
		F_t^{i,j, g^{-i}}(h_t^j|\comp_t^i) := \sum_{\tilde{x}_t, \tilde{h}_t^{-\{i, j\} }}\Phi_t^{i,g^{-i}}(\tilde{x}_t, (h_t^j, \tilde{h}_t^{-\{i, j\}})|\comp_t^i).
	\end{align}

	Therefore, we conclude that $H_t^i$ and $H_t^j$ are conditionally independent given $\Comp_t^i$. Since $\Comp_t^i$ is a function of $H_t^i$, we have
	\begin{align}
		\Pr^g(h_t^i|h_t^j) &= \Pr^g(h_t^i, \comp_t^i|h_t^j) =\Pr^{g}(h_t^i|\comp_t^i)\Pr^g(\comp_t^i|h_t^j).
	\end{align}
\end{proof}

\begin{lemma}\label{lem:usisatisoldconds}
 	Suppose that $\Comp^i$ is unilaterally sufficient information for player $i\in\mathcal{I}$. Then there exist functions $(\Pi_t^{j,i,g^{-\{i,j\}}})_{j\in\mathcal{I}\backslash\{i\},t\in\mathcal{T}}$, $(r_t^{i, j,g^{-\{i, j\}}})_{j\in\mathcal{I}\backslash\{i\}, t\in\mathcal{T}}$, where $\Pi_t^{i,j,g^{-\{i, j\}}}: \Compset_t^i\times \mathcal{H}_t^j \times \mathcal{U}_t^i \times \mathcal{U}_t^j \mapsto \Delta(\mathcal{H}_{t+1}^j)$, $r_t^{i,j,g^{-\{i,j\}}}: \Compset_t^i\times \mathcal{H}_t^j  \times \mathcal{U}_t^i \times \mathcal{U}_t^j \mapsto [-1, 1]$ such that
	\begin{enumerate}[(1)]
		\item $\Pr^{g}(\tilde{h}_{t+1}^j|h_t^i, h_t^j, u_t^i, u_t^j) = \Pi_t^{j, i, g^{-\{i, j\}}}(\tilde{h}_{t+1}^j|\comp_t^i, h_t^j, u_t^i, u_t^j)$ for all $t\in\mathcal{T}\backslash\{T\}$; and
		\item $\E^{g}[R_t^j|h_t^i, h_t^j, u_t^i, u_t^j] = r_t^{i, j, g^{-\{i, j\}}}(\comp_t^i, h_t^j, u_t^i, u_t^j)$ for all $t\in \mathcal{T}$,
		
	\end{enumerate}
	for all $j\in\mathcal{I}\backslash\{i\}$ and all behavioral strategy profiles $g$ whenever the left-hand side expressions are well-defined.
\end{lemma}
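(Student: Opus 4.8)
The plan is to expand the two left-hand quantities using the structural equations of the model and then to invoke the unilateral sufficiency of $\Comp^i$ (through Definition \ref{def:usi} and Lemma \ref{lem:htihtjcondindepqti}) in order to (a) replace $h_t^i$ by $\comp_t^i$ and (b) eliminate the dependence on $g^i$ and on $g^j$. Recall that $Z_t^j=f_t^{j,Z}(X_t,U_t,W_t)$ and $R_t^j=f_t^{j,R}(X_t,U_t,W_t)$ for fixed functions, that $H_{t+1}^j=(H_t^j,Z_t^j)$, and that each $U_t^\ell$ is a function of $H_t^\ell$ together with an independent randomization. First I would apply the law of total probability to write $\Pr^g(\tilde h_{t+1}^j\mid h_t^i,h_t^j,u_t^i,u_t^j)$ and $\E^g[R_t^j\mid h_t^i,h_t^j,u_t^i,u_t^j]$ as sums over $(\tilde x_t,\tilde h_t^{-\{i,j\}},\tilde u_t^{-\{i,j\}},\tilde w_t)$ in which $\tilde w_t$ supplies the strategy-free factor $\Pr(\tilde w_t)$, the actions of players outside $\{i,j\}$ supply $\prod_{\ell\neq i,j}g_t^\ell(\tilde u_t^\ell\mid\tilde h_t^\ell)$, and the only remaining factor is the conditional belief $\beta:=\Pr^g(\tilde x_t,\tilde h_t^{-\{i,j\}}\mid h_t^i,h_t^j,u_t^i,u_t^j)$. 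Everything then reduces to understanding $\beta$.

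The core step is to show that $\beta$ depends on $h_t^i$ only through $\comp_t^i$ and on $g$ only through $g^{-\{i,j\}}$. Since $U_t^i$ (resp. $U_t^j$) is a function of $H_t^i$ (resp. $H_t^j$) and an independent coin, $(X_t,H_t^{-\{i,j\}})$ is conditionally independent of $(U_t^i,U_t^j)$ given $(H_t^i,H_t^j)$, so $\beta=\Pr^g(\tilde x_t,\tilde h_t^{-\{i,j\}}\mid h_t^i,h_t^j)$. For the reduction to $\comp_t^i$ and for $g^i$-independence I would use the defining factorization of USI, which yields conditional independence of $H_t^i$ and $(X_t,H_t^{-i})$ given $\Comp_t^i$ with the $g^i$-free kernel $\Phi_t^{i,g^{-i}}$ of Definition \ref{def:usi}: writing $\beta=\Pr^g(\tilde x_t,\tilde h_t^{-\{i,j\}},h_t^j\mid h_t^i)/\Pr^g(h_t^j\mid h_t^i)$ and conditioning on $\Comp_t^i$ in place of $H_t^i$, both numerator and denominator become quotients and marginals of $\Phi_t^{i,g^{-i}}(\cdot\mid\comp_t^i)$, hence functions of $\comp_t^i$ that do not involve $g^i$.

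The hard part will be the $g^j$-independence, since $\Phi_t^{i,g^{-i}}$ may a priori depend on $g^j$. Here I would use the structural fact that player $j$'s past actions $U_{1:t-1}^j$ are recorded inside $H_t^j$: conditioning on $\{H_t^j=h_t^j\}$ pins down $u_{1:t-1}^j$, so in the chain-rule expansion of $\Pr^g(\tilde x_t,\tilde h_t^{-\{i,j\}},\comp_t^i,h_t^j)$ and of $\Pr^g(\comp_t^i,h_t^j)$ the same product $\prod_{s=1}^{t-1}g_s^j(u_s^j\mid h_s^j)$ of $g^j$-factors occurs (the $h_s^j$ being prefixes of $h_t^j$), and it cancels upon forming $\beta=\Pr^g(\tilde x_t,\tilde h_t^{-\{i,j\}}\mid\comp_t^i,h_t^j)$; what survives depends on $g$ only through $g^{-\{i,j\}}$. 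Combining the last two paragraphs, $\beta=\Psi_t^{i,j,g^{-\{i,j\}}}(\tilde x_t,\tilde h_t^{-\{i,j\}}\mid\comp_t^i,h_t^j)$ for a suitable function $\Psi$. Substituting this back into the expansions of the first paragraph yields closed-form expressions for $\Pr^g(\tilde h_{t+1}^j\mid h_t^i,h_t^j,u_t^i,u_t^j)$ and $\E^g[R_t^j\mid h_t^i,h_t^j,u_t^i,u_t^j]$ whose only dependence on $h_t^i$ is through $\comp_t^i$ and whose only dependence on $g$ is through $g^{-\{i,j\}}$; these define the required $\Pi_t^{j,i,g^{-\{i,j\}}}$ and $r_t^{i,j,g^{-\{i,j\}}}$. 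No induction on $t$ is needed, and incongruent realizations of shared components of $X_t,H_t^i,H_t^j$ are handled exactly as in the footnote to Definition \ref{def:usi}.
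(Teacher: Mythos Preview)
Your proposal is correct and uses the same two ingredients as the paper---the USI factorization of Definition~\ref{def:usi} and the standard fact that conditioning on a player's full history renders conditional laws independent of that player's strategy---but you apply them in a different order. You first invoke USI to reduce $h_t^i$ to $\comp_t^i$ and kill the $g^i$-dependence (obtaining $\beta$ as a ratio of marginals of $\Phi_t^{i,g^{-i}}$), and then run a separate chain-rule cancellation to remove $g^j$. The paper instead observes at the outset that, since both $h_t^i$ and $h_t^j$ are in the conditioning, one may replace $g^{\{i,j\}}$ by \emph{fixed} fully mixed strategies $\hat g^{\{i,j\}}$ (this is exactly your chain-rule cancellation, applied to both players at once; the paper cites \cite[Section~6.5]{kumar1986stochastic}); it then applies USI a single time with the modified profile $(\hat g^j,g^{-\{i,j\}})$ in place of $g^{-i}$, so that $\Phi_t^{i,(\hat g^j,g^{-\{i,j\}})}$ manifestly depends only on $g^{-\{i,j\}}$. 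The paper's ordering is a bit slicker because it handles the $g^i$- and $g^j$-independence in one stroke before touching USI, whereas yours requires combining two representations of $\beta$ (one from USI, one from the chain rule) to conclude the joint $(\comp_t^i,g^{-\{i,j\}})$-dependence. Both routes are valid; your mention of Lemma~\ref{lem:htihtjcondindepqti} is not actually needed in the argument you wrote.
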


\begin{proof}[Proof of Lemma \ref{lem:usisatisoldconds}]
	Let $\hat{g}^l$ be some fixed, fully mixed behavioral strategy for player $l\in\mathcal{I}$. 
	
	Fix $j\neq i$. First,
	\begin{align}
		\Pr^g(x_t, h_t^{-\{i, j\}}| h_t^i, h_t^j) &= \Pr^{\hat{g}^{\{i, j\}}, g^{-\{i, j\}} } (x_t, h_t^{-\{i, j\}}| h_t^i, h_t^j)\label{eq:usisatisoldconds:1} \\
		&=\dfrac{\Phi_t^{i, (\hat{g}^j, g^{-\{i, j\}}) }(x_t, h_t^{-i}|\comp_t^i) }{\sum_{\tilde{x}_t,\tilde{h}_t^{-\{i, j\}}} \Phi_t^{i, (\hat{g}^j, g^{-\{i, j\}})}(\tilde{x}_t, (\tilde{h}_t^{-\{i, j\}}, h_t^j)|\comp_t^i)}\label{eq:usisatisoldconds:2} \\
		&=:\Phi_t^{i, j, g^{-\{i, j\}}}(x_t, h_t^{-\{i, j\}}| \comp_t^i, h_t^j),
	\end{align}
	for any behavioral strategy profile $g$, {where in \eqref{eq:usisatisoldconds:1} we used the fact that since $(h_t^i, h_t^j)$ are included in the conditioning, the conditional probability is independent of the strategies of player $i$ and $j$ \citep[Section 6.5]{kumar1986stochastic}. 
    In \eqref{eq:usisatisoldconds:2} we used Bayes rule and the definition of USI (Definition \ref{def:usi}).}
	
	Therefore, {using the Law of Total Probability,}
	\begin{align}
		\Pr^g(\tilde{x}_t, \tilde{u}_t^{-\{i,j\}}|h_t^i, h_t^j) &= \sum_{\tilde{h}_t^{-\{i, j\}}} \Pr^g(\tilde{u}_t^{-\{i,j\}}|\tilde{x}_t, \tilde{h}_t^{-\{i, j\}}, h_t^i, h_t^j) \Pr^g(\tilde{x}_t, \tilde{h}_t^{-\{i, j\}}| h_t^i, h_t^j)\\
		&= \sum_{\tilde{h}_t^{-\{i, j\}}} \left(\prod_{l \in \mathcal{I}\backslash\{i,j\} }g_t^l(\tilde{u}_t^l|\tilde{h}_t^l) \right) \Phi_t^{i, j, g^{-\{i, j\}}}(\tilde{x}_t, \tilde{h}_t^{-\{i, j\}}| \comp_t^i, h_t^j)\\
		&=: \tilde{P}_t^{i, j, g^{-\{i, j\}}}(\tilde{x}_t, \tilde{u}_t^{-\{i, j\}}|\comp_t^i, h_t^j),
	\end{align}
	for any behavioral strategy profile $g$.
	
	We know that $H_{t+1}^j = \xi_t^j(X_t, U_t, H_t^j)$ for some function $\xi_t^j$ independent of the strategy profile $g$, hence {using the Law of Total Probability we have}
	\begin{align}
		&\es\Pr^g(\tilde{h}_{t+1}^j|h_t^i, h_t^j, u_t^i, u_t^j) \\
		&= \sum_{\tilde{x}_t, \tilde{u}_t^{-\{i, j\}}} \bm{1}_{\{\tilde{h}_{t+1}^j = \xi_t^i(\tilde{x}_t, (u_t^{\{i, j\}},  \tilde{u}_t^{-\{i, j\}}), h_t^j) \} } \tilde{P}_t^{i, j, g^{-\{i, j\}}}(\tilde{x}_t, \tilde{u}_t^{-\{i, j\}}|\comp_t^i, h_t^j)\\
		&=:\Pi_t^{j, i, g^{-\{i, j\}}}(\tilde{h}_{t+1}^j|\comp_t^i, h_t^j, u_t^i, u_t^j),
	\end{align}
	establishing part (1) of Lemma \ref{lem:usisatisoldconds}.
	
	Since $\E[R_t^j|x_t, u_t]$ is strategy-independent, for $j\in\mathcal{I}\backslash\{i\}$, {using the Law of Total Expectation we have}
	\begin{align}
		\E^g[R_t^j|h_t^i, h_t^j, u_t^i, u_t^j] &= \sum_{\tilde{x}_t, \tilde{u}_t^{-i}} \E[R_t^j|\tilde{x}_t, (u_t^{\{i, j\}}, \tilde{u}_t^{-\{i, j\}})] \tilde{P}_t^{i, j, g^{-\{i, j\}}}(\tilde{x}_t, \tilde{u}_t^{-\{i, j\}}|\comp_t^i, h_t^j)\\
		&=:r_t^{i, j, g^{-\{i, j\}}}(\comp_t^i, h_t^j, u_t^i, u_t^j),
	\end{align}
	establishing part (2) of Lemma \ref{lem:usisatisoldconds}.
\end{proof}

\begin{lemma}\label{lem:prequiv}
	Suppose that $\Comp^i$ is unilaterally sufficient information. Let $g=(g^j)_{j\in\mathcal{I}}$ be a fully mixed behavioral strategy profile. Let a $\Comp^i$-based strategy $\rho^i$ be such that
	\begin{equation}
		\rho_t^{i}(u_t^i|\comp_t^i) = \sum_{\tilde{h}_t^i}g_t^{ i}(u_t^i|\tilde{h}_t^i)F_t^{i, g^{i}}(\tilde{h}_t^i|\comp_t^i).\label{florida}
	\end{equation}
	Then
	\begin{enumerate}[(1)]
		\item $\Pr^{g}(\tilde{h}_{t+1}^j|h_t^j, u_t^j) = \Pr^{\rho^i, g^{-i}}(\tilde{h}_{t+1}^j|h_t^j, u_t^j)$ for all $t\in\mathcal{T}\backslash\{T\}$; and
		\item $\E^g[R_t^j|h_t^j, u_t^j] = \E^{\rho^i, g^{-i}}[R_t^j|h_t^j, u_t^j]$ for all $t\in\mathcal{T}$,
	\end{enumerate}
	for all $j\in\mathcal{I}\backslash\{i\}$ and all $h_t^j\in \mathcal{H}_t^j, u_t^j\in\mathcal{U}_t^j$.
\end{lemma}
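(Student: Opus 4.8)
The plan is to reduce both assertions to a single statement about the joint law of player $i$'s compressed information and player $j$'s full information, namely that $\Pr^g(\comp_t^i, h_t^j) = \Pr^{\rho^i, g^{-i}}(\comp_t^i, h_t^j)$ for every $t$ and every $(\comp_t^i, h_t^j)$. First I would record a few routine consequences of the conditional-independence structure of the model. Since $\rho^i$ defined by \eqref{florida} is a convex combination of the fully mixed conditionals $g_t^i(\cdot|\tilde{h}_t^i)$, it too is fully mixed, so $(\rho^i, g^{-i})$ is fully mixed and every conditional probability appearing below is well defined; in particular all information sets $h_t^j$ have positive probability under both profiles. Moreover, because distinct players' actions are drawn with independent randomization from their respective information, $\Pr^g(u_t^i|h_t^i, h_t^j, u_t^j) = g_t^i(u_t^i|h_t^i)$, $\Pr^g(u_t^j|\comp_t^i, h_t^j) = g_t^j(u_t^j|h_t^j)$, $\Pr^g(h_t^i|h_t^j, u_t^j) = \Pr^g(h_t^i|h_t^j)$, and $U_t^i$ and $U_t^j$ are conditionally independent given $(\Comp_t^i, H_t^j)$; the analogous identities hold under $(\rho^i, g^{-i})$.

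The one substantive observation at this stage is that \emph{under $g$} the action of player $i$ viewed through $\Comp_t^i$ is already distributed according to $\rho^i$. Indeed, marginalizing the factorization in Definition \ref{def:usi} gives $\Pr^g(h_t^i|\comp_t^i) = F_t^{i,g^i}(h_t^i|\comp_t^i)$, and Lemma \ref{lem:htihtjcondindepqti} then yields $\Pr^g(h_t^i|h_t^j) = F_t^{i,g^i}(h_t^i|\comp_t^i)\,\Pr^g(\comp_t^i|h_t^j)$; hence, using that $U_t^i$ is a noisy function of $H_t^i$ and therefore independent of $H_t^j$ given $\Comp_t^i$, $\Pr^g(u_t^i|\comp_t^i, h_t^j) = \sum_{\tilde{h}_t^i} g_t^i(u_t^i|\tilde{h}_t^i) F_t^{i,g^i}(\tilde{h}_t^i|\comp_t^i) = \rho_t^i(u_t^i|\comp_t^i)$ by \eqref{florida}, which is trivially also what holds under $(\rho^i, g^{-i})$.

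I would then prove the joint-law claim by induction on $t$. The base case is immediate, since $(\Comp_1^i, H_1^j)$ is a fixed function of the primitive $H_1$. For the inductive step, I would argue that $(\Comp_t^i, H_t^j)$ evolves as a controlled Markov chain whose one-step transition kernel, conditioned on $(\comp_t^i, h_t^j, u_t^i, u_t^j)$, depends on the strategy profile only through $g^{-\{i,j\}}$, and is therefore identical under $g$ and under $(\rho^i, g^{-i})$, which share those components: $\Comp_{t+1}^i = \iota_{t+1}^i(\comp_t^i, f_t^{i,Z}(X_t,U_t,W_t))$ and $H_{t+1}^j = \xi_t^j(X_t,U_t,H_t^j)$ are fixed functions of $(\comp_t^i, h_t^j, X_t, U_t, W_t)$, the conditional law of $(X_t, U_t^{-\{i,j\}})$ given $(\Comp_t^i, H_t^j, U_t^i, U_t^j)$ is precisely the quantity $\tilde{P}_t^{i,j,g^{-\{i,j\}}}$ appearing in the proof of Lemma \ref{lem:usisatisoldconds} (which depends on the profile only through $g^{-\{i,j\}}$), and $W_t$ is a primitive independent of the past. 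Factoring $\Pr^g(\comp_t^i, h_t^j, u_t^i, u_t^j) = \rho_t^i(u_t^i|\comp_t^i)\,g_t^j(u_t^j|h_t^j)\,\Pr^g(\comp_t^i, h_t^j)$ by the second paragraph, and likewise under $(\rho^i, g^{-i})$, the inductive hypothesis then propagates to time $t+1$.

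Finally, both parts of the lemma follow by conditioning $\Pr(\tilde{h}_{t+1}^j|h_t^j, u_t^j)$ and $\E[R_t^j|h_t^j, u_t^j]$ on $(H_t^i, U_t^i)$ and invoking Lemma \ref{lem:usisatisoldconds}: with $\Pr^g(u_t^i|h_t^i, h_t^j, u_t^j) = g_t^i(u_t^i|h_t^i)$ and $\Pr^g(h_t^i|h_t^j, u_t^j) = F_t^{i,g^i}(h_t^i|\comp_t^i)\,\Pr^g(\comp_t^i|h_t^j)$, the sum over $h_t^i$ collapses via \eqref{florida} to give $\Pr^g(\tilde{h}_{t+1}^j|h_t^j, u_t^j) = \sum_{\comp_t^i, u_t^i} \Pi_t^{j,i,g^{-\{i,j\}}}(\tilde{h}_{t+1}^j|\comp_t^i, h_t^j, u_t^i, u_t^j)\,\rho_t^i(u_t^i|\comp_t^i)\,\Pr^g(\comp_t^i|h_t^j)$, and the analogous identity with $r_t^{i,j,g^{-\{i,j\}}}$ in place of $\Pi_t^{j,i,g^{-\{i,j\}}}$ for the reward; the same expansions hold under $(\rho^i, g^{-i})$, and since $\Pr^g(\comp_t^i|h_t^j) = \Pr^{\rho^i, g^{-i}}(\comp_t^i|h_t^j)$ by the joint-law claim, the two sides coincide. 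I expect the main obstacle to be the inductive step --- specifically, verifying cleanly that the one-step kernel of $(\Comp_t^i, H_t^j)$ depends only on $g^{-\{i,j\}}$, which amounts to stitching the $H_t^j$-update supplied by Lemma \ref{lem:usisatisoldconds} together with the deterministic $\Comp_t^i$-update while keeping careful track of which conditionings on $X_t$, on the actions, and on the noise are legitimate.
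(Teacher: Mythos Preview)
Your proposal is correct and follows the same overall architecture as the paper: reduce to the equality $\Pr^{g}(\comp_t^i\mid h_t^j)=\Pr^{\rho^i,g^{-i}}(\comp_t^i\mid h_t^j)$, then expand $\Pr(\tilde{h}_{t+1}^j\mid h_t^j,u_t^j)$ and $\E[R_t^j\mid h_t^j,u_t^j]$ by conditioning on $(H_t^i,U_t^i)$, applying Lemma~\ref{lem:htihtjcondindepqti} and Lemma~\ref{lem:usisatisoldconds}, and collapsing the $h_t^i$-sum via \eqref{florida}; your final paragraph is exactly the paper's derivation \eqref{georgia}--\eqref{hawaii}.

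The one genuine difference is how you reach the key conditional equality. The paper does not induct on the joint law of $(\Comp_t^i,H_t^j)$; instead it invokes the Policy Equivalence Lemma (Lemma~\ref{lem:policyeval}) to obtain $\Pr^{g}(\comp_t^i)=\Pr^{\rho^i,g^{-i}}(\comp_t^i)$, observes from the USI factorization that $\Pr(h_t^j\mid\comp_t^i)$ depends on the profile only through $g^{-i}$, and combines the two via Bayes' rule. Your route is more self-contained --- you never cite Lemma~\ref{lem:policyeval} or Lemma~\ref{lem:usiisinfostate} --- but in effect you are re-proving the Policy Equivalence Lemma inline for the enlarged state $(\Comp_t^i,H_t^j)$, which is why you identify the joint-kernel verification as the main obstacle. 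The paper's route is more modular and avoids that verification entirely by separating the marginal of $\Comp_t^i$ (handled once and for all in Appendix~\ref{app:aux:infostate}) from the conditional of $H_t^j$ given $\Comp_t^i$ (which is immediate from Definition~\ref{def:usi}). Both buy you exactly the same conclusion, and your inductive step is sound: the conditional independence $H_t^i\perp (X_t,H_t^{-i})\mid\Comp_t^i$ implied by USI is precisely what makes $\tilde{P}_t^{i,j,g^{-\{i,j\}}}$ insensitive to the extra conditioning on $(U_t^i,U_t^j)$.
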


\begin{proof}
	Fixing $g^{-i}$, $H_t^i$ is a controlled Markov Chain controlled by $U_t^i$ and player $i$ faces a Markov Decision Problem. By Lemma \ref{lem:usiisinfostate}, $\Comp_t^i$ is an information state (as defined in \ref{def:app:infostate}) of this MDP. Therefore, by {the Policy Equivalence Lemma (Lemma \ref{lem:policyeval})} we have
	\begin{equation}\label{prgniqt}
		\Pr^{g^{i}, g^{-i}}(\comp_t^i) = \Pr^{\rho^{i}, g^{-i}}(\comp_t^i).
	\end{equation}
	
	Furthermore, {from the definition of USI we have}
	\begin{align}
		\Pr^{g^{i}, g^{-i}}(h_t^j|\comp_t^i) &= \sum_{\tilde{x}_t, \tilde{h}_t^{-\{i, j\} }}\Phi_t^{i,g^{-i}}(\tilde{x}_t, (h_t^j, h_t^{-\{i, j\}})|\comp_t^i)\\
		&=: F_t^{i,j, g^{-i}}(h_t^j|\comp_t^i).
    \end{align}
    
    {Using Bayes Rule, we then have}
    \begin{align}
		\Pr^{g^{i}, g^{-i}}(\comp_t^i|h_t^j) &= \dfrac{\Pr^{g^{i}, g^{-i}}(h_t^j|\comp_t^i)\Pr^{g^{i}, g^{-i}}(\comp_t^i) }{\sum_{\tilde{\comp}_t^i} \Pr^{g^{i}, g^{-i}}(h_t^j|\tilde{\comp}_t^i)\Pr^{g^{i}, g^{-i}}(\tilde{\comp}_t^i) }\\
		&=\dfrac{F_t^{i, j, g^{-i}}(h_t^j|\comp_t^i)  \Pr^{g^{i}, g^{-i}}(\comp_t^i)}{\sum_{\tilde{\comp}_t^i} F_t^{i, j, g^{-i}}(h_t^j|\tilde{\comp}_t^i)  \Pr^{g^{i}, g^{-i}}(\tilde{\comp}_t^i)}.\label{prgniqtbayes}
	\end{align}
	
	{Note that \eqref{prgniqtbayes} applies for all strategies $g^i$. Replacing $g^i$ with $\rho^i$ we have}
    \begin{align}
		\Pr^{\rho^{i}, g^{-i}}(\comp_t^i|h_t^j) &= \dfrac{F_t^{i, j, g^{-i}}(h_t^j|\comp_t^i)  \Pr^{\rho^{i}, g^{-i}}(\comp_t^i)}{\sum_{\tilde{\comp}_t^i} F_t^{i, j, g^{-i}}(h_t^j|\tilde{\comp}_t^i)  \Pr^{\rho^{i}, g^{-i}}(\tilde{\comp}_t^i)}.\label{prgniqtbayesrho}
	\end{align}
	
    Combining \eqref{prgniqt}, \eqref{prgniqtbayes}, and \eqref{prgniqtbayesrho} we conclude that
	\begin{align}
		\Pr^{g^{i}, g^{-i}}(\comp_t^i|h_t^j) = \Pr^{\rho^{i}, g^{-i}}(\comp_t^i|h_t^j).\label{eq:initdistalsook}
	\end{align}
	
	Using \eqref{florida}, Lemma \ref{lem:htihtjcondindepqti}, and Lemma \ref{lem:usisatisoldconds} we have
	\begin{align}
		&\quad~\Pr^g(\tilde{h}_{t+1}^j|h_t^j, u_t^j) \\
		&= \sum_{\tilde{h}_t^i: \Pr^{g}(\tilde{h}_t^i, h_t^j) > 0}\sum_{\tilde{u}_t^i} \Pr^g(\tilde{h}_{t+1}^j|\tilde{h}_t^i, h_t^j, \tilde{u}_t^i, u_t^j)\Pr^g(\tilde{u}_t^i|\tilde{h}_t^i, h_t^j, u_t^j)\Pr^g(\tilde{h}_t^i|h_t^j, u_t^j)\\
		&= \sum_{\tilde{h}_t^i, \tilde{u}_t^i} \Pi_t^{j, i, g^{-\{i, j\}}} (\tilde{h}_{t+1}^j|\tilde{\comp}_t^i, h_t^j, \tilde{u}_t^i, u_t^j) g_t^i(\tilde{u}_t^i|\tilde{h}_t^i) \Pr^g(\tilde{h}_t^i|h_t^j)\label{eq:whateverlabelthismeans} \\
		&=\sum_{\tilde{h}_t^i, \tilde{u}_t^i} \Pi_t^{j, i, g^{-\{i, j\}}} (\tilde{h}_{t+1}^j|\tilde{\comp}_t^i, h_t^j, \tilde{u}_t^i, u_t^j) g_t^i(\tilde{u}_t^i|\tilde{h}_t^i) \Pr^{g}(\tilde{h}_t^i|\tilde{\comp}_t^i)\Pr^g(\tilde{\comp}_t^i|h_t^j)\label{eq:algreenhowcanyouamendabrokenheart} \\
		&=\sum_{\tilde{\comp}_t^i, \tilde{u}_t^i} \Pi_t^{j, i, g^{-\{i, j\}}} (\tilde{h}_{t+1}^j|\tilde{\comp}_t^i, h_t^j, 	\tilde{u}_t^i, u_t^j) \left(\sum_{\hat{h}_t^i} g_t^i(\tilde{u}_t^i|\hat{h}_t^i) \Pr^{g}(\hat{h}_t^i|\tilde{\comp}_t^i)\right) \Pr^g(\tilde{\comp}_t^i|h_t^j)\\
		&=\sum_{\tilde{\comp}_t^i, \tilde{u}_t^i} \Pi_t^{j, i, g^{-\{i, j\}}} (\tilde{h}_{t+1}^j|\tilde{\comp}_t^i, h_t^j, 	\tilde{u}_t^i, u_t^j) \rho_t^i(\tilde{u}_t^i|\tilde{\comp}_t^i) \Pr^g(\tilde{\comp}_t^i|h_t^j),\label{georgia}
	\end{align}
    {where in \eqref{eq:whateverlabelthismeans} we utilized Lemma \ref{lem:usisatisoldconds} and the function $\Pi_t^{j, i, g^{-\{i, j\}}}$ defined in it. In \eqref{eq:algreenhowcanyouamendabrokenheart} we applied Lemma \ref{lem:htihtjcondindepqti}. In the last equation we used \eqref{florida} and the definition of USI. }
	
	Following a similar argument, we can show that
	\begin{align}
		&\quad~\Pr^{\rho^i, g^{-i}}(\tilde{h}_{t+1}^j|h_t^j, u_t^j) \\
		&=\sum_{\tilde{\comp}_t^i, \tilde{u}_t^i} \Pi_t^{j, i, g^{-\{i, j\}}} (\tilde{h}_{t+1}^j|\tilde{\comp}_t^i, h_t^j, 	\tilde{u}_t^i, u_t^j) \rho_t^i(\tilde{u}_t^i|\tilde{\comp}_t^i) \Pr^{\rho^i, g^{-i}}(\tilde{\comp}_t^i|h_t^j).\label{hawaii}
	\end{align}
	
	{Using \eqref{eq:initdistalsook} and comparing \eqref{georgia} with \eqref{hawaii},} we conclude that
	\begin{align}
		\Pr^g(\tilde{h}_{t+1}^j|h_t^j, u_t^j) = \Pr^{\rho^i, g^{-i}}(\tilde{h}_{t+1}^j|h_t^j, u_t^j),
	\end{align}
    {proving statement (1) of the Lemma.}
	
	Following an analogous argument, we can show that
	\begin{align}
		\E^g[R_{t}^j|h_t^j, u_t^j] &=\sum_{\tilde{\comp}_t^i, \tilde{u}_t^i} r_t^{i, j, g^{-\{i, j\}}} (\tilde{\comp}_t^i, h_t^j, 	\tilde{u}_t^i, u_t^j) \rho_t^i(\tilde{u}_t^i|\tilde{\comp}_t^i) \Pr^g(\tilde{\comp}_t^i|h_t^j)\\
		\E^{\rho^i, g^{-i}}[R_{t}^j|h_t^j, u_t^j] &=\sum_{\tilde{\comp}_t^i, \tilde{u}_t^i} r_t^{i, j, g^{-\{i, j\}}} (\tilde{\comp}_t^i, h_t^j, 	\tilde{u}_t^i, u_t^j) \rho_t^i(\tilde{u}_t^i|\tilde{\comp}_t^i) \Pr^{\rho^i, g^{-i}}(\tilde{\comp}_t^i|h_t^j),
	\end{align}
	where $r_t^{i, j, g^{-\{i, j\}}}$ is defined in Lemma \ref{lem:usisatisoldconds}.
	{We similarly conclude that} 
	\begin{align}
		\E^g[R_{t}^j|h_t^j, u_t^j] &= \E^{\rho^i, g^{-i}}[R_{t}^j|h_t^j, u_t^j],
	\end{align}
	{proving statement (2) of the Lemma.}
\end{proof}

\begin{lemma}\label{lem:selfKfunc}
	{Suppose that $\Comp^i$ is unilaterally sufficient information for player $i$. Let $g^{-i}$ be a fully mixed behavioral strategy profile for players other than $i$.} Define $\Qfunc_\tau^i$ through
	\begin{equation}
		\Qfunc_\tau^i(h_\tau^i, u_\tau^i) = \E^{g^{-i}}[R_\tau^i|h_\tau^i, u_\tau^i] + \underset{\tilde{g}_{\tau+1:T}^i}{\max}~\E^{\tilde{g}_{\tau+1:T}^i, g^{-i}}\left[\sum_{t=\tau+1}^{T} R_t^i\Big|h_\tau^i, u_\tau^i\right].
	\end{equation}
	Then there exist a function $\hat{\Qfunc}_\tau^{i}: \Compset_\tau^i\times \mathcal{U}_\tau^i\mapsto [-T, T]$ such that
	\begin{equation}
		\Qfunc_\tau^i(h_\tau^i, u_\tau^i) = \hat{\Qfunc}_\tau^{i}(\comp_\tau^i, u_\tau^i).
	\end{equation}
\end{lemma}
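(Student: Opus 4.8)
The plan is to prove Lemma~\ref{lem:selfKfunc} by a backward induction on $\tau$, showing that at each time $\tau$ the quantity $\Qfunc_\tau^i(h_\tau^i,u_\tau^i)$ depends on $h_\tau^i$ only through $\comp_\tau^i$. The key structural fact enabling this is Lemma~\ref{lem:usiisinfostate}: since $\Comp^i$ is unilaterally sufficient information, $\Comp^i$ is an information state under $g^{-i}$ for the payoffs of $\mathcal{I}$ (in particular for player $i$'s own payoff), so there exist functions $P_t^{i,g^{-i}}\colon\Compset_t^i\times\mathcal{U}_t^i\mapsto\Delta(\Compset_{t+1}^i)$ and $r_t^{i,g^{-i}}\colon\Compset_t^i\times\mathcal{U}_t^i\mapsto[-1,1]$ with $\Pr^{g^i,g^{-i}}(\comp_{t+1}^i|h_t^i,u_t^i)=P_t^{i,g^{-i}}(\comp_{t+1}^i|\comp_t^i,u_t^i)$ and $\E^{g^i,g^{-i}}[R_t^i|h_t^i,u_t^i]=r_t^{i,g^{-i}}(\comp_t^i,u_t^i)$, for all $g^i$ and all admissible $(h_t^i,u_t^i)$.

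First I would observe that, fixing $g^{-i}$, player $i$'s best-response problem is a Markov decision process with state $H_t^i$, action $U_t^i$, transition kernel $\Pr^{g^{-i}}(h_{t+1}^i|h_t^i,u_t^i)$ and reward $\E^{g^{-i}}[R_t^i|h_t^i,u_t^i]$; moreover $\Comp_t^i=\Psi_t(H_t^i)$ is an information state for this MDP in the sense of Definition~\ref{def:app:infostate}. Then I would invoke Lemma~\ref{lemma:twoMDPequiv}: the original MDP and the compressed MDP $(\nu_1^\Comp,P^{\Comp},r^{\Comp})$ (with $P^\Comp=(P_t^{i,g^{-i}})_t$, $r^\Comp=(r_t^{i,g^{-i}})_t$) have state-action value functions related by $\Qfunc_t(h_t^i,u_t^i;P,r)=\Qfunc_t(\Psi_t(h_t^i),u_t^i;P^\Comp,r^\Comp)$. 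But the function $\Qfunc_\tau^i$ defined in the lemma statement is exactly the optimal state-action value function of this MDP at time $\tau$: indeed $\Qfunc_\tau^i(h_\tau^i,u_\tau^i)=\E^{g^{-i}}[R_\tau^i|h_\tau^i,u_\tau^i]+\max_{\tilde g_{\tau+1:T}^i}\E^{\tilde g_{\tau+1:T}^i,g^{-i}}[\sum_{t=\tau+1}^T R_t^i|h_\tau^i,u_\tau^i]$ is precisely $r_\tau(h_\tau^i,u_\tau^i)+\sum_{\tilde h_{\tau+1}^i}V_{\tau+1}(\tilde h_{\tau+1}^i)\Pr^{g^{-i}}(\tilde h_{\tau+1}^i|h_\tau^i,u_\tau^i)=\Qfunc_\tau(h_\tau^i,u_\tau^i;P,r)$. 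Therefore I would set $\hat\Qfunc_\tau^i(\comp_\tau^i,u_\tau^i):=\Qfunc_\tau(\comp_\tau^i,u_\tau^i;P^\Comp,r^\Comp)$ and conclude $\Qfunc_\tau^i(h_\tau^i,u_\tau^i)=\hat\Qfunc_\tau^i(\comp_\tau^i,u_\tau^i)$, with the range bound $[-T,T]$ following from $r_t^{i,g^{-i}}\in[-1,1]$ and there being at most $T$ terms.

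Alternatively, if one prefers to avoid appealing directly to Lemma~\ref{lemma:twoMDPequiv}, the same conclusion follows from an explicit backward induction identical in form to the one carried out in Step~3 of the proof of Theorem~\ref{thm:SEdefequiv} (equations establishing $\hat{\Qfunc}^{(n)}$ satisfies a recurrence depending only on $\comp_t^i$) and in the Proof of Claim in Appendix~\ref{app:thm:msiseexist}: at $\tau=T$, $\Qfunc_T^i(h_T^i,u_T^i)=\E^{g^{-i}}[R_T^i|h_T^i,u_T^i]=r_T^{i,g^{-i}}(\comp_T^i,u_T^i)=:\hat\Qfunc_T^i(\comp_T^i,u_T^i)$; and for the inductive step, writing $\hat V_{t+1}^i(\comp_{t+1}^i):=\max_{u_{t+1}^i}\hat\Qfunc_{t+1}^i(\comp_{t+1}^i,u_{t+1}^i)$, Bellman's equation gives $\Qfunc_t^i(h_t^i,u_t^i)=\E^{g^{-i}}[R_t^i|h_t^i,u_t^i]+\sum_{\tilde h_{t+1}^i}\hat V_{t+1}^i(\tilde\comp_{t+1}^i)\Pr^{g^{-i}}(\tilde h_{t+1}^i|h_t^i,u_t^i)=r_t^{i,g^{-i}}(\comp_t^i,u_t^i)+\sum_{\tilde\comp_{t+1}^i}\hat V_{t+1}^i(\tilde\comp_{t+1}^i)P_t^{i,g^{-i}}(\tilde\comp_{t+1}^i|\comp_t^i,u_t^i)=:\hat\Qfunc_t^i(\comp_t^i,u_t^i)$, where the crucial collapse from a sum over $\tilde h_{t+1}^i$ to a sum over $\tilde\comp_{t+1}^i$ uses that $\hat V_{t+1}^i$ depends only on $\tilde\comp_{t+1}^i$ together with the information-state transition property from Lemma~\ref{lem:usiisinfostate}. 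I expect the main (though modest) obstacle to be bookkeeping around admissibility: one must check that for every admissible $(h_t^i,u_t^i)$ the relevant conditional expectations are well-defined and that the Bellman recursion for the optimal value of player $i$'s MDP is valid under $g^{-i}$ — this is standard finite-horizon MDP theory, and the fully-mixed hypothesis on $g^{-i}$ (which makes every $h_t^i$ admissible) removes any edge cases.
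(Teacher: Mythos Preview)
Your proposal is correct and follows essentially the same approach as the paper: invoke Lemma~\ref{lem:usiisinfostate} to establish that $\Comp^i$ is an information state for player $i$'s MDP under $g^{-i}$, then apply Lemma~\ref{lemma:twoMDPequiv} to conclude that the optimal state-action value function depends on $h_\tau^i$ only through $\comp_\tau^i$. Your write-up is in fact more detailed than the paper's (which is a two-sentence appeal to Lemmas~\ref{lem:usiisinfostate} and~\ref{lemma:twoMDPequiv}), and your alternative explicit backward induction is a valid unpacking of the same argument.
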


\begin{proof}
	{By Lemma \ref{lem:usiisinfostate}, $\Comp^i$ is an information state for the payoff of player $i$ under $g^{-i}$. Fixing $g^{-i}$, $H_t^i$ is a controlled Markov Chain controlled by $U_t^i$. Through Definition \ref{def:app:infostate}, $\Comp_t^i$ is an information state of this controlled Markov Chain. The Lemma then follows from a direct application of Lemma \ref{lemma:twoMDPequiv}.} 
\end{proof}

\begin{lemma}\label{lem:usisereplace}
	Suppose that $\Comp^i$ is unilaterally sufficient information for player $i$. Let $g$ be (the strategy part of) a sequential equilibrium. Then there exist a $\Comp^i$-based strategy $\rho^i$ such that $(\rho^i, g^{-i})$ is (the strategy part of) a sequential equilibrium with the same expected payoff profile as $g$. 
\end{lemma}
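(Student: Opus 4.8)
The goal is to prove Lemma~\ref{lem:usisereplace}: given a sequential equilibrium $g$, produce a $\Comp^i$-based strategy $\rho^i$ such that $(\rho^i, g^{-i})$ is again (the strategy part of) a sequential equilibrium with the same payoff profile. I would work with the ``model-free version 2'' characterization of SE (Definition~\ref{def:KSE2}), since it packages sequential rationality through the conjectured optimal reward-to-go functions $\Qfunc$, and the consistency condition (2') refers only to $g^{(n),-i}$ for player $i$'s own conjecture --- precisely the quantities Lemma~\ref{lem:selfKfunc} and Lemma~\ref{lem:prequiv} are built to control. Let $(g^{(n)})$ be the fully mixed sequence witnessing full consistency of $g$, with the associated conjecture profiles, and $(g^{(n)}, \Qfunc^{(n)}) \to (g, \Qfunc)$.

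First I would define, for each $n$, the $\Comp^i$-based strategy $\rho^{(n),i}$ by the averaging formula \eqref{florida} applied to $g^{(n),i}$ (with $F_t^{i,g^{(n),i}}$ from Definition~\ref{def:usi}), and let $\rho^i$ be $\rho^{(\infty),i}$, a limit of a subsequence; by continuity of \eqref{florida} in $g^i$ we get $\rho^{(n),i}\to\rho^i$, and by Lemma~\ref{lem:usiclaim} the payoff profile is preserved, $J^j(\rho^i,g^{-i}) = J^j(g)$ for all $j$. I also need the $\rho^{(n),i}$ to remain fully mixed in the appropriate sense so that $(\rho^{(n),i}, g^{(n),-i})$ can serve as the witnessing sequence for $(\rho^i, g^{-i})$; since $g^{(n),i}$ is fully mixed, the average \eqref{florida} is a strictly mixed $\Comp^i$-based strategy, and the induced full-history strategy (constant on $\Comp^i$-fibers) is fully mixed at every information set of player~$i$ that is reachable --- I would need to check the off-$\Comp^i$-path information sets are handled by the usual convention, or add a vanishing uniform perturbation $\epsilon_n$ to $\rho^{(n),i}$ as in the proof of Theorem~\ref{thm:msiexist}.

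Next, for \emph{other} players $j\neq i$: I need $(\rho^i, g^{-i})$ to satisfy sequential rationality with respect to some fully consistent conjecture. Here Lemma~\ref{lem:prequiv} is the workhorse --- it says that switching from $g^i$ to $\rho^i$ leaves the transition kernels $\Pr(\tilde h_{t+1}^j|h_t^j, u_t^j)$ and reward functions $\E[R_t^j|h_t^j,u_t^j]$ of player $j$'s decision problem \emph{unchanged}, under a fully mixed $g$. Applying this at each $n$, player $j$'s model-free conjecture $\hat\Qfunc^{(n),j}$ built from $g^{(n),-j}$ (via condition (2') of Definition~\ref{def:KSE2}) equals the one built from $(\rho^{(n),i}, g^{(n),-\{i,j\}})$; hence in the limit player $j$'s conjectured $\Qfunc^j$ is the same, and since $g^j = \bar g^j$ was sequentially rational to it, it remains so. For player $i$ herself: by Lemma~\ref{lem:selfKfunc} the conjecture $\Qfunc_\tau^i(h_\tau^i,u_\tau^i)$ (built from $g^{(n),-i}$) is a function of $\comp_\tau^i$ alone, say $\hat\Qfunc_\tau^i(\comp_\tau^i, u_\tau^i)$; since the original $g^i$ is sequentially rational to $\Qfunc^i$ and $\rho^i$ is by construction a mixture over histories with the same $\comp^i$, the support of $\rho_t^i(\comp_t^i)$ lies in $\arg\max_{u_t^i}\hat\Qfunc_t^i(\comp_t^i,u_t^i)$ as well --- here I would use that $g_t^i(h_t^i)$ is supported on the maximizers for \emph{every} $h_t^i$ with that $\comp_t^i$, so their $F_t^{i,g^i}$-average is too. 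Finally assemble: the conjecture profile for $(\rho^i, g^{-i})$ (player $i$'s $\Qfunc^i$, unchanged; each $j$'s $\Qfunc^j$, unchanged) is fully consistent via the sequence $(\rho^{(n),i}, g^{(n),-i})$ and its induced conjectures, and sequential rationality holds for everyone; so $(\rho^i, g^{-i})$ is a model-free-v2 SE, hence a classical SE by Theorem~\ref{thm:SEdefequiv}, with payoff profile equal to that of $g$.

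\textbf{Main obstacle.} The delicate point is the limiting argument that ties $\rho^{(n),i}$ to $\rho^i$ \emph{consistently with the conjectures} --- i.e. verifying that $(\rho^{(n),i}, g^{(n),-i})$ together with the conjectures obtained from it genuinely converge to $(\rho^i, g^{-i}, \Qfunc)$, and that the conjectures so obtained coincide in the limit with those inherited from $g$. This requires Lemma~\ref{lem:prequiv} (which assumes $g$ fully mixed, satisfied along the sequence) to be applied at each finite $n$ and then passed to the limit via continuity/Berge, and care that the equalities $\Pr^{g^{(n)}}(\cdots) = \Pr^{\rho^{(n),i}, g^{(n),-i}}(\cdots)$ hold on the relevant (reachable) histories before taking $n\to\infty$. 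A secondary subtlety is ensuring $\rho^{(n),i}$ (and its full-history extension) is fully mixed wherever needed; if the naive extension fails to be fully mixed off the $\Comp^i$-support, I would patch it with a vanishing uniform mixture, exactly as in the proof of Theorem~\ref{thm:msiexist}, and check the arg-max/support condition survives the patch in the limit.
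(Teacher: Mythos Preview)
Your proposal is correct and follows essentially the same route as the paper's proof: work in the Definition~\ref{def:KSE2} formulation, build $\rho^{(n),i}$ from $g^{(n),i}$ via the averaging formula~\eqref{florida}, pass to a subsequential limit $\rho^i$, use Lemma~\ref{lem:prequiv} at each $n$ to preserve every $j\neq i$'s conjectures $\Qfunc^{(n),j}$, and use Lemma~\ref{lem:selfKfunc} to show $\Qfunc^i$ factors through $\comp_t^i$ so that the averaged strategy inherits sequential rationality. Two minor points: your worry about full-mixedness is unnecessary---since each $g_t^{(n),i}(\cdot|\tilde h_t^i)$ is strictly mixed and $F_t^{i,g^{(n),i}}(\cdot|\comp_t^i)$ is a probability distribution, the average $\rho_t^{(n),i}(\cdot|\comp_t^i)$ is automatically strictly mixed at every $h_t^i$---and the paper obtains the payoff equality by applying Lemma~\ref{lem:usiclaim} along the sequence and taking limits (rather than directly to $g$), since $\rho^i$ is defined as a subsequential limit rather than as the formula applied to $g^i$.
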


\begin{proof}[Proof of Lemma \ref{lem:usisereplace}]
	  {Recall that in Theorem \ref{thm:SEdefequiv} we established the equivalence of a variety of definitions of Sequential Equilibrium for strategy profiles.} Let $(g, \Qfunc)$ be a sequential equilibrium under Definition \ref{def:KSE2}. Let $(g^{(n)}, \Qfunc^{(n)})$ be a sequence of strategy and conjecture profiles that satisfies conditions (1)(2')(3) of Definition \ref{def:KSE2}.
	
	Set $\rho^{(n), i}$ through
	\begin{equation}\label{rhotheultimateequivalentstrategy2}
		\rho_t^{(n), i}(u_t^i|\comp_t^i) = \sum_{\tilde{h}_t^i}g_t^{(n), i}(u_t^i|\tilde{h}_t^i)F_t^{i, g^{(n), i}}(\tilde{h}_t^i|\comp_t^i),
	\end{equation}
	where $F_t^{i, g^{(n), i}}$ is defined in Definition \ref{def:usi}.
	By replacing the sequence with one of its sub-sequences, without loss of generality, assume that $\rho^{(n), i} \rightarrow \rho^i$ for some $\rho^i$.
	
	  {For the ease of notation,} denote $\bar{g}^{(n)} = (\rho^{(n), i}, g^{(n), -i})$ and $\bar{g} = (\rho^i, g^{-i})$. We have $\bar{g}^{(n)}\rightarrow \bar{g}$. 
	  {In the rest of the proof, we will} show that $(\bar{g}, \Qfunc)$ is a sequential equilibrium. 
 
	We only need to show that $\bar{g}$ is sequentially rational to $\Qfunc$ and $(\bar{g}^{(n)}, \Qfunc^{(n)})$ satisfies conditions (2') of Definition \ref{def:KSE2},
	as conditions (1)(3) of Definition \ref{def:KSE2} are {true by construction}. Since $\bar{g}^{-i} = g^{-i}$, we automatically have $\bar{g}^{j}$ to be sequentially rational given $\Qfunc^{j}$ for all $j\in\mathcal{I}\backslash\{i\}$, and $\Qfunc^{(n), i}$ to be consistent with $\bar{g}^{(n), -i}$ for each $n$. It suffices to establish 
	\begin{enumerate}[(i)]
		\item $\rho^i$ is sequentially rational with respect to $\Qfunc^{i}$; and 
		\item $\Qfunc^{(n), j}$ is consistent with $\bar{g}^{(n), -j}$ for each $j\in\mathcal{I}\backslash\{i\}$.
	\end{enumerate}
	
	  To establish (i), we will use the Lemma \ref{lem:selfKfunc} to show that $\Qfunc_t^i(h_t^i, u_t^i)$ is a function of $(\comp_t^i, u_t^i)$, and hence one can use an $\comp_t^i$ based strategy to optimize $\Qfunc_t^i$.
	
	\textbf{Proof of (i):} By construction,
	\begin{equation}
		\rho_t^{(n), i}(\comp_t^i) = \sum_{\tilde{h}_t^i: \tilde{\comp}_t^i = \comp_t^i} g_t^{(n), i}(\tilde{h}_t^i) \cdot\eta_t^{(n)} (\tilde{h}_t^i|\comp_t^i),
	\end{equation}
	for some distribution $\eta_t^{(n)}(\comp_t^i) \in \Delta(\mathcal{H}_t^i)$. 
    Let $\eta_t(\comp_t^i)$ be an accumulation point of the sequence $[\eta_t^{(n)}(\comp_t^i)]_{n=1}^\infty$. We have
	\begin{equation}
		\rho_t^i(\comp_t^i) = \sum_{\tilde{h}_t^i: \tilde{\comp}_t^i = \comp_t^i} g_t^{i}(\tilde{h}_t^i)\cdot \eta_t (\tilde{h}_t^i|\comp_t^i).
	\end{equation}

	As a result, we have
	\begin{equation}\label{colorado}
		\mathrm{supp}(\rho_t^{i}(\comp_t^i)) \subseteq \bigcup_{\tilde{h}_t^i: \tilde{\comp}_t^i = \comp_t^i} \mathrm{supp}(g_t^{i}(\tilde{h}_t^i)).
	\end{equation}

    By Lemma \ref{lem:selfKfunc} we have $\Qfunc_t^{(n), i}(h_t^i, u_t^i) = \hat{\Qfunc}_t^{(n), i}(\comp_t^i, u_t^i)$ for some function $\hat{\Qfunc}_t^{(n), i}$. Since $\Qfunc^{(n), i} \rightarrow \Qfunc^i$, we have $\Qfunc_t^i(h_t^i, u_t^i) = \hat{\Qfunc}_t^{i}(\comp_t^i, u_t^i)$ for some function $\hat{\Qfunc}^{i}$. By sequential rationality we have
	\begin{equation}\label{arizona}
		\mathrm{supp}(g_t^{i}(\tilde{h}_t^i)) \subseteq \underset{u_t^i}{\arg\max}~ \hat{\Qfunc}_t^{i}(\comp_t^i, u_t^i),
	\end{equation}
	for all $\tilde{h}_t^i$ whose corresponding compression $\tilde{\comp}_t^i$ satisfies $\tilde{\comp}_t^i = \comp_t^i$. Therefore, by \eqref{colorado} and \eqref{arizona} we conclude that
	\begin{equation}
		\mathrm{supp}(\rho_t^{i}(\comp_t^i)) \subseteq \underset{u_t^i}{\arg\max}~ \hat{\Qfunc}_t^{i}(\comp_t^i, u_t^i),
	\end{equation}
	establishing sequential rationality of $\rho^i$ with respect to $\Qfunc^{i}$.\\~
	
	To establish (ii), we will use the Lemmas \ref{lem:usiclaim} and \ref{lem:prequiv} to show that when player $i$ switches their strategy from $g^{(n), i}$ to $\rho^{(n), i}$, other players face the same control problem at every information set. As a result, their $\Qfunc^{(n), j}$ functions stays the same.
	
	\textbf{Proof of (ii):} Consider player $j\neq i$. Through standard control theory, we know that a collection of functions $\tilde{\Qfunc}^{j}$ is consistent (in the sense of condition (2') of Definition \ref{def:KSE2}) with a fully mixed strategy profile $\tilde{g}^{-j}$ if and only if it satisfies the following equations:
    \begin{subequations}
	\begin{align}
		\tilde{\Qfunc}_T^{j}(h_T^j, u_T^j) &= \E^{\tilde{g}^{-j}}[R_T^j|h_T^j, u_T^j],\\
		\tilde{V}_t^{j}(h_t^j) &= \max_{\tilde{u}_t^j} \tilde{\Qfunc}_t^{j}(h_t^j, \tilde{u}_t^j), \qquad\forall t\in\mathcal{T},\\
		\tilde{\Qfunc}_{t}^{j}(h_t^j, u_t^j) &= \E^{\tilde{g}^{-j}}[R_t^j|h_t^j, u_t^j] + \sum_{\tilde{h}_{t+1}^j} \tilde{V}_{t+1}^{ j}(\tilde{h}_{t+1}^j)\Pr^{\tilde{g}^{-j}}(\tilde{h}_{t+1}^j|h_t^j, u_t^j),\qquad\forall t\in\mathcal{T}\backslash\{T\}.        
	\end{align}
    \end{subequations}
	
	By Lemma \ref{lem:prequiv}, we have
	\begin{align}
		\Pr^{g^{(n), -j}}(\tilde{h}_{t+1}^j|h_t^j, u_t^j) &= \Pr^{\rho^{(n), i}, g^{(n), -\{i, j\}}}(\tilde{h}_{t+1}^j|h_t^j, u_t^j),\\
		\E^{g^{(n), -j}}[R_t^j|h_t^j, u_t^j] &= \E^{\rho^{(n), i}, g^{(n), -\{i, j\}}}[R_t^j|h_t^j, u_t^j],
	\end{align}
	and hence we conclude that $\Qfunc^{(n), j}$ is also consistent with $\bar{g}^{(n), -j} = (\rho^{(n), i}, g^{(n), -\{i, j\}})$. \\~
	
	{Now we have shown that $(\bar{g}, \Qfunc)$ forms a sequential equilibrium. The second half of the Lemma, which states that $\bar{g}$ yields the same expected payoff as $g$, can be shown with the following argument:}
	By Lemma \ref{lem:usiclaim}, $\bar{g}^{(n)}$ yields the same expected payoff profile as $g^{(n)}$. Since the expected payoff of each player is a continuous function of the behavioral strategy profile, we conclude that $\bar{g}$ yields the same expected payoff as $g$.
\end{proof}

{Finally, we conclude Theorem \ref{thm:usiseequiv} from Lemma \ref{lem:usisereplace}.}

\begin{proof}[Proof of Theorem \ref{thm:usiseequiv}]
	Given any SE strategy profile $g$, applying Lemma \ref{lem:usisereplace} iteratively for each $i\in\mathcal{I}$, we obtain a $\Comp$-based SE strategy profile $\rho$ with the same expected payoff profile as $g$. Therefore the set of $\Comp$-based SE payoffs is the same as that of all SE.
\end{proof}

\section{Proofs for Section \ref{sec:siapplications} and Section \ref{sec:openproblems}}\label{app:siapplications}
\subsection{Proof of Proposition \ref{prop:exwpbe}}\label{app:prop:exwpbe}
\begin{prop}[Proposition \ref{prop:exwpbe}, restated]
		In the game defined in Example \ref{ex:wpbeproblem}, the set of $\Comp$-based wPBE payoffs is a proper subset of that of all wPBE payoffs.
	\end{prop}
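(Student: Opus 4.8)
The plan is to establish the two directions of a strict inclusion. First, since $\Comp^A$ is USI for Alice and $\Comp^B = H^B$ is trivially USI for Bob, every $\Comp$-based wPBE is in particular a $\Comp$-based BNE, hence (by Theorem~\ref{thm:usiequiv}) a BNE, hence a wPBE; so the set of $\Comp$-based wPBE payoffs is contained in the set of all wPBE payoffs. The substance of the proposition is the strictness: I must exhibit a specific payoff profile that is achieved by some wPBE but by no $\Comp$-based wPBE.

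The candidate equilibrium to analyze is one sustained by an off-path ``threat'' of Alice. Intuitively, Bob would like to play $U_1^B=-1$ (gaining $0.2$) but this costs Alice $1$ in stage~1. Alice wants to deter this by threatening, after observing $U_1^B=-1$, to play a stage-2 action that hurts Bob --- here Alice's stage-2 payoff depends only on whether $U_2^A=X_2$ or $U_2^A=0$, so all of Alice's stage-2 actions that are $0$ or match $X_2$ are payoff-equivalent for her, giving her enormous freedom off path. Concretely I would look for a wPBE in which Bob plays $U_1^B=+1$ on path, and Alice's off-path (i.e.\ after $U_1^B=-1$) behavior depends on $X_1^A$ in a way that makes $U_1^B=-1$ unprofitable for Bob: e.g.\ when $X_1^A=+1$ Alice plays one mixed stage-2 action, and when $X_1^A=-1$ she plays a different one, each justified by a suitable belief over Bob's type $X_1^B$ (which is unconstrained off path under wPBE). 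Because Bob's stage-2 payoff is $-1$ iff $U_2^A=U_2^B$, Alice correlating her stage-2 action with her payoff-irrelevant coin $X_1^A$ lets her put weight on \emph{all three} of Bob's possible matches across the two realizations, whereas a single mixed action cannot; this is exactly the mechanism described in the paragraph after Proposition~\ref{prop:exwpbe}. I would then compute Bob's best stage-2 response and his resulting stage-1 incentive, and verify that (a) with the $X_1^A$-dependent off-path play, $U_1^B=-1$ yields Bob strictly less than $U_1^B=+1$, so the on-path profile is a genuine wPBE, and (b) with \emph{any} $\Comp^A$-based Alice (whose stage-2 action cannot depend on $X_1^A$, only on $U_1^B$), Bob's deviation to $U_1^B=-1$ is strictly profitable, so no $\Comp$-based wPBE attains the on-path payoff. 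I should also check sequential rationality of Alice's off-path play is genuinely available only because the two information sets (one per realization of $X_1^A$, fixed $U_1^B=-1$) are both off path and hence may carry different beliefs.

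I expect the main obstacle to be the careful bookkeeping in step (b): showing that \emph{every} $\Comp$-based wPBE (not just the ``natural'' one) fails to reach the target payoff. The cleanest route is to observe that any $\Comp^A$-based strategy for Alice induces, for each fixed $U_1^B$, a single mixed action $\alpha \in \Delta(\{-1,0,+1\})$ used regardless of $X_1^A$; against this, Bob's worst-case stage-2 loss after $U_1^B=-1$ is $\min_{u_2^B}\Pr(U_2^A = u_2^B\mid U_1^B=-1)$, which over $\alpha\in\Delta(\{-1,0,+1\})$ is at most $1/3$, so the total penalty Alice can inflict is bounded, while Bob's stage-1 gain from $U_1^B=-1$ is $0.2$; a short arithmetic comparison then shows Bob cannot be deterred below the required threshold, so the target wPBE payoff is unattainable by $\Comp$-based wPBE. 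Combined with the inclusion from the first paragraph, this yields that the $\Comp$-based wPBE payoff set is a proper subset, completing the proof. I would relegate the explicit numerical verification of the two payoff inequalities to the detailed proof.
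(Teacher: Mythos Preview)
Your high-level plan is right, and the mechanism you describe (Alice using $X_1^A$ as a private coin to spread her off-path mix across all three of Bob's actions) is exactly the one at work. But the concrete argument you give for step (b) does not go through. You claim that for a single mixed action $\alpha\in\Delta(\{-1,0,+1\})$, Bob's stage-2 loss after $U_1^B=-1$ is at most $1/3$, and that ``a short arithmetic comparison'' with the $0.2$ stage-1 gain shows Bob cannot be deterred. The arithmetic actually goes the other way: if Alice were allowed to play $\alpha=(1/3,1/3,1/3)$ after $U_1^B=-1$ and, say, $U_2^A=0$ after $U_1^B=+1$, Bob's payoff from deviating to $U_1^B=-1$ would be $0.2-1/3=-2/15<0$, which \emph{does} deter him. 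So the $1/3$ bound alone cannot separate the $\Comp$-based case from the unrestricted case --- indeed it is the very inequality that makes the unrestricted wPBE work.

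The step you are missing is to use sequential rationality of Alice's $\Comp^A$-based stage-2 action. With $\Comp_2^A=U_1^B$, Alice has a single information set for each realization of $U_1^B$, and at that information set no belief over $X_2\in\{-1,+1\}$ can make Alice indifferent among all three actions: the action $0$ always yields $1$, while $+1$ yields $p$ and $-1$ yields $1-p$ under belief $p=\Pr(X_2=+1)$, so at most two of the three can be best responses. Hence in any $\Comp$-based wPBE, each of $\alpha$ and $\beta$ must put zero weight on at least one action. Bob can then always match with probability $0$ in stage 2, securing a stage-2 payoff of $0$; his total from $U_1^B=-1$ is $0.2>0$, strictly better than $U_1^B=+1$. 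This forces Bob to play $U_1^B=-1$ in any $\Comp$-based wPBE, giving Alice $R_1^A=-1$ and total payoff at most $0$, which rules out the target payoff. You actually articulate this mechanism in your narrative (two off-path information sets vs.\ one), but your formal argument for (b) does not invoke it; you need to replace the $1/3$ bound by the ``zero-entry'' observation.

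A minor point: the inclusion in your first paragraph is immediate --- a $\Comp$-based wPBE is by definition a wPBE, so there is nothing to prove there, and Theorem~\ref{thm:usiequiv} is not needed.
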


\begin{proof}
    Set $g_1^B$ to be the strategy of Bob where he always chooses $U_1^B=+1$, and $g_2^A: \mathcal{X}_1^A \times \mathcal{U}_1^B\mapsto \Delta(\mathcal{U}_2^A)$ is given by
	\begin{align*}
		g_2^A(x_1^A, u_1^B) = \begin{cases}
			0\text{ w.p. }1,&\text{if }u_1^B=+1;\\
			x_1^A\text{ w.p. }\frac{2}{3},~ 0 \text{ w.p. }\frac{1}{3},&\text{otherwise},
		\end{cases}
	\end{align*}
	and $g_2^B: \mathcal{X}_{1}^B\times \mathcal{U}_1^B \mapsto \Delta(\mathcal{U}_2^B)$ is the strategy of Bob where he always chooses $U_2^B=-1$ irrespective of $U_1^B$.
	
	The beliefs $\mu_1^B: \mathcal{X}_{1}^{B} \mapsto \Delta(\mathcal{X}_{1}^{A})$, $\mu_2^A: \mathcal{X}_1^A \times \mathcal{U}_1^B \mapsto \Delta(\mathcal{X}_{1}^B)$, and $\mu_2^B:\mathcal{X}_{1}^B \times \mathcal{U}_1^B \mapsto \Delta(\mathcal{X}_1^A)$ are given by
	\begin{align*}
		\mu_1^B(x_1^B) &= \text{the prior of }X_1^{A},\\
		\mu_2^A(x_1^A, u_1^B) &= \begin{cases}
			-1 \text{ w.p. }\frac{1}{2},\;+1 \text{ w.p. }\frac{1}{2},&\text{if }u_1^B=+1;\\
			x_1^A \text{ w.p. }1,&\text{otherwise},
		\end{cases}\\
		\mu_2^B(x_{1}^B, u_1^B) &= \text{the prior of }X_1^{A}.
	\end{align*}
	
	One can verify that $g$ is sequentially rational given $\mu$, and $\mu$ is ``preconsistent'' \citep{hendon1996one} with $g$, i.e. the beliefs can be updated with Bayes rule for consecutive information sets on and off-equilibrium paths. In particular, $(g, \mu)$ is a wPBE. (It can also be shown that $(g, \mu)$ satisfies Watson's {PBE definition \citep{watson2017general}}. However, $(g, \mu)$ is not a PBE in the sense of Fudenberg and Tirole \citep{fudenberg1991perfect}, since $\mu$ violates their ``no-signaling-what-you-don't-know'' condition.)
 
    We proceed to show that no $\Comp$-based wPBE can attain the payoff profile of $g$.
	
	Suppose that $\rho = (\rho^A, \rho^B)$ is a $\Comp$-based weak PBE strategy profile. First, observe that at $t=2$, Alice can only choose her actions based on $U_1^B$ according to the definition of $\Comp^A$-based strategies. Let $\alpha, \beta\in \Delta(\{-1, 0, 1\})$ be Alice's mixed action at time $t=2$ under $U_2^A=-1$ and $U_2^A=+1$ respectively under strategy $\rho^A$. With some abuse of notation, denote $\rho^A = (\alpha, \beta)$.
	There exists no belief system under which Alice is indifferent between all of her three actions at time $t=2$. Therefore, no strictly mixed action at $t=2$ would be sequentially rational. Therefore, sequential rationally of $\rho^A$ (with respect to some belief) implies that $\min\{\alpha(-1), \alpha(0), \alpha(+1) \} = \min\{\beta(-1), \beta(0), \beta(+1) \} = 0$.
	
	To respond to $\rho^A = (\alpha, \beta)$, Bob can always maximizes his stage 2 instantaneous reward to 0 by using a suitable response strategy. If Bob plays $-1$ at $t=1$, his best total payoff is given by $0.2$; if Bob plays $+1$ at $t=1$, his best total payoff is given by $0$. Hence Bob strictly prefers $-1$ to $+1$. Therefore, in any best response (in terms of total expected payoff) to Alice's strategy $\rho^A$, Bob plays $U_1^B = -1$ irrespective of his private type. Therefore, Alice has an instantaneous payoff of $-1$ at $t=1$ and a total payoff $\leq 0$ under $\rho$, proving that the payoff profile of $\rho$ is different from that of $g$.
\end{proof}

\subsection{Proof of Proposition \ref{thm:ouyangusi}}\label{app:thm:ouyangusi}
\begin{prop}[Proposition \ref{thm:ouyangusi}, restated]
	In the model of Example \ref{ex:ouyang}, $\Comp_t^i=(Y_{1:t-1}, U_{1:t-1}, X_t^i)$ is unilaterally sufficient information.
\end{prop}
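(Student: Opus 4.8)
The plan is to verify the two defining properties of USI from Definition~\ref{def:usi} directly for $\Comp_t^i=(Y_{1:t-1},U_{1:t-1},X_t^i)$. Since $H_t^i=(Y_{1:t-1},U_{1:t-1},X_{1:t}^i)$, the only part of $H_t^i$ not retained in $\Comp_t^i$ is the past local state trajectory $X_{1:t-1}^i$. So the first property — that $\Pr^g(h_t^i\mid\comp_t^i)=F_t^{i,g^i}(h_t^i\mid\comp_t^i)$ does not depend on $g^{-i}$ — amounts to showing that $\Pr^g(x_{1:t-1}^i\mid y_{1:t-1},u_{1:t-1},x_t^i)$ is independent of the strategies of other players. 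First I would write this conditional probability using Bayes' rule over the primitive randomness: the local dynamics $(X_{\tau+1}^i,Y_\tau^i)=f_\tau^i(X_\tau^i,U_\tau,W_\tau^i)$ depend on $U_\tau$ as a whole, but once we condition on the full action history $u_{1:t-1}$ the evolution of $(X_{1:t}^i,Y_{1:t-1}^i)$ is driven only by the independent primitives $(X_1^i,W_{1:t-1}^i)$. Hence $\Pr^g(x_{1:t}^i,y_{1:t-1}\mid u_{1:t-1})$ factors into a product of primitive-distribution terms that is strategy-independent, and dividing by the appropriate marginal gives the desired $F_t^{i,g^i}$ depending only on $g^i$ (through how $u_{1:t-1}$ correlates with $\comp^i$, i.e. through player $i$'s own past actions, which is permissible).

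For the second property I must exhibit $\Phi_t^{i,g^{-i}}\colon\Compset_t^i\mapsto\Delta(\mathcal X_t\times\mathcal H_t^{-i})$ with $\Pr^g(x_t,h_t\mid\comp_t^i)=F_t^{i,g^i}(h_t^i\mid\comp_t^i)\,\Phi_t^{i,g^{-i}}(x_t,h_t^{-i}\mid\comp_t^i)$, i.e. conditional independence of $H_t^i$ and $(X_t,H_t^{-i})$ given $\Comp_t^i$, with the second factor not depending on $g^i$. The key structural observation is that all of $\comp_t^i=(Y_{1:t-1},U_{1:t-1},X_t^i)$ is common information plus player $i$'s current local state, and that, conditioned on the common information $(Y_{1:t-1},U_{1:t-1})$, the collection $\{(X_1^j,W_{1:t-1}^j)\}_{j\in\mathcal I}$ — hence the collection $\{(X_t^j, L_t^j)\}_j$ where $L_t^j=X_{1:t}^j$ — are mutually independent across players, because the primitives are mutually independent and actions depend only on common information and own private state (this is the standard ``conditional independence'' lemma underlying CIB models; I would either cite an analogue in Appendix~\ref{app:aux:infostate} or prove it by an induction on $t$ using the update equations). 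Given this, conditioning further on $X_t^i$ leaves $(X_{1:t}^i)$ independent of $(X_{1:t}^{-i})=$ the data determining $H_t^{-i}$ and $X_t^{-i}$, and $X_t^{-i}$ together with the common information determines $X_t$. I would then define $\Phi_t^{i,g^{-i}}(x_t,h_t^{-i}\mid\comp_t^i)$ as the resulting conditional law, check it depends on $g$ only through $g^{-i}$ (again because $g^i$ enters only via $u_{1:t-1}$, which is being conditioned on), and handle the congruency caveat in the footnote of Definition~\ref{def:usi} by noting that $x_t$, $h_t^i$, $h_t^{-i}$ share the common components $(y_{1:t-1},u_{1:t-1})$ and the left side vanishes unless they agree there.

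The main obstacle I anticipate is establishing cleanly the cross-player conditional independence given the common information $(Y_{1:t-1},U_{1:t-1})$ under an \emph{arbitrary} behavioral strategy profile $g$ — the induction must carefully track that each player's action $U_\tau^j=g_\tau^j(H_\tau^j)$ with $H_\tau^j=(Y_{1:\tau-1},U_{1:\tau-1},X_{1:\tau}^j)$ depends on the common part and player $j$'s own private trajectory only, so that conditioning on the realized common information preserves the product structure of the private trajectories; randomization in $g_\tau^j$ needs a small extra argument (e.g. introduce independent uniform randomization seeds per player, or argue directly via the law of total probability). Once that lemma is in hand, both USI properties follow by marginalization, and verifying that $\Comp_t^i$ is sequentially updatable — $\Comp_{t+1}^i=(\,(Y_{1:t-1},Y_t),(U_{1:t-1},U_t),X_{t+1}^i\,)$ with $X_{t+1}^i$ obtained from $X_t^i,U_t,W_t^i$ — is immediate, completing the proof.
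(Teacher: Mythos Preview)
Your approach is essentially the paper's: the key step is precisely the conditional-independence lemma you flag as the ``main obstacle,'' which the paper states and proves by induction as Lemma~\ref{lem:condindepouyang} (namely, $\Pr^g(x_{1:t}\mid y_{1:t-1},u_{1:t-1})=\prod_{j}\xi_t^{g^j}(x_{1:t}^j\mid y_{1:t-1},u_{1:t-1})$, each factor depending only on $g^j$). Both USI conditions are then read off from this factorization exactly as you outline.

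One correction to your separate argument for the first property: the claim that $\Pr^g(x_{1:t}^i,y_{1:t-1}\mid u_{1:t-1})$ is strategy-independent is not correct. The observations $y_{1:t-1}^{-i}$ depend on $x_{1:\cdot}^{-i}$, and the conditioning event $\{U_{1:t-1}=u_{1:t-1}\}$ involves $u^{-i}$, which is generated from $(x_{1:\cdot}^{-i},y_{1:\cdot},u_{1:\cdot})$ via $g^{-i}$; this induces $g^{-i}$-dependence in the conditional law. The clean route---and what the paper does---is to derive the first property from the \emph{same} factorization lemma rather than a direct primitive argument: once $\Pr^g(x_{1:t}^i\mid h_t^0)=\xi_t^{g^i}(x_{1:t}^i\mid h_t^0)$ depends only on $g^i$, the ratio giving $\Pr^g(x_{1:t-1}^i\mid h_t^0,x_t^i)$ does too. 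So your key lemma should be invoked for both USI conditions, not just the second.
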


{We first prove Lemma \ref{lem:condindepouyang}, which establish the conditional independence of the state processes given the common information.}

\begin{lemma}\label{lem:condindepouyang}
	In the model of Example \ref{ex:ouyang}, there exists functions $(\xi_t^{g^i})_{g^i\in\mathcal{G}^i, i\in\mathcal
		I}, \xi_t^{g^i}: \mathcal{Y}_{1:t-1}\times \mathcal{U}_{1:t-1}\mapsto \Delta(\mathcal{X}_{1:t}^i)$ such that
	\begin{equation}
		\Pr^g(x_{1:t}|y_{1:t-1}, u_{1:t-1}) = \prod_{i\in\mathcal{I}} \xi_t^{g^i}(x_{1:t}^i|y_{1:t-1}, u_{1:t-1}),
	\end{equation}
	for all strategy profiles $g$ and all $(y_{1:t-1}, u_{1:t-1})$ admissible under $g$.
\end{lemma}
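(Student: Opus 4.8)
The plan is to prove Lemma \ref{lem:condindepouyang} by induction on $t$. The intuition is that, conditioned on the common information $(Y_{1:t-1}, U_{1:t-1})$, player $i$'s action $U_t^i$ depends only on their own local state trajectory $X_{1:t}^i$ (since $H_t^i = (Y_{1:t-1}, U_{1:t-1}, X_{1:t}^i)$), while the local transition/observation $(X_{t+1}^i, Y_t^i) = f_t^i(X_t^i, U_t, W_t^i)$ is driven by the private noise $W_t^i$; since the primitives $(X_1^i)_{i\in\mathcal{I}} \cup (W_\tau^i)_{i\in\mathcal{I}, \tau\in\mathcal{T}}$ are mutually independent, the product form of the conditional law of the local state trajectories at time $t$ propagates to time $t+1$. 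For the base case $t = 1$ the conditioning is vacuous and $\Pr^g(x_1) = \prod_{i\in\mathcal{I}} \Pr(x_1^i)$ because $(X_1^i)_{i\in\mathcal{I}}$ are mutually independent primitives, so one sets $\xi_1^{g^i}(x_1^i) := \Pr(x_1^i)$.

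For the inductive step, fix $(y_{1:t}, u_{1:t})$ admissible under $g$; then $(y_{1:t-1}, u_{1:t-1})$ is admissible and $\Pr^g(y_t, u_t \mid y_{1:t-1}, u_{1:t-1}) > 0$, so all conditional probabilities below are well defined. By Bayes' rule,
\[
\Pr^g(x_{1:t+1} \mid y_{1:t}, u_{1:t}) = \frac{\Pr^g(x_{1:t+1}, y_t, u_t \mid y_{1:t-1}, u_{1:t-1})}{\Pr^g(y_t, u_t \mid y_{1:t-1}, u_{1:t-1})}.
\]
I would decompose the numerator by the chain rule into the product of (i) $\Pr^g(x_{1:t} \mid y_{1:t-1}, u_{1:t-1})$, which by the inductive hypothesis equals $\prod_i \xi_t^{g^i}(x_{1:t}^i \mid y_{1:t-1}, u_{1:t-1})$; (ii) $\Pr^g(u_t \mid x_{1:t}, y_{1:t-1}, u_{1:t-1}) = \prod_i g_t^i(u_t^i \mid y_{1:t-1}, u_{1:t-1}, x_{1:t}^i)$, because $U_t^i$ is measurable with respect to $H_t^i$ and the behavioral randomizations are independent across players; and (iii) $\Pr^g(x_{t+1}, y_t \mid x_{1:t}, u_t, y_{1:t-1}, u_{1:t-1}) = \prod_i \Pr(x_{t+1}^i, y_t^i \mid x_t^i, u_t)$, which follows from $(X_{t+1}^i, Y_t^i) = f_t^i(X_t^i, U_t, W_t^i)$ together with the fact that $(W_t^i)_{i\in\mathcal{I}}$ are mutually independent and independent of $(X_{1:t}, U_{1:t})$ (the latter being measurable functions of the primitives indexed by times $\le t-1$ and the private randomizations). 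Setting $\psi_t^{g^i}(x_{1:t+1}^i; y_{1:t}, u_{1:t}) := \xi_t^{g^i}(x_{1:t}^i \mid y_{1:t-1}, u_{1:t-1})\, g_t^i(u_t^i \mid y_{1:t-1}, u_{1:t-1}, x_{1:t}^i)\, \Pr(x_{t+1}^i, y_t^i \mid x_t^i, u_t)$, which depends on the strategy profile only through $g^i$, the numerator equals $\prod_{i\in\mathcal{I}} \psi_t^{g^i}(x_{1:t+1}^i; y_{1:t}, u_{1:t})$.

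Since each factor $\psi_t^{g^i}$ depends on $x_{1:t+1}$ only through its own block $x_{1:t+1}^i$, the sum over $x_{1:t+1}$ distributes over the product, so the denominator equals $\prod_{i\in\mathcal{I}} \big( \sum_{\tilde{x}_{1:t+1}^i} \psi_t^{g^i}(\tilde{x}_{1:t+1}^i; y_{1:t}, u_{1:t}) \big)$, and each of these factors is strictly positive (their product is $\Pr^g(y_t, u_t \mid y_{1:t-1}, u_{1:t-1}) > 0$ and each factor is nonnegative). Taking the ratio yields $\Pr^g(x_{1:t+1} \mid y_{1:t}, u_{1:t}) = \prod_{i\in\mathcal{I}} \xi_{t+1}^{g^i}(x_{1:t+1}^i \mid y_{1:t}, u_{1:t})$ with $\xi_{t+1}^{g^i}(x_{1:t+1}^i \mid y_{1:t}, u_{1:t}) := \psi_t^{g^i}(x_{1:t+1}^i; y_{1:t}, u_{1:t}) / \sum_{\tilde{x}_{1:t+1}^i} \psi_t^{g^i}(\tilde{x}_{1:t+1}^i; y_{1:t}, u_{1:t})$, a probability distribution on $\mathcal{X}_{1:t+1}^i$ depending only on $g^i$, which closes the induction. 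The only delicate points — and the main thing to get right — are the bookkeeping showing that each of the three conditional factors genuinely splits across players with the correct strategy dependence (in particular the independence of $W_t^i$ from $(X_{1:t}, U_{1:t})$, which uses that $U_{1:t-1}, Y_{1:t-1}$ are functions of the time-$\le t-1$ primitives and private randomizations), and the admissibility check ensuring the denominators are nonzero; both are routine consequences of the primitive-independence assumption of Example \ref{ex:ouyang}.
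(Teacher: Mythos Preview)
Your proposal is correct and follows essentially the same induction argument as the paper: both decompose $\Pr^g(x_{1:t+1}, y_t, u_t \mid y_{1:t-1}, u_{1:t-1})$ into the product over $i$ of the three factors $\xi_t^{g^i}(x_{1:t}^i \mid \cdot)\, g_t^i(u_t^i \mid \cdot)\, \Pr(x_{t+1}^i, y_t^i \mid x_t^i, u_t)$, then apply Bayes' rule and factor the normalizing sum across $i$. Your write-up is slightly more explicit about admissibility and the independence of $W_t^i$ from the past, but the approach and the definition of $\xi_{t+1}^{g^i}$ coincide with the paper's.
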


\begin{proof}[Proof of Lemma \ref{lem:condindepouyang}]
	Denote $H_t^0=(\bY_{1:t-1}, \bU_{1:t-1})$. We prove the result by induction on time $t$.
	
	\textbf{Induction Base:} The result is true for $t=1$ since $H_1^0=\varnothing$ and the random variables $(X_1^i)_{i\in\mathcal{I}}$ are assumed to be mutually independent. 
	
	\textbf{Induction Step:} Suppose that we have proved Lemma \ref{lem:condindepouyang} for time $t$. We then prove the result for time $t+1$.
	
	We have
	\begin{align}
		\Pr^g(x_{1:t+1}, y_t, u_t|h_t^0) &= \Pr^g(x_{t+1}, y_t|x_{1:t}, u_t, h_t^0)\Pr^g(u_t|x_{1:t}, h_t^0)\Pr^g(x_{1:t}| h_t^0)\\
		&=\prod_{i\in\mathcal{I}} \left(\Pr(x_{t+1}^i, y_t^i|x_t^i, u_t) g_t^i(u_t^i|x_{1:t}^i, h_t^0) \xi_t^{g^i}(x_{1:t}^i| h_t^0) \right)\label{eq:147}\\
		&=:\prod_{i\in\mathcal{I}} \nu_t^{g^i}(x_{1:t+1}^i, y_t, u_t, h_t^0) = \prod_{i\in\mathcal{I}} \nu_t^{g^i}(x_{1:t+1}^i, h_{t+1}^0),
	\end{align}
    {where the induction hypothesis is utilized in \eqref{eq:147}.}
	
	Therefore, {using Bayes rule,}
	\begin{align}
		\Pr^g(x_{1:t+1}| h_{t+1}^0) &= \dfrac{\Pr^g(x_{1:t+1}, y_t, u_t| h_{t}^0)}{\sum_{\tilde{y}_t, \tilde{u}_t} \Pr^g(\tilde{x}_{1:t+1}, y_t, u_t| h_{t+1}^0)}\\
		&=\dfrac{\prod_{i\in\mathcal{I}} \nu_t^{g^i}(x_{1:t+1}^i, h_{t+1}^0)}{\sum_{\tilde{x}_{1:t+1}}\prod_{i\in\mathcal{I}} \nu_t^{g^i}(\tilde{x}_{1:t+1}^i, h_{t+1}^0)}\\
		&=\dfrac{\prod_{i\in\mathcal{I}} \nu_t^{g^i}(x_{1:t+1}^i, h_{t+1}^0)}{\prod_{i\in\mathcal{I}} \sum_{\tilde{x}_{1:t+1}^i} \nu_t^{g^i}(\tilde{x}_{1:t+1}^i, h_{t+1}^0)}\\
		&=:\prod_{i\in\mathcal{I}} \xi_{t+1}^{g^i} (x_{1:t+1}^i| h_{t+1}^0),
	\end{align}
	where
	\begin{equation}
		\xi_{t+1}^{g^i} (x_{1:t+1}^i| h_{t+1}^0) := \dfrac{ \nu_t^{g^i}(x_{1:t+1}^i, h_{t+1}^0)}{ \sum_{\tilde{x}_{1:t+1}^i} \nu_t^{g^i}(\tilde{x}_{1:t+1}^i, h_{t+1}^0)},
	\end{equation}
	establishing the induction step.
\end{proof}

\begin{proof}[Proof of Proposition \ref{thm:ouyangusi}]
	Denote $H_t^0=(\bY_{1:t-1}, \bU_{1:t-1})$. Then $\Comp_t^i=(H_t^0, X_t^i)$. Given Lemma \ref{lem:condindepouyang}, we have
	\begin{align}
		\Pr^{g}(x_{1:t-1}^i|\comp_t^i) &= \dfrac{\Pr^g(x_{1:t}^i|h_t^0)}{\Pr^g(x_{t}^i|h_t^0)}=\dfrac{\xi_t^{g^i}(x_{1:t}^i|h_t^0) }{\sum_{\tilde{x}_{1:t-1}^i} \xi_t^{g^i}((\tilde{x}_{1:t-1}^i, x_{t}^i)|h_t^0)}\\
		&=:\tilde{F}_t^{i, g^i}(x_{1:t-1}^i|\comp_t^i).
	\end{align}
	
	Since $H_t^i=(\Comp_t^i, X_{1:t-1}^i)$, we conclude that
	\begin{equation}\label{eq:selfes}
		\Pr^{g}(\tilde{h}_t^i|\comp_t^i) = F_t^{i, g^i}(\tilde{h}_t^i|\comp_t^i),
	\end{equation}
	for some function $F_t^{i, g^i}$.
	
	Given Lemma \ref{lem:condindepouyang}, we have
	\begin{align}
		\Pr^{g}(\tilde{x}_{1:t}^{-i}|h_t^i) &= \dfrac{\Pr^g(\tilde{x}_{1:t}^{-i}, x_{1:t}^i|h_t^0)}{\Pr^g(x_{1:t}^i|h_t^0)}=\prod_{j\neq i} \xi_t^{g^j}(\tilde{x}_{1:t}^j|h_t^0).
	\end{align}
	
	As a result, we have
	\begin{align}
		\Pr^{g}(\tilde{x}_{1:t}^{-i}, \tilde{\comp}_t^i|h_t^i) &= \bm{1}_{\{\tilde{\comp}_t^i = \comp_t^i\} } \prod_{j\neq i} \xi_t^{g^j}(x_{1:t}^j|h_t^0)\\
		&=:\tilde{\Phi}_t^{i,g^{-i}}(\tilde{x}_{1:t}^{-i}|\comp_t^i).
	\end{align}
	
	Since $(\bX_t, H_t^{-i})$ is {a fixed function of} $(\bX_{1:t}^{-i}, \Comp_t^i)$, we conclude that
	\begin{equation}\label{eq:otheres}
		\Pr^{g}(\tilde{x}_t, \tilde{h}_{t}^{-i}|h_t^i) = \Phi_t^{i, g^{-i}}(\tilde{x}_t, \tilde{h}_{t}^{-i}|\comp_t^i),
	\end{equation}
	for some function $\Phi_t^{i, g^{-i}}$. 
	
	Combining \eqref{eq:selfes} and \eqref{eq:otheres} while using the fact that $\Comp_t^i$ is a function of $H_t^i$, we obtain
	\begin{align}
		\Pr^{g}(\tilde{x}_t, \tilde{h}_{t}|\comp_t^i) &= F_t^{i, g^i}(\tilde{h}_t^i|\comp_t^i)\Phi_t^{i, g^{-i}}(\tilde{x}_t, \tilde{h}_{t}^{-i}|\comp_t^i).
	\end{align}
	
	We conclude that $\Comp^i$ is unilaterally sufficient information.
\end{proof}

\subsection{Proof of Proposition \ref{prop:zerosumobsac}}\label{app:prop:zerosumobsac}
\begin{prop}[Proposition \ref{prop:zerosumobsac}, restated]
	In the game of Example \ref{ex:zerosumobsac} belief-based equilibria do not exist.
\end{prop}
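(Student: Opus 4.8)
The plan is to analyze the game of Example~\ref{ex:zerosumobsac} directly and show that no belief-based equilibrium (a strategy profile based on the common information $H_t^0$ together with a common-information-based belief on the private information $(L_t^A, L_t^B)$ and the state) can be sequentially rational. First I would pin down what a belief-based strategy is in this game: at $t=1$, Alice has $H_1^0=\varnothing$ and $L_1^A=X_1$, so her strategy is a map from $X_1$ to $\Delta\{-1,+1\}$; at $t=2$, Bob has $H_2^0=U_1^A$ and trivial private information, so his belief-based strategy is a map from $U_1^A$ together with the CIB belief on $X_2$ (equivalently on $X_1$) to $\Delta\{\mathrm{U},\mathrm{D}\}$. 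The crucial observation is that since $X_2 = X_1\cdot U_1^A$ and Bob observes $U_1^A$ but \emph{not} $X_1$, Bob's CIB belief on $X_1$ determines his CIB belief on $X_2$, and this belief is pinned down by Bayes' rule from Alice's (purported equilibrium) strategy whenever $U_1^A$ is on the equilibrium path.

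**Key steps.** The main steps, in order, are: (i) Compute Bob's best response at $t=2$ as a function of his belief $q=\Pr(X_2=+1\mid H_2^0)$: his payoff from $\mathrm{U}$ is $-(2q)$ and from $\mathrm{D}$ is $-(1-q)$ (signs because $R_2^B=-R_2^A$), so Bob strictly prefers $\mathrm{D}$ iff $q<1/3$, strictly prefers $\mathrm{U}$ iff $q>1/3$, and is indifferent iff $q=1/3$. (ii) Compute Alice's incentives at $t=1$ as a function of Bob's anticipated (belief-based, hence $U_1^A$-measurable) second-stage action: playing $U_1^A=+1$ yields her $c$ plus the stage-2 reward evaluated at $X_2=X_1$, while $U_1^A=-1$ yields $0$ plus the stage-2 reward at $X_2=-X_1$; since $c\in(0,1/3)$, one checks that $X_1=+1$ makes Alice (weakly or strictly, depending on Bob's response) lean toward $U_1^A=+1$ and $X_1=-1$ toward $U_1^A=-1$, or vice versa, in a way incompatible with Bob maintaining a fixed belief. (iii) The contradiction: go through the possible belief-based strategies of Alice. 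If Alice separates (plays differently for $X_1=+1$ and $X_1=-1$), then each realization of $U_1^A$ reveals $X_1$ hence $X_2$, so $q\in\{0,1\}$, Bob plays a pure best response on each branch, and then re-examining Alice's incentives at $t=1$ against that pure Bob response shows her prescribed play is not a best response. If Alice pools (plays the same mixed action regardless of $X_1$), then $X_1$ is independent of $U_1^A$, so $q=\Pr(X_2=+1\mid U_1^A)=\Pr(X_1=U_1^A)$, which equals the pooling mixing probability — and one shows this cannot simultaneously make Bob willing to play his part \emph{and} make Alice willing to pool. Handle the partial-mixing cases (one type pure, one type mixed) similarly. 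In each case the belief $q$ forced by consistency on the equilibrium path conflicts with sequential rationality of one of the two players.

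**Main obstacle.** The hard part will be the bookkeeping in step (iii): one must enumerate all candidate belief-based strategy profiles — fully separating, fully pooling, and the hybrid cases — and for each derive the Bayes-consistent belief $q$ on the equilibrium path, then verify that the induced best responses are mutually inconsistent. The conceptual point driving every case is that a belief-based Bob must use the \emph{same} mixed action for a given $U_1^A$ regardless of the true $X_1$, whereas the true value of $X_2$ (which Alice's stage-2 reward depends on) does depend on $X_1$; this mismatch, combined with the constraint $c<1/3$ which keeps Alice's stage-1 tradeoff ``active,'' is what kills every candidate. I would also remark (as the paper does in its surrounding discussion) that the \emph{full-information} game does possess a sequential equilibrium — Bob can condition on the state when he knows it — so the nonexistence is genuinely an artifact of the strategy-dependent CIB-belief compression, which establishes the claim that this compression map lacks the properties of the MSI/USI maps. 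The bulk of the write-up is routine case-checking once the belief formulas in steps (i)–(ii) are in hand.
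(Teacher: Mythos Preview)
Your plan is a viable alternative but differs substantially from the paper's argument, and your sketch contains several slips that would derail the case analysis. The paper exploits the zero-sum structure to avoid case-checking entirely: it parameterizes Alice's strategy by $\alpha=(\alpha_1,\alpha_2)$ and Bob's by $\beta=(\beta_1,\beta_2)$, computes Alice's security level $J^*(\alpha)=\min_\beta J(\alpha,\beta)$ as an explicit piecewise-linear function, and checks its extreme points to conclude that the \emph{unique} BNE is $\alpha^*=(\tfrac13,\tfrac13)$, $\beta^*=(\tfrac13+c,\tfrac13-c)$. It then observes that at $\alpha^*$ Bob's posterior on $X_2$ is the same (equal to $\tfrac13$) for both realizations of $U_1^A$; a belief-based strategy for Bob must therefore satisfy $\beta_1=\beta_2$, contradicting $\beta^*$. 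One belief computation and one inequality finish the job, whereas your enumeration would have to exclude a two-parameter family of candidates.

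If you pursue the case analysis, fix the following. (a)~Your step~(i) has Bob's preference backwards: from $-2q$ versus $-(1-q)$ Bob prefers $\mathrm{U}$ when $q<\tfrac13$ and $\mathrm{D}$ when $q>\tfrac13$. (b)~In the pooling case $X_1\perp U_1^A$, so $q=\Pr(X_1=U_1^A)=\tfrac12$, not the pooling probability. (c)~Your taxonomy ``separating/pooling/hybrid'' does not match the operative condition for the belief constraint to bite, which is $b_2^-(+1)=b_2^+(+1)$; this holds iff $\alpha_1=\alpha_2$ or $\alpha_1+\alpha_2=1$, and the actual equilibrium $\alpha^*=(\tfrac13,\tfrac13)$ sits in the first of these but is neither separating nor pooling in your sense. (d)~Your closing remark is off: Bob never observes the state in this game; the ordinary BNE exists because Bob may condition on $U_1^A$ \emph{directly} rather than only through the belief it induces, and it is precisely this distinction that the belief-based compression erases.
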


\begin{proof}
    We first characterize all the Bayes-Nash equilibria of Example \ref{ex:zerosumobsac} in behavioral strategy profiles. Then we will show that none of the BNE corresponds to a belief-based equilibrium. 
	
	Let $\alpha=(\alpha_1, \alpha_2)\in [0, 1]^2$ describe Alice's behavioral strategy: $\alpha_1$ is the probability that Alice plays $U_1^A=-1$ given $X_1^A=-1$; $\alpha_2$ is the probability that Alice plays $U_1^A=+1$ given $X_1^A=+1$. Let $\beta=(\beta_1, \beta_2)\in [0, 1]^2$ denote Bob's behavioral strategy: $\beta_1$ is the probability that Bob plays $U_2^B=\mathrm{U}$ when observing $U_1^A=-1$, $\beta_2$ is the probability that Bob plays $U_2^B=\mathrm{U}$ when observing $U_1^A=+1$.
	
	\textbf{Claim:}
	\begin{equation}
		\alpha^*=\left(\frac{1}{3}, \frac{1}{3}\right),\quad \beta^*=\left(\frac{1}{3}+c, \frac{1}{3}-c\right),
	\end{equation}
	is the unique BNE of Example \ref{ex:zerosumobsac}.
	
	Given the claim, one can conclude that a belief based equilibrium does not exist in this game: {Bob's true belief $b_2$ on $X_2$ at the beginning of stage 2, given his information $H_2^B = U_1^A$, would satisfy}
    \begin{align}
		b_2^-(+1) &= \dfrac{\alpha_1}{\alpha_1+1-\alpha_2},\quad\text{ if }\alpha\neq (0, 1);\\
		b_2^+(+1) &= \dfrac{\alpha_2}{\alpha_2+1-\alpha_1},\quad\text{ if }\alpha\neq (1, 0),
	\end{align}
    {where $b_2^{-}$ represents the belief under $U_1^A = -1$ and $b_2^{+}$ represents the belief under $U_1^A = +1$. If Alice plays $\alpha^*=\left(\frac{1}{3}, \frac{1}{3}\right)$, then $b_2^- = b_2^+$. Under a belief-based equilibrium concept (e.g. \cite{ouyang2016dynamic,vasal2019spbe}), Bob's stage behavioral strategy $\beta$ should yield the same action distribution under the same belief, which means that $\beta_1=\beta_2$. However we have $\beta^*=\left(\frac{1}{3}+c, \frac{1}{3}-c\right)$. Therefore, $(\alpha^*, \beta^*)$, the unique BNE of the game, is not a belief-based equilibrium. We conclude that a belief-based equilibrium does not exist in Example \ref{ex:zerosumobsac}.}\\  
	
	\textbf{Proof of Claim:}
	Denote Alice's total expected payoff to be $J(\alpha, \beta)$. Then 
	\begin{align*}
		&\es J(\alpha, \beta) \\
		&= \frac{1}{2}c(1-\alpha_1+\alpha_2) + \frac{1}{2} \alpha_1 \cdot 2\beta_1 + \frac{1}{2} (1-\alpha_1)(1-\beta_2) +\frac{1}{2} (1-\alpha_2)(1-\beta_1) + \frac{1}{2} \alpha_2\cdot 2\beta_2 \\
		&=\frac{1}{2}c(1-\alpha_1+\alpha_2) + \frac{1}{2} (2 - \alpha_1 - \alpha_2) + \frac{1}{2}(2\alpha_1 + \alpha_2 - 1)\beta_1 + \frac{1}{2}(2\alpha_2 + \alpha_1 - 1)\beta_2.
	\end{align*}
	
	Define $J^*(\alpha) = \min_\beta J(\alpha, \beta)$. Since the game is zero-sum, Alice plays $\alpha$ at some equilibrium if and only if $\alpha$ maximizes $J^*(\alpha)$. We compute
	\begin{align*}
		J^*(\alpha) &= \frac{1}{2}c(1-\alpha_1+\alpha_2) + \frac{1}{2} (2 - \alpha_1 - \alpha_2) + \\
		&+\frac{1}{2}\min\{2\alpha_1 + \alpha_2 - 1, 0 \} + \frac{1}{2}\min\{\alpha_1 + 2\alpha_2 - 1, 0 \}.
	\end{align*}
	
	Since $J^*(\alpha)$ is a continuous piecewise linear function, the set of maximizers can be found by comparing the values at the extreme points of the pieces.	
	We have
	\begin{align*}
		J^*(0, 0) &= \frac{1}{2}c + 1 - \frac{1}{2} - \frac{1}{2} = \frac{1}{2}c;\\
		J^*\left(\frac{1}{2}, 0\right) &= \frac{1}{2}c \cdot \frac{1}{2} + \frac{1}{2} \cdot \frac{3}{2} + \frac{1}{2} \cdot 0 - \frac{1}{2}\cdot \frac{1}{2}= \frac{1}{4}c + \frac{1}{2};\\
		J^*\left(0, \frac{1}{2}\right) &= \frac{1}{2}c\cdot \frac{3}{2} + \frac{1}{2} \cdot \frac{3}{2} - \frac{1}{2} \cdot \frac{1}{2} - \frac{1}{2}\cdot 0= \frac{3}{4}c + \frac{1}{2};\\
		J^*(1, 0) &= \frac{1}{2}c \cdot 0+ \frac{1}{2}\cdot 1 + \frac{1}{2}\cdot 0 + \frac{1}{2}\cdot 0 = \frac{1}{2};\\
		J^*(0, 1) &= \frac{1}{2}c \cdot 2 + \frac{1}{2}\cdot 1 + \frac{1}{2} \cdot 0 + \frac{1}{2}\cdot 0 = c + \frac{1}{2};\\
		J^*\left(\frac{1}{3}, \frac{1}{3}\right) &= \frac{1}{2}c + \frac{1}{2}\cdot \frac{4}{3} + \frac{1}{2} \cdot 0  + \frac{1}{2}\cdot 0 = \frac{1}{2}c + \frac{2}{3};\\
		J^*(1, 1) &= \frac{1}{2}c + \frac{1}{2}\cdot 0 + \frac{1}{2}\cdot 0 + \frac{1}{2}\cdot 0 = \frac{1}{2}c.
	\end{align*}
	
	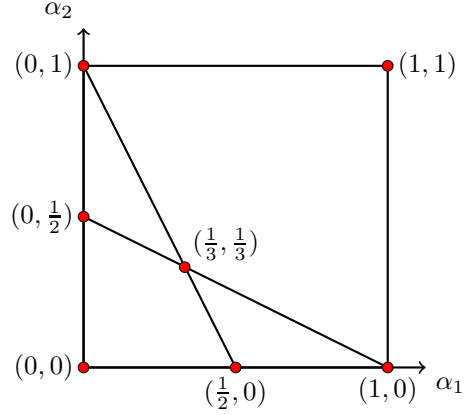
\begin{figure}[!ht]
		\centering
		\begin{tikzpicture}[scale=1.0]
			\draw[thick, ->] (0, 0) -- (4.5, 0) node[anchor=north west] {$\alpha_1$};
			\draw[thick, ->] (0, 0) -- (0, 4.5) node[anchor=south east] {$\alpha_2$};
			
			\draw[thick] (4, 0) -- (4, 4) node[anchor=west] {$(1, 1)$};
			\draw[thick] (4, 4) -- (0, 4) node[anchor=east] {$(0, 1)$};
			\draw[thick] (0, 4) -- (0, 0) node[anchor=east] {$(0, 0)$};
			\draw[thick] (0, 0) -- (4, 0) node[anchor=north] {$(1, 0)$};
			
			\draw[thick] (0, 4) -- (2, 0) node[anchor=north] {$(\frac{1}{2}, 0)$};
			\draw[thick] (4, 0) -- (0, 2) node[anchor=east] {$(0, \frac{1}{2})$};
			\draw[thick] (1.3, 1.3) node[anchor=south west] {$(\frac{1}{3}, \frac{1}{3})$};
			
			\foreach \point in {(0, 0), (4, 0), (0, 4), (2, 0), (0, 2), (1.33, 1.33), (4, 4)}
			\draw [fill=red] \point circle [radius=0.07];
		\end{tikzpicture}
		\caption{The pieces (polygons) for which $J^*(\alpha)$ is linear on. The extreme points of the pieces are labeled.} \label{fig: extremepoints}
	\end{figure}
	
	Since $c < \frac{1}{3}$, we have $(\frac{1}{3}, \frac{1}{3})$ to be the unique maximum among the extreme points. Hence we have $\arg\max_{\alpha} J^*(\alpha) = \{(\frac{1}{3}, \frac{1}{3}) \}$, i.e. Alice always plays $\alpha^*=(\frac{1}{3}, \frac{1}{3})$ in any BNE of the game.
	
	Now, consider Bob's equilibrium strategy. $\beta^*$ is an equilibrium strategy of Bob only if $\alpha^* \in \arg\max_{\alpha} J(\alpha, \beta^*)$.
	
	For each $\beta$, $J(\alpha, \beta)$ is a linear function of $\alpha$ and
	\begin{align*}
		\nabla_{\alpha} J(\alpha, \beta) = \left(-\frac{1}{2}c - \frac{1}{2} + \beta_1 + \frac{1}{2}\beta_2, \frac{1}{2}c - \frac{1}{2} + \frac{1}{2}\beta_1 + \beta_2 \right),\quad \forall \alpha\in (0, 1)^2.&
	\end{align*}
	
	We need $\nabla_{\alpha} J(\alpha, \beta^*)\Big|_{\alpha=\alpha^*} = (0, 0)$. Hence
	\begin{align*}
		-\frac{1}{2}c - \frac{1}{2} + \beta_1^* + \frac{1}{2}\beta_2^* &= 0;\\
		\frac{1}{2}c - \frac{1}{2} + \frac{1}{2}\beta_1^* + \beta_2^* &= 0,
	\end{align*}
	which implies that $\beta^*=(\frac{1}{3}+c, \frac{1}{3}-c)$, proving the claim.
\end{proof}

\end{appendices}

\bibliography{mybib}

\end{document}